\pdfoutput=1

\documentclass[english,11pt]{article}
\input{macros}

\usepackage{etoc}
\pgfplotsset{compat=1.18}
\etocdepthtag.toc{mtchapter}
\etocsettagdepth{mtchapter}{subsection}
\etocsettagdepth{mtappendix}{none}
\setlist[enumerate,1]{label=\arabic*., leftmargin=2em, labelsep=0.5em}
 
\usepackage{authblk}
\setcitestyle{authoryear}
\usepackage{tcolorbox}

\usepackage[flushleft]{threeparttable} 

\begin{document}

\title{\LARGE Optimal and Structure-Adaptive CATE Estimation \\ with Kernel Ridge Regression}

\author[1]{Seok-Jin Kim\thanks{\texttt{seok-jin.kim@columbia.edu}}}
\newcommand\CoAuthorMark{\footnotemark[\arabic{footnote}]} 

\affil[1]{Columbia IEOR}

\date{This version: \today}

\maketitle

\begin{abstract}%
We propose an optimal algorithm for estimating conditional average treatment effects (CATEs) when response functions lie in a reproducing kernel Hilbert space (RKHS). 
We study settings in which the contrast function is structurally simpler than the nuisance functions: (i) it lies in a lower-complexity RKHS with faster eigenvalue decay, (ii) it satisfies a source condition relative to the nuisance kernel, or (iii) it depends on a known low-dimensional covariate representation. 
We develop a unified two-stage kernel ridge regression (KRR) method that attains minimax rates governed by the complexity of the contrast function rather than the nuisance class, in terms of both sample size and overlap.
We also show that a simple model-selection step over candidate contrast spaces and regularization levels yields an oracle inequality, enabling adaptation to unknown CATE regularity.
\end{abstract}

\section{Introduction}

Estimating treatment effects is a central challenge in causal inference, particularly regarding Conditional Average Treatment Effects (CATEs), which are pivotal for personalized decision-making in domains ranging from precision medicine to economics \citep{kunzel2019metalearners,kennedy2020towards}. While the Average Treatment Effect (ATE) offers a population-level summary, it often obscures critical heterogeneity across individuals. Consequently, CATE estimation has emerged as a fundamental pursuit, as it targets the individualized effects necessary for optimal policy selection.

A salient feature in recent literature is the structural asymmetry between the treatment effect of interest and the nuisance components (e.g., baseline response functions). Empirically, nuisance functions are often highly complex—non-smooth or high-dimensional—even when the CATE itself is smooth, sparse, or constant \citep{kennedy2020towards,kennedy2022minimax,kato2023cate}. A key theoretical imperative is to exploit this simpler structure to achieve convergence rates faster than those dictated by nuisance learning, and to determine whether such rates are minimax optimal.

While prior work has established adaptive rates for H\"older smoothness using doubly robust estimators \citep{kennedy2020towards,kennedy2022minimax}, extending these results to general complexity measures remains an open challenge \citep{cinelli2025challenges}. In this work, we address this gap within the reproducing kernel Hilbert space (RKHS) framework. We establish statistical guarantees that adapt to the lower complexity of the contrast function, even when nuisance functions reside in a significantly more complex ambient space.

We consider treatment effect estimation given \(n\) i.i.d.\ observational samples \(\mathcal{D} = \{(x_i, a_i, y_i)\}_{i=1}^n\) with covariates \(x_i \in \mathbb{R}^d\), binary treatments \(a_i \in \{0,1\}\), and responses \(y_i\). Let \(f_0^\star\) and \(f_1^\star\) denote the nuisance response functions for control and treated outcomes, respectively. Our estimand is the CATE function, defined as \(h^\star := f_1^\star - f_0^\star\) (see Section~\ref{section: setup} for formal definitions).

We propose an efficient methodology for estimating \(h^\star\), assuming \(f_0^\star, f_1^\star\) lie in a generic RKHS function space \(\mathcal{F}\), while \(h^\star\) belongs to a strictly ``simpler'' space \(\mathcal{H}\). Crucially, we achieve optimal learning rates governed solely by the complexity of \(\mathcal{H}\)—rendering the complexity of \(\mathcal{F}\) negligible—without imposing structural assumptions on the propensity score. We formalize this ``simpler structure'' via three distinct models:
{
\begin{itemize}
    \item \textbf{Model 1 (Subspace):} \(\mathcal{H} \subset \mathcal{F}\) is an RKHS exhibiting faster spectral decay (e.g., higher smoothness).
    \item \textbf{Model 2 (Source Condition):} \(h^\star\) satisfies a source condition \citep{fischer2020sobolev,jun2019kernel} with respect to the kernel of \(\mathcal{F}\).
    \item \textbf{Model 3 (Low-Dimensional Structure):} \(h^\star(x) = \tilde{h}^\star(\tilde{x})\) depends on a known low-dimensional representation \(\tilde{x}\), where \(\tilde{h}^\star\) lies in an RKHS \(\tilde{\mathcal{H}}\).
\end{itemize}
}
While it is established that RKHS nuisance assumptions yield oracle \(n^{-1/2}\) rates for ATE \citep{mou2023kernel}, analogous guarantees for CATE in this general RKHS setting have remained elusive. We resolve this gap.

\subsection{Contribution}

We propose a unified, imputation-based two-stage kernel ridge regression (KRR) algorithm. Our method is minimax optimal across all three models, adapting to the complexity of \(\mathcal{H}\) for both \(L^2\)-error and pointwise evaluation. Furthermore, we provide a lightweight model-selection result (oracle inequality) that automatically selects among candidate contrast spaces and tunes the regularizer when \(\mathcal{H}\) is unknown, while preserving the target fast-rate guarantees.

\vspace{0.3cm}
\noindent\textbf{Theorem} (Informal) 
\emph{    Under Model 1, 2, or 3, our algorithm achieves learning bounds that are minimax optimal with respect to the complexity of the contrast function \(h^\star\) (matching oracle regression rates within \(\mathcal{H}\)). Crucially, the complexity of the nuisance class \(\mathcal{F}\) does not degrade these rates.}
\vspace{0.3cm}

This result implies that we achieve the fast convergence rates intrinsic to \(\mathcal{H}\)—bypassing the slower rates associated with nuisance components—\textit{without} requiring consistent estimation of the propensity score. Our rates strictly outperform standard Double Machine Learning (DML) rates, which typically depend on the product of nuisance rates; we refer to these as \textbf{fast rates}. To the best of our knowledge, we are the first to extend this adaptivity beyond H\"older spaces to general RKHSs, including Sobolev spaces, Mixed Sobolev spaces, and Neural Tangent Kernels (NTK).

\paragraph{Special Case: Rates for Sobolev Classes.}
Our framework encompasses Sobolev and Mixed Sobolev spaces \citep{kuhn2015approximation,suzuki2018adaptivity}. Notably, Mixed Sobolev spaces relax the standard RKHS condition (\(\beta > d/2\)). The resulting optimal rates are summarized in Table~\ref{tab:rates}.

\begin{table}[h]
    \centering
    \caption{(Squared) \(L^2\)-error and pointwise estimation error derived for Sobolev and mixed Sobolev spaces. We achieve these rates without requiring estimation of the propensity score. See Section~\ref{subsection; three models} for discussion of mixed Sobolev spaces.}
    \label{tab:rates}
    \begin{tabular}{lccc}
        \toprule
        \textbf{Space} & \textbf{Condition} & \textbf{$L^2$-Error} & \textbf{Pointwise Error} \\
        \midrule
        Sobolev ($\mathcal{H} = H^\gamma, \mathcal{F} = H^\beta$) & $\gamma > \beta > d/2$ & $n^{-\frac{2\gamma}{d+2\gamma}}$ & $n^{-\frac{2\gamma-d}{2\gamma}}$ \\
        \addlinespace
        Mixed Sobolev ($\mathcal{H} = H_{\text{mix}}^\gamma, \mathcal{F} = H_{\text{mix}}^\beta$) & $\gamma > \beta > 1/2$ & $n^{-\frac{2\gamma}{1+2\gamma}}$ & $n^{-\frac{2\gamma-1}{2\gamma}}$ \\
        \bottomrule
    \end{tabular}
\end{table}

Moreover, our bounds are optimal in both sample size \(n\) and overlap \(\kappa\) (where the propensity score lies in \([\kappa,1-\kappa]\)). We summarize our contributions below:

\begin{itemize}
    \item We derive the first \textbf{fast rates} in the RKHS framework that adapt to the complexity of \(\mathcal{H}\) without propensity score estimation, generalizing results beyond H\"older spaces to three distinct RKHS structural models.
    \item We establish optimality with respect to both the degree of overlap \(\kappa\) and the sample size \(n\). These bounds coincide with the regression oracle rate in \(\mathcal{H}\) based on an effective sample size of \(n\kappa\).
    \item We provide a simple, unified methodology applicable across all three structural scenarios, alongside an oracle-inequality-based model selection step that selects among candidate \(\mathcal{H}\) and regularization levels to adapt to unknown contrast complexity.
\end{itemize}

\subsection{Related Work}

\textbf{RKHS in Causal Inference.}
\citet{wang2023statistical,mou2023kernel} studied efficient ATE estimation within the RKHS framework. For CATE, \citet{nie2021quasi,foster2023orthogonal} analyzed KRR-based estimators but required consistent propensity score estimation and sufficiently fast learning bounds for the nuisance components. \citet{singh2024kernel} proposed KRR-based methods for CATE, yet it remains unclear whether their rates adapt to the complexity of \(h^\star\). Our work differentiates itself by establishing optimal adaptive rates for CATE under general RKHS settings without relying on propensity score consistency.

\noindent\textbf{Structural Adaptivity.}
\citet{kennedy2020towards,kennedy2022minimax,gao2020minimax} established optimal rates for H\"older classes using doubly robust estimators and U-statistics. Recent works have explored specific low-complexity structures: \citet{kato2023cate} investigated settings where the CATE possesses a sparse linear structure, and \citet{kim2025transfer} analyzed RKHS-valued CATEs where the contrast function has a smaller Hilbert norm than the nuisance. Standard DML/meta-learner approaches focus on orthogonalization and nuisance-rate products in semiparametric settings \citep{oprescu2019orthogonal,kunzel2019metalearners}.

\subsection{Notation}
The constants $c, C, c_1, C_1, \dots$ may vary from line to line. We use the symbol $[n]$ as a shorthand for $\{ 1, 2, \dots, n \}$. For nonnegative sequences $\{ a_n \}_{n=1}^{\infty}$ and $\{ b_n \}_{n=1}^{\infty}$, we write $a_n \lesssim b_n$ or $a_n = \mathcal{O}(b_n)$ if there exists a positive constant $C$ such that $a_n \leq C b_n$ for all $n$. We use $\tilde{\mathcal{O}}$ to denote bounds up to polylogarithmic factors. Additionally, we write $a_n \asymp b_n$ if $a_n \lesssim b_n$ and $b_n \lesssim a_n$. For a bounded linear operator $A$, we use $\| A \|_{\mathrm{op}}$ to denote its operator norm. For any $u$ and $v$ in a Hilbert space $\mathbb{H}$, their inner and outer products are denoted by $\langle u, v \rangle_{\mathbb{H}}$ and $u \otimes v$, respectively. When the context is clear, we denote the inner product as \(u^\top v := \langle u, v \rangle_{\mathbb{H}}\).
\section{Problem Setup}\label{section: setup}

\subsection{Treatment Regime and CATE}
We consider the problem of learning from \(n\) i.i.d.\ observational data points \(\mathcal{D} = \{(x_i, a_i, y_i)\}_{i=1}^n\), where \(x_i \in \mathcal{X}\) denotes the covariates, \(a_i \in \{0,1\}\) is a binary treatment indicator, and \(y_i \in \mathbb{R}\) is the response. 
For generic \((x,a,y)\), we denote the marginal distribution of \(x\) over the covariate space \(\cX\) by \(\mathcal{P}_x\). We assume \(\cX\) is a regular domain in \( \RR^d\).

Let \(\mathcal{F}\) be an RKHS containing the nuisance functions \(f_0^\star, f_1^\star\), with kernel \(K_\cF(\cdot,\cdot)\). Our response model is given by:
\begin{align*}
\text{(Nuisances)} \quad
\mathbb{E}[y \mid x, a=0] = f_0^\star(x),
\quad
\mathbb{E}[y \mid x, a=1] = f_1^\star(x).
\end{align*}
Throughout, we assume that the norms \(\|f_0^\star\|_\mathcal{F}\) and \(\|f_1^\star\|_\mathcal{F}\) are bounded by a universal constant \(M>0\).
Let \(\mathcal{H}\) be another function space. Our estimand of interest is the CATE function, defined as:
\begin{align*}
\text{(CATE)} \quad h^\star(x) := f_1^\star(x) - f_0^\star(x).
\end{align*}
Our primary focus is the regime where \(h^\star \in \mathcal{H}\) and the hypothesis class \(\mathcal{H}\) exhibits \textit{lower complexity} than the ambient space \(\mathcal{F}\).
In this regime, we aim to derive sharp estimation error bounds that are adaptive to the complexity of \(\mathcal{H}\).
Specifically, we seek rates that are independent of the complexity of \(\mathcal{F}\), matching the minimax optimal rates for a regression problem restricted to \(\mathcal{H}\).

We evaluate performance using two metrics: the (squared) $L^2(\mathcal{P}_x)$-error and the (squared) pointwise evaluation error at a fixed query point $x_0 \in \mathcal{X}$:
\begin{align*}
\mathcal{E}_{L^2}(\hat h) := \mathbb{E}_{x\sim\mathcal{P}_x}\bigl|\hat h(x) - h^\star(x)\bigr|^2, \qquad \mathcal{E}_{x_0}(\hat h) := \bigl|\hat h(x_0) - h^\star(x_0)\bigr|^2.
\end{align*}
These are standard measures in nonparametric estimation \citep{wainwright2019high,buhlmann2011statistics}.
\subsection{Three Models of Structural Simplicity}\label{subsection; three models}
We delineate three distinct structural scenarios where the CATE is simpler than the nuisance components. In all cases, \(\mathcal{H}\) represents a strictly less complex hypothesis class than \(\mathcal{F}\).

The first model considers the case where \(\cH\) is an RKHS subspace of \(\cF\) with faster eigenvalue decay.
\begin{tcolorbox}
\begin{model}[Subspace]\label{model; s1}
\(\mathcal{H}\) is an RKHS with \(\mathcal{H} \subset \mathcal{F}\), and \(\mathcal{H}\) exhibits faster spectral decay than \(\mathcal{F}\).
\end{model}
\end{tcolorbox}
Under Model~\ref{model; s1}, \(\cH\) is an RKHS; we denote its associated kernel by \(K_\cH\).
Examples of Model~\ref{model; s1} include:
\begin{itemize}
    \item \textbf{Parametric vs. Nonparametric:} \(\mathcal{F}\) is an infinite-dimensional RKHS (e.g., a Sobolev class), while \(\mathcal{H}\) is a finite-dimensional linear subspace (e.g., polynomials).
    \item \textbf{Differential Sobolev Smoothness:} \(\mathcal{H}\) corresponds to a smoother function class than \(\mathcal{F}\). For instance, \(\mathcal{F} = H^\beta(\mathcal{X})\) and \(\mathcal{H} = H^\gamma(\mathcal{X})\) with \(\gamma \ge \beta > \frac{d}{2}\).
    \item \textbf{Mixed Smoothness:} We can consider mixed Sobolev spaces \citep{kuhn2015approximation,suzuki2018adaptivity} \(\mathcal{F} = H_{\operatorname{mix}}^\beta(\mathcal{X})\) and \(\mathcal{H} = H_{\operatorname{mix}}^\gamma(\mathcal{X})\) with \(\gamma \ge \beta > 1/2\). Notably, mixed Sobolev spaces significantly relax the standard smoothness constraint (\(\beta > d/2\)) required for isotropic Sobolev classes.
\end{itemize}

\paragraph{Relaxation via Mixed Smoothness.}
A standard limitation of isotropic Sobolev spaces \(H^\beta(\cX)\) is the requirement \(\beta > d/2\) to guarantee the RKHS property.
\textbf{Mixed Sobolev spaces} \(H_{\mathrm{mix}}^\beta(\cX)\) offer a powerful alternative by relaxing this condition to \(\beta > 1/2\), independent of the dimension \(d\).
For example, \(H^1_{\mathrm{mix}}\) consists of functions whose mixed first-order derivatives (e.g., the full cross-derivative \(\partial_{x_1} \partial_{x_2} \cdots \partial_{x_d} f\)) are square-integrable, but it does not require higher-order derivatives in any single coordinate, such as \(\partial_{x_1}^2 f\).
By adopting \(H_{\mathrm{mix}}^\beta\), we can model high-dimensional CATE functions that are smooth in coordinate-wise interactions without the stringent isotropic smoothness condition.

Next, we introduce our second model based on spectral source conditions.

\begin{tcolorbox}
\begin{model}[Source Condition]\label{model; s2}
We assume \(h^\star\) satisfies a source condition with parameter \(0< \nu <1\) with respect to the kernel of \(\mathcal{F}\).
\end{model}
\end{tcolorbox}

Formally, let $T_{\mathcal{F}}: L^2(\mathcal{P}_x) \to L^2(\mathcal{P}_x)$ be the integral operator associated with the kernel of $\mathcal{F}$. The source condition assumption is defined as \(h^\star \in \operatorname{Range}(T_{\mathcal{F}}^{\frac{1+\nu}{2}})\).
Source conditions, widely utilized in RKHS theory \citep{singh2024kernel,jun2019kernel,chen2024high}, characterize functions whose spectral coefficients decay rapidly relative to the eigenvalues of the kernel integral operator.
This assumption implies that $h^\star$ lies in a fractional power space $[\mathcal{F}]^\nu$, which is strictly smaller and "smoother" than $\mathcal{F}$.
This concept is also central to the analysis of overparameterized neural networks via the NTK \citep{zhang2025optimal,ghorbani2021linearized}.

Finally, we present a model where \(h^\star\) depends on a lower-dimensional set of covariates.

\begin{tcolorbox}
\begin{model}[Low-Dimensional Structure]\label{model; s3}
We assume there exists a known low-dimensional transformation \(\tilde x_i\) of \(x_i\) and a function \(\tilde h^\star \in \tilde{\mathcal{H}}\), where \(\tilde{\mathcal{H}}\) is an RKHS, such that:
\[
h^\star(x_i)=\tilde h^\star(\tilde x_i).
\]
\end{model}
\end{tcolorbox}

Common examples include variable selection (where \(\tilde x_i\) is a subset of coordinates) and linear projections (where \(\tilde x_i = \mathbf{P} x_i\) for a projection \(\mathbf{P}\)).
As an illustration, one can consider \(\cF = H^\beta(\cX)\) and \(\cH = H^\beta(\tilde{\cX})\), where \(\tilde{\cX}\) is the space of \(\tilde{x}_i\).
However, because \(\dim(\tilde{x}) < \dim(x)\), the statistical complexity of estimating \(\tilde h^\star\) is strictly lower. Our goal is to achieve rates adapting to this lower intrinsic dimension.
Under Model~\ref{model; s3}, \(\tilde{\cH}\) is an RKHS; we denote its associated kernel by \(K_{\tilde \cH}\). We set \(\dim(\tilde x_i)=\tilde{d}<d\).

\subsection{Standard Assumptions}\label{section: assumptions}
We impose the following standard assumptions common to causal inference and nonparametric regression \citep{kunzel2019metalearners,kennedy2020towards,wainwright2019high}.

\begin{assumption}[Consistency and Unconfoundedness]\label{ass:id}
Let \(y^0\) and \(y^1\) be the potential outcomes. We observe \(y = y^a\) (consistency), and \((y^0,y^1) \perp\!\!\!\perp a \mid x\) (unconfoundedness).
\end{assumption}

\begin{assumption}[Overlap]\label{assumption; positivity}
The propensity score \(\pi(x):= \mathbb{P}[a_i=1\mid x_i=x]\) satisfies the overlap condition: there exists \(\kappa>0\) such that
\[
\kappa < \pi(x) < 1 - \kappa, \quad \forall x \in \mathcal{X}.
\]
\end{assumption}
The parameter \(\kappa\) quantifies the degree of overlap. We explicitly track the dependence of our bounds on \(\kappa\), aiming for optimality in terms of the \emph{effective sample size} \(n\kappa\).

\begin{assumption}[Sub-Gaussian Noise]\label{ass:noise}
Conditioned on \(x_i\) and \(a_i\), the noise \(\varepsilon_i\) is \(\sigma\)-sub-Gaussian. We assume \(\sigma\) is bounded by an absolute constant.
\end{assumption}

\begin{assumption}[Boundedness of Kernels]\label{assumption; boundedness}
There exists a universal constant \(\xi>0\) such that
\[
\sup_{x\in\mathcal{X}} K_{\mathcal{F}}(x,x) \le \xi.
\]
\end{assumption}

Assumptions~\labelcref{ass:id,assumption; positivity,ass:noise} are standard in the causal inference literature \citep{kunzel2019metalearners,curth2021nonparametric,kennedy2022minimax,kennedy2020towards}.
Assumption~\ref{assumption; boundedness} is common in kernel methods analysis \citep{wainwright2019high,singh2024kernel,wang2023pseudo}.
One key difference from prior work is how overlap is handled: most analyses assume strong overlap and treat \(\kappa\) in Assumption~\ref{assumption; positivity} as a fixed constant. We also allow \(\kappa\) to shrink, treat it as a key parameter, and derive non-asymptotic bounds accordingly. In other words, we study the \textit{weak-overlap} regime.

\subsection{Fundamental Limits of CATE Estimation}

To contextualize our results, we characterize the fundamental limits of estimating $h^\star$. We establish these limits by benchmarking against minimax lower bounds for standard nonparametric regression over $\mathcal{H}$. We first define these regression baselines.

\begin{definition}\label{definition:lower-bound}
Let $\texttt{LB-L2}(N; \mathcal{H})$ denote the minimax squared $L^2$-error lower bound for regression over $\mathcal{H}$ with $N$ samples.
Similarly, let $\texttt{LB-PE}(N; \mathcal{H})$ denote the corresponding lower bound for squared pointwise evaluation error.
\end{definition}

For instance, if $\mathcal{H} = H^\gamma(\mathcal{X})$ (Sobolev space), the rates are given by $\texttt{LB-L2}(N; \mathcal{H}) \asymp N^{-2\gamma/(d+2\gamma)}$ and $\texttt{LB-PE}(N; \mathcal{H}) \asymp N^{-(2\gamma-d)/2\gamma}$ \citep{tuo2024asymptotic,wainwright2019high}.
Building on these definitions, we present the lower bounds for CATE estimation.

\begin{lemma}[Informal: Lower Bounds]\label{lemma: lower bound}
The minimax squared $L^2$-error for estimating $h^\star$ is lower-bounded by $\texttt{LB-L2}(n\kappa; \mathcal{H})$, up to constant factors.
Similarly, the minimax squared pointwise evaluation error at $x_0$ is lower-bounded by $\texttt{LB-PE}(n\kappa; \mathcal{H})$, up to constant factors.
\end{lemma}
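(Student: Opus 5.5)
The lower bound follows by reducing CATE estimation to nonparametric regression over $\mathcal{H}$ on an effective sample of size $n\kappa$. We restrict the supremum to a subfamily of problems. Fix the propensity score to be constant, $\pi(x)\equiv 2\kappa$; this satisfies Assumption~\ref{assumption; positivity} after adjusting constants, say with overlap parameter $\kappa$ replaced by $\kappa'\asymp\kappa$. Fix $f_0^\star\equiv 0$ and let $f_1^\star = h$ range over a hard subclass of $\mathcal{H}$ that is also contained in the $M$-ball of $\mathcal{F}$. Under Model~\ref{model; s1} this holds automatically, since $\mathcal{H}\subset\mathcal{F}$ and the norms are comparable. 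Under Model~\ref{model; s2} the source-condition ball lies in $\mathcal{F}$. Under Model~\ref{model; s3} we use functions of $\tilde x$ that also belong to $\mathcal{F}$. We take Gaussian noise with variance $\sigma^2$, so Assumption~\ref{ass:noise} holds. In this subfamily $h^\star = h$, and any lower bound over the subfamily is a lower bound for the full minimax risk.

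The key step is to show that the data from this subfamily are essentially no more informative than a regression sample of size about $n\kappa$. Since $\pi$ and $f_0^\star$ are known and identical across the subfamily, the control observations $(x_i,0,y_i)$ carry no information about $h$. They are ancillary and can be simulated by the statistician. Conditional on the treatment indicators, the treated observations form $N_1=\sum_i a_i$ i.i.d.\ regression samples $y=h(x)+\varepsilon$ with $x\sim\mathcal{P}_x$, and $N_1\sim\mathrm{Bin}(n,2\kappa)$. Any CATE estimator therefore induces a (randomized) regression estimator from $N_1$ samples. Conditioning on $N_1$ and averaging gives
\[
\inf_{\hat h}\sup_h \mathbb{E}\,\mathcal{E}(\hat h)\;\ge\; \mathbb{E}_{N_1}\bigl[\texttt{LB}(N_1;\mathcal{H})\bigr]\;\ge\; \mathbb{P}(N_1\le 4n\kappa)\,\texttt{LB}(4n\kappa;\mathcal{H}),
\]
using that the minimax risk is nonincreasing in sample size. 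Markov's inequality gives $\mathbb{P}(N_1\le 4n\kappa)\ge 1/2$. The polynomial form of the regression rates (as in the Sobolev examples) then gives $\texttt{LB}(4n\kappa)\asymp\texttt{LB}(n\kappa)$. The same argument applies verbatim to the $L^2$ error and to the pointwise error at $x_0$, yielding $\texttt{LB-L2}(n\kappa;\mathcal{H})$ and $\texttt{LB-PE}(n\kappa;\mathcal{H})$.

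The main obstacle is that the minimax regression constructions must be compatible with the nuisance constraints: the hard functions have to lie in $\mathcal{H}$ and also have $\mathcal{F}$-norm at most $M$. There is a subtlety in conditioning on a random sample size inside a sup-inf, but it is resolved by fixing the design of the subfamily before taking the supremum. To handle the norm constraint, I would use the standard Fano/Le Cam packing over $\mathcal{H}$ directly: bump functions at scale $\delta$ for Sobolev and mixed Sobolev classes, or eigenfunction-based packings for general kernels. I would then check that their $\mathcal{F}$-norm is bounded by their $\mathcal{H}$-norm times a constant. This follows from the continuous embedding $\mathcal{H}\hookrightarrow\mathcal{F}$ in Model~\ref{model; s1}, from the definition of the range of $T_\mathcal{F}^{(1+\nu)/2}$ in Model~\ref{model; s2}, and in Model~\ref{model; s3} from lifting functions of $\tilde x$ to $x$, where we assume $\tilde{\mathcal{H}}\circ\tilde x\subset\mathcal{F}$ boundedly. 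If any of these embeddings is only up to a constant, we rescale the packing radius, which affects only constants.
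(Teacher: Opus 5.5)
Your proposal is correct and uses the same reduction as the paper: fix $f_0^\star$ and a constant propensity score so that only the treated subsample, of size about $n\kappa$, is informative, then invoke the regression minimax lower bound over $\mathcal{H}$. Your version is more careful than the paper's short argument (strict overlap via $\pi\equiv 2\kappa$, Markov control of $N_1$, monotonicity in sample size, and checking that the hard functions lie in the $\mathcal{F}$-ball). Two points need tightening: the displayed inequality swaps $\sup_h$ and $\mathbb{E}_{N_1}$ in the wrong direction, and ``fixing the design'' does not by itself repair this; the fix is to lower-bound the supremum by the Bayes risk under the uniform prior on a packing built at sample size $4n\kappa$, which does not depend on $N_1$.
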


\begin{proof}
Consider a simplified oracle setting where the baseline function $f_0^\star$ is known exactly and the propensity score is constant, $\pi(x) \equiv \kappa$. In this scenario, estimating $h^\star$ is equivalent to estimating $f_1^\star$ using only the treated subpopulation, which has an expected sample size of $n\kappa$. Consequently, the CATE estimation problem reduces to a standard regression problem over $\mathcal{H}$. Any algorithm achieving a rate faster than the regression lower bound with sample size $n\kappa$ would violate the minimax optimality of regression in $\mathcal{H}$.
\end{proof}
Consequently, for Sobolev spaces \(\cH = H^\gamma(\cX)\), the $L^2$ lower bound is \((n \kappa)^{-\frac{2\gamma}{d+2\gamma}}\), and the pointwise evaluation lower bound is \((n\kappa)^{-\frac{2\gamma-d}{2\gamma}}\).

\section{Methodology: A Unified Approach}

In this section, we present a unified meta-algorithm designed to achieve minimax optimality under Models~\labelcref{model; s1,model; s2,model; s3}. Our approach explicitly decouples the estimation of nuisance parameters from the estimation of the target estimand (the CATE), allowing the final estimator to adapt solely to the intrinsic complexity of the contrast function \(h^\star\).

\subsection{Algorithm Structure}
We propose a two-stage procedure involving undersmoothed nuisance estimation followed by a switch-imputation based regression. The overall procedure is summarized in Algorithm~\ref{algorithm; main}.

\paragraph{1. Nuisance Estimation via Undersmoothed KRR.}
First, we estimate the conditional mean functions \(f_0^\star\) and \(f_1^\star\) using the observational data \(\mathcal{D}\). To ensure that the bias from nuisance estimation does not dominate the CATE estimation error, we employ undersmoothed KRR:
\begin{equation}\label{eq:nuisance-estimation}
\begin{aligned}
\hat f_a &:= \mathop{\arg\min}_{f \in \mathcal{F}} \Bigl\{ \frac{1}{n}\sum_{i=1}^{n} (y_i - f(x_i))^2 \mathbf{1}(a_i=a) + \bar{\lambda}\|f\|_{\mathcal{F}}^2 \Bigr\}, \quad a \in \{0,1\}.
\end{aligned}
\end{equation}
Here, \(\bar{\lambda}\) denotes the regularization parameter for the nuisance stage. Crucially, \(\bar{\lambda}\) must be chosen sufficiently small to mitigate regularization bias. While this may inflate the variance of the nuisance estimates, this variance is effectively controlled in the second-stage regression.
For generic bounded kernels, the scaling \(\bar{\lambda} \asymp \log(n)/n\) typically suffices. When \(K_{\mathcal{F}}\) is a Sobolev kernel (implying \(\mathcal{F} = H^\beta(\mathcal{X})\)), any choice within the range \( \log n \cdot n^{-\frac{2\beta}{d}} \lesssim \bar \lambda \lesssim \log n / n\) yields the desired guarantees; we provide a rigorous justification in the Appendix.

\paragraph{2. Generating Pseudo-outcomes via Switch-Imputation.}
We construct pseudo-outcomes \(\{m_i\}_{i=1}^n\) that serve as approximately unbiased proxies for the unobserved potential outcomes:
\begin{align} \label{equation: switch imputation}
m_{i} :=
\begin{cases}
y_{i} - \hat{f}_0(x_{i}), & \text{if } a_{i} = 1,\\[2pt]
\hat{f}_1(x_{i}) - y_{i}, & \text{if } a_{i} = 0.
\end{cases}
\end{align}
This construction isolates the treatment effect \(h^\star\) by centering the response with the estimated baseline function.

\paragraph{3. Regression Oracle for the Final Estimator.}
Finally, we apply a regression oracle \(\mathscr{O}\) to the dataset of pseudo-pairs \(\{(x_i,m_i)\}_{i=1}^n\). The oracle returns the estimator:
\[
\hat{h} = \mathscr{O}(\{(x_i,m_i)\}_{i=1}^n).
\]
The explicit form of \(\mathscr{O}\) depends on the structural assumptions imposed on \(h^\star\), as detailed in Section~\ref{sec:oracle-instantiation}.

\begin{algorithm}[h]
\caption{Optimal CATE Learner with KRR} \label{algorithm; main}
\begin{algorithmic}[1]
\STATE \textbf{Input:} Dataset \(\mathcal{D}=\{(x_i,a_i,y_i)\}_{i=1}^n\), regression oracle \(\mathscr{O}\), nuisance regularizer \(\bar{\lambda} \asymp \frac{\log n}{n}\), main regularizer $\lambda$.
\STATE \textbf{Step 1: Nuisance Estimation}
\STATE \quad Compute \(\hat f_0, \hat f_1\) via \cref{eq:nuisance-estimation} using \(\mathcal{D}\) and nuisance regularizer \(\bar{\lambda}\).
\STATE \textbf{Step 2: Pseudo-outcome Generation}
\STATE \quad For \(i=1,\dots,n\), compute $m_i$ via \cref{equation: switch imputation} using \(\hat f_0, \hat f_1\).
\STATE \textbf{Step 3: Target Estimation}
\STATE \quad \textbf{Return} \(\hat{h} = \mathscr{O}(\{(x_i,m_i)\}_{i=1}^n)\) with main regularizer \(\lambda\).
\end{algorithmic}
\end{algorithm}

\subsection{Regression Oracle \texorpdfstring{\(\mathscr{O}\)}{O}} \label{sec:oracle-instantiation}
We now instantiate the regression oracle \(\mathscr{O}\) for the three structural models introduced in Section~\ref{section: setup}. In all scenarios, the oracle performs a variant of KRR on the pseudo-outcomes, tailored to the specific complexity of \(h^\star\).

\paragraph{Oracle for Model~\ref{model; s1} (Subspace).}
Under Model~\ref{model; s1}, where \(h^\star\) resides in a strictly simpler RKHS \(\mathcal{H} \subset \mathcal{F}\), the oracle performs KRR directly within \(\mathcal{H}\):
\[
\hat h := \mathop{\arg\min}_{h \in \mathcal{H}}
\Bigl\{ \frac{1}{n}\sum_{i=1}^n (m_i - h(x_i))^2 + \lambda\|h\|_{\mathcal{H}}^2 \Bigr\}.
\]
Here, the regularization parameter \(\lambda\) is tuned to the spectral decay of \(\mathcal{H}\), independent of the ambient space \(\mathcal{F}\).

\paragraph{Oracle for Model~\ref{model; s2} (Source Condition).}
Under Model~\ref{model; s2}, where \(h^\star \in \mathcal{F}\) satisfies a source condition, the oracle performs KRR in the ambient space \(\mathcal{F}\):
\[
\hat h := \mathop{\arg\min}_{h \in \mathcal{F}}
\Bigl\{ \frac{1}{n}\sum_{i=1}^n (m_i - h(x_i))^2 + \lambda\|h\|_{\mathcal{F}}^2 \Bigr\}.
\]
Although the optimization is over \(\mathcal{F}\), the choice of \(\lambda\) exploits the source condition to achieve faster convergence rates.

\paragraph{Oracle for Model~\ref{model; s3} (Low-Dimensional Structure).}
Under Model~\ref{model; s3}, where \(h^\star\) depends only on a low-dimensional feature projection \(\tilde x_i\), the oracle performs KRR in the corresponding space \(\tilde{\mathcal{H}}\):
\[
\hat h := \mathop{\arg\min}_{\tilde h \in \tilde{\mathcal{H}}}
\Bigl\{ \frac{1}{n}\sum_{i=1}^n (m_i - \tilde h(\tilde x_i))^2 + \lambda\|\tilde h\|_{\tilde{\mathcal{H}}}^2 \Bigr\}.
\]
This effectively reduces the estimation problem to the intrinsic dimension of \(h^\star\).

In all three cases, we refer to \(\lambda\) as the \textbf{main regularizer}. In practice, since the optimal \(\mathcal{H}\) and \(\lambda\) are unknown, one must select among candidate hypothesis classes \(\{\mathcal{H}_k\}\); we address this via the model selection procedure below.

\subsection{Model Selection Procedure}
In practical applications, the optimal hypothesis space \(\mathcal{H}\) and associated hyperparameters are rarely known a priori. While the nuisance class \(\mathcal{F}\) can be validated via standard cross-validation on observed outcomes, selecting the best model for the unobserved CATE \(h^\star\) is nontrivial due to the absence of counterfactuals. To address this, we propose a dedicated model selection procedure that allows the learner to adaptively select the best estimator from a collection of candidates.

Our procedure employs three disjoint data splits to ensure independence between candidate generation, validation proxy construction, and the final selection step.

\paragraph{1. Candidate Generation (\(\mathcal{D}_1\)).}
We partition the dataset \(\mathcal{D}\) into three disjoint subsets \(\mathcal{D}_1, \mathcal{D}_2, \mathcal{D}_3\) with sizes \(n_1, n_2, n_3\) such that \(n_1+n_2+n_3=n\) (e.g., \(n_k \asymp n/3\)).
Using \(\mathcal{D}_1\), we run Algorithm~\ref{algorithm; main} with various hyperparameter configurations (e.g., different hypothesis classes \(\{\mathcal{H}_1, \dots, \mathcal{H}_K\}\) and regularization parameters) to generate a set of candidate estimators \(\mathcal{M} = \{\hat{h}_1, \dots, \hat{h}_L\}\).
We denote the configuration library by \(\Pi\), where each element is a pair \((\cH,\lambda)\) consisting of a hypothesis class and a regularization level.

\paragraph{2. Truncation.}
To control the variance during the selection phase, we enforce a boundedness constraint.
Let \(B>0\) be the truncation level. For each \(\hat{h}_j \in \mathcal{M}\), we define its truncated version \(\bar{h}_j(x) := \min(\max(\hat{h}_j(x), -B), B)\). Let \(\bar{\mathcal{M}} = \{\bar{h}_1, \dots, \bar{h}_L\}\) denote the set of truncated candidates.

\paragraph{3. Proxy Construction (\(\mathcal{D}_2\)).}
Using \(\mathcal{D}_2\), we construct a high-quality proxy for the unobserved CATE. We estimate the nuisance functions \(\tilde{f}_0, \tilde{f}_1\) using KRR with undersmoothed regularization \(\tilde{\lambda} \asymp \log(n)/n\):
\begin{align}\label{equation: tilde f}
\tilde{f}_a &:= \mathop{\arg\min}_{f \in \mathcal{F}} \Bigl\{ \frac{1}{n_2}\sum_{i =1}^{n_2} (y_{2i} - f(x_{2i}))^2 \mathbf{1}(a_{2i}=a) + \tilde{\lambda}\|f\|_{\mathcal{F}}^2 \Bigr\}, \quad a \in \{0,1\}.
\end{align}
Note that these estimates rely solely on \(\mathcal{D}_2\) and are thus independent of the candidates generated from \(\mathcal{D}_1\).

\paragraph{4. Empirical Risk Minimization (\(\mathcal{D}_3\)).}
Finally, using \(\mathcal{D}_3\), we evaluate the candidates against the proxy constructed from \(\mathcal{D}_2\). For each \((x_{3i}, a_{3i}, y_{3i}) \in \mathcal{D}_3\), define the proxy labels:
\begin{align}\label{equation: tilde m}
\tilde m_{3i} :=
\begin{cases}
y_{3i} - \tilde{f}_0(x_{3i}), & \text{if } a_{3i} = 1,\\[2pt]
\tilde{f}_1(x_{3i}) - y_{3i}, & \text{if } a_{3i} = 0.
\end{cases}
\end{align}
We select the estimator that minimizes the empirical squared error with respect to these proxies:
\[
\hat{h}_{\operatorname{ms}} := \mathop{\arg\min}_{\bar{h} \in \bar{\mathcal{M}}} \frac{1}{n_3} \sum_{i =1}^{n_3} (\bar{h}(x_{3i}) - \tilde{m}_{3i})^2.
\]

\begin{algorithm}[h]
\caption{CATE Model Selection}
\label{algorithm; model selection}
\begin{algorithmic}[1]
\STATE \textbf{Input:} Data \(\mathcal{D}\), set of candidate configurations \(\Pi\).
\STATE Partition \(\mathcal{D}\) into \(\mathcal{D}_1, \mathcal{D}_2, \mathcal{D}_3\).
\STATE \textbf{Step 1:} Train candidates \(\mathcal{M} = \{\hat{h}_1, \dots, \hat{h}_L\}\) on \(\mathcal{D}_1\) by running Algorithm~\ref{algorithm; main} over multiple configurations \((\cH, \lambda) \in \Pi\).
\STATE \textbf{Step 2:} Truncate candidates to \([-B, B]\) to obtain \(\bar{\mathcal{M}} = \{\bar{h}_1, \dots \bar{h}_L\}\).
\STATE \textbf{Step 3:} Estimate nuisance functions \(\tilde{f}_0, \tilde{f}_1\) on \(\mathcal{D}_2\) via \cref{equation: tilde f}.
\STATE \textbf{Step 4:} For \((x_{3i},a_{3i},y_{3i}) \in \mathcal{D}_3\), compute proxies \(\tilde{m}_{3i}\) using \(\tilde{f}_0,\tilde{f}_1\) by \cref{equation: tilde m}.
\STATE \textbf{Return} \(\bar{h}_{\operatorname{ms}} = \arg\min_{\bar{h} \in \bar{\mathcal{M}}} \frac{1}{n_3} \sum_{i =1}^{n_3} (\bar{h}(x_{3i}) - \tilde{m}_{3i})^2\).
\end{algorithmic}
\end{algorithm}

\section{General Theory for the Analysis}\label{section: general theory}
In this section, we present a unified theory covering Models~\labelcref{model; s1,model; s2,model; s3}.
Across all three models, the second stage performs KRR on pseudo-outcomes \(m_i\) obtained from first-stage nuisance estimation (regression oracle \(\Ocr\)); the only difference is the underlying function space.
To capture all cases simultaneously, we analyze a generic second-stage KRR problem in an abstract Hilbert space \(\XX\) and derive a high-probability error bound.

We begin by formalizing the RKHS representation and then present the generic setup and main theorem.

\subsection{RKHS Formulation and Notation}
\paragraph{RKHS Formulation.}
By standard theory \citep{aronszajn1950theory}, there exists a Hilbert space \(\mathbb{F}\), along with a feature map \(\phi: \mathcal{Z} \to \mathbb{F}\), satisfying the reproducing property: \(\langle \phi(z), \phi(z')\rangle_{\mathbb{F}} = K_{\mathcal{F}}(z,z')\).
Accordingly, the hypothesis space is defined as:
\begin{align*}
    \mathcal{F} &= \{ f_\theta(\cdot) = \langle \phi(\cdot), \theta \rangle_{\mathbb{F}} \mid \theta \in \mathbb{F} \}.
\end{align*}
We denote the Hilbertian element corresponding to \(f_\theta\) by \(\theta\).
In particular, \(\theta_0^\star\) and \(\theta_1^\star\) denote the Hilbertian elements corresponding to \(f_0^\star\) and \(f_1^\star\), respectively.

For Model~\labelcref{model; s1}, the subspace \(\mathcal{H}\) is also an RKHS. 
In this case, there exists a Hilbert space \(\mathbb{H}\), along with a feature map \(\psi: \mathcal{X} \to \mathbb{H}\), satisfying the reproducing property: \(\langle \psi(x), \psi(x')\rangle_{\mathbb{H}} = K_{\mathcal{H}}(x,x')\).
Accordingly, the hypothesis space is defined as:
\begin{align*}
    \mathcal{H} &= \{ h_\eta(\cdot) = \langle \psi(\cdot), \eta \rangle_{\mathbb{H}} \mid \eta \in \mathbb{H} \}.
\end{align*}
Similarly, we denote the Hilbertian element corresponding to \(h_\eta\) by \(\eta\).
We write \(\eta^\star\) for the Hilbertian element corresponding to \(h^\star\).

Under Model~\ref{model; s2}, we use the same notation \(\eta^\star\) for the Hilbertian element corresponding to \(h^\star\) in \(\mathbb{F}\).

For Model~\labelcref{model; s3}, the space \(\tilde{\mathcal{H}}\) is also an RKHS. 
In this case, there exists a Hilbert space \(\tilde{\mathbb{H}}\), along with a feature map \(\tilde{\psi}: \mathcal{X} \to \tilde{\mathbb{H}}\), satisfying the reproducing property: \(\langle \tilde{\psi}(x), \tilde{\psi}(x')\rangle_{\tilde{\mathbb{H}}} = K_{\tilde{\mathcal{H}}}(x,x')\).
Accordingly, the hypothesis space is defined as:
\begin{align*}
    \tilde{\mathcal{H}} &= \{ \tilde h_\eta(\cdot) = \langle \tilde{\psi}(\cdot), \eta \rangle_{\tilde{\mathbb{H}}} \mid \eta \in \tilde{\mathbb{H}} \}.
\end{align*}
We denote the Hilbertian element corresponding to \(\tilde{h}_\eta\) by \(\eta\).
We write \(\eta^\star\) for the Hilbertian element corresponding to \( h^\star\) in \(\tilde{\mathbb{H}}\).

\paragraph{Design Operators.}
Consider \(N\) generic elements \(\{v_1, \dots, v_N\}\) for some \(N>1\) in a Hilbert space \(\mathbb{X}\).  
We define the \emph{design operator} of \(\{v_1, \dots, v_N \}\) as \(\mathbf{V}: \mathbb{X} \to \mathbb{R}^N\), which, for all \(\theta \in \mathbb{X}\), satisfies:
\begin{align*}
    \mathbf{V} \theta = (\langle v_1, \theta \rangle, \langle v_2, \theta \rangle, \dots, \langle v_N, \theta \rangle)^\top.
\end{align*}
Similarly, we define the adjoint of \(\mathbf{V}\), denoted as \(\mathbf{V}^\top: \mathbb{R}^N \to \mathbb{X}\), as the operator such that for all \(\mathbf{a} =(a_1, \dots, a_N) \in \mathbb{R}^N\),
\begin{align*}
    \mathbf{V}^\top \mathbf{a} = \sum_{i=1}^N a_i v_i \in \mathbb{X}.
\end{align*}

\subsection{Setup}
We consider a generic setup for performing KRR on the pseudo-outcomes $m_i$.
We fix a Hilbert space $\XX$ and a design operator $\Wb$ built from i.i.d. features $\{w_i\}_{i=1}^n$ for \(w_i \in \XX\).
We provide a generalized framework for the regression oracle \(\Ocr\) in Algorithm~\ref{algorithm; main}.
Across Models~1--3, \(\Ocr\) performs KRR on the pseudo-outcome vector \(\Mb := (m_1, \dots, m_n)\).
To unify the analysis, we therefore study KRR in a generic Hilbert space \(\XX\).

\paragraph{Regression Oracle \(\Ocr\) as KRR on \(\XX\).}
We abstract the second-stage regression in Algorithm~\ref{algorithm; main} as kernel ridge regression in a generic Hilbert space $\XX$, which can be \(\HH\), \(\mathbb{F}\) or \(\tilde{\HH}\).
We write the population second moment as
\begin{align*}
    \bSigma := \EE[w_i \otimes w_i].
\end{align*}
This setup specializes to our models as follows:
\begin{itemize}
    \item Model~\ref{model; s1}: $\XX = \HH$, $w_i = \psi(x_i)$ (feature map of $K_\cH$).
    \item Model~\ref{model; s2}: $\XX= \mathbb{F}$, $w_i = \phi(x_i)$ (feature map of $K_\cF$).
    \item Model~\ref{model; s3}: $\XX = \tilde{\HH}$, $w_i = \tilde{\psi}(\tilde{x}_i)$ (feature map of $K_{\tilde{\cH}}$).
\end{itemize}
In all cases, the target function is represented by an element $\eta^\star \in \XX$ such that
\begin{align*}
    \langle w_i, \eta^\star \rangle_{\XX} = h^\star(x_i).
\end{align*}

Recall the pseudo-outcomes $m_i$ from \cref{equation: switch imputation} and their vector form $\Mb$.
Therefore the second-stage estimator from oracle $\Ocr$ can be written as
\begin{align}\label{equation; hat eta general theory}
    \hat{\eta} = (\Wb^\top \Wb + n \lambda \Ib)^{-1}\Wb^\top \Mb.
\end{align}
Here \(\hat{\eta}\) denotes the Hilbertian element corresponding to the function estimator \(\hat{h}\) returned by Algorithm~\ref{algorithm; main}.

\paragraph{Evaluation.}
We evaluate with a (positive) trace-class operator $\bSigma_{\operatorname{ref}}$ on $\XX$.
For any estimator $\hat{\eta}$, define
\begin{align*}
    \cE_{\operatorname{ref}}(\hat{\eta}) := \| \hat{\eta}-\eta^\star \|_{\bSigma_{\operatorname{ref}}}^2
    \;=\; \langle \hat{\eta}-\eta^\star, \bSigma_{\operatorname{ref}}(\hat{\eta}-\eta^\star)\rangle_{\XX}.
\end{align*}
These choices recover familiar error metrics. For instance, in Model~\ref{model; s1}, $\bSigma$ is the second-moment operator for $K_\cH$; taking $\bSigma_{\operatorname{ref}}=\bSigma$ yields the $L^2(\mathcal{P}_x)$-error, while $\bSigma_{\operatorname{ref}}=\psi(x_0)\otimes \psi(x_0)$ gives the pointwise error at $x_0$ (and similarly with $\phi(x_0)$ or $\tilde\psi(\tilde x_0)$ under Models~\ref{model; s2} and~\ref{model; s3}).
We summarize these cases below:
\begin{itemize}
    \item $L^2$-error: $\bSigma_{\operatorname{ref}} = \bSigma$.
    \item Pointwise evaluation at $x_0$: $\bSigma_{\operatorname{ref}} = \psi(x_0)\otimes \psi(x_0)$ (or $\phi(x_0)\otimes \phi(x_0)$ / $\tilde\psi(\tilde x_0)\otimes \tilde\psi(\tilde x_0)$ in Models~\ref{model; s2}--\ref{model; s3}), so $\cE_{\operatorname{ref}}(\hat{\eta}) = |\hat h(x_0)-h^\star(x_0)|^2$.
\end{itemize}

For the operator $\bSigma_{\operatorname{ref}}$ on $\XX$, define
\begin{align*}
    \Sb_\lambda := \bSigma_{\operatorname{ref}}^{\frac{1}{2}} (\bSigma + \lambda \Ib)^{-1}\bSigma_{\operatorname{ref}}^{\frac{1}{2}}.
\end{align*}
This operator captures the effective dimension of the estimator under the evaluation metric induced by $\bSigma_{\operatorname{ref}}$.

Additional notation used only in the appendix proofs is collected in Appendix~\ref{appendix; general notation} (Table~\ref{tab:notation-common}). Table~\ref{tab:notation-general} summarizes the notation used in this section.
\begin{table}[H]
\centering
\caption{Notation used in this section.}
\label{tab:notation-general}
\vspace{0.3cm}
\begin{tabular}{lp{0.72\linewidth}}
\toprule
\textbf{Symbol} & \textbf{Definition} \\
\midrule
\multicolumn{2}{l}{\textit{Spaces and feature maps}} \\
\(\mathbb{F}\) & Hilbert space associated with kernel \(K_\cF\) and feature map \(\phi:\cX\to\mathbb{F}\). \\
\(\HH\) & Hilbert space associated with kernel \(K_\cH\) (under Model~\ref{model; s1}). \\
\(\tilde{\mathbb{H}}\) & Hilbert space associated with \(K_{\tilde\cH}\) (under Model~\ref{model; s3}). \\
\(\phi,\psi,\tilde\psi\) & Feature maps for \(K_\cF, K_\cH, K_{\tilde\cH}\), respectively. \\
\midrule
\multicolumn{2}{l}{\textit{Operators and vectors}} \\
\(\theta_0^\star,\theta_1^\star\) & Hilbertian elements corresponding to nuisance functions \(f_0^\star,f_1^\star\) in \(\mathbb{F}\). \\
\(\eta^\star\) & Hilbertian element corresponding to the target contrast function \(h^\star\) in \(\XX\). \\
    \(w_i\) & Feature vector in \(\XX\) for the second-stage regression. \\
\(\Wb\) & design operator of \(\{w_i\}_{i=1}^n\). \\
\(\bSigma\) & population second moment \(\EE[w_i \otimes w_i]\). \\
\(\bSigma_{\operatorname{ref}}\) & evaluation operator defining the loss. \\
\(\Sb_\lambda\) & sandwich operator \(\bSigma_{\operatorname{ref}}^{1/2}(\bSigma+\lambda\Ib)^{-1}\bSigma_{\operatorname{ref}}^{1/2}\). \\
\(m_i\) & pseudo-outcome defined by switch imputation in \cref{equation: switch imputation}. \\
\(\Mb\) & pseudo-outcome vector \((m_i)_{i\in[n]}\). \\
\bottomrule
\end{tabular}
\end{table}

\subsection{Upper Bound Analysis}

\begin{theorem}[General error bound]\label{theorem: general}
Suppose we run Algorithm~\ref{algorithm; main} with main regularizer \(\lambda\) and \(\Ocr\) as KRR under Hilbert space \(\XX\), under the assumptions in Section~\ref{section: assumptions}.
Let $\hat{h}$ be the estimator produced by the algorithm and let $\hat{\eta}$ be the corresponding Hilbertian element in $\XX$ (defined in \cref{equation; hat eta general theory}).
Then, with probability at least $1-n^{-10}$, we have
\begin{align*}
    \cE_{\operatorname{ref}}(\hat{\eta}) \lesssim \frac{1}{n\kappa} \Tr(\Sb_\lambda) + \lambda^2 \|\Sb_\lambda \|_{\operatorname{op}} \|(\bSigma + \lambda \Ib)^{-1/2} \eta^\star\|_\XX^2 + \frac{1}{n \kappa} \|\Sb_\lambda \|_{\operatorname{op}}.
\end{align*}
\end{theorem}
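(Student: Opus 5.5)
The plan is to write the second-stage error as a ridge bias term plus a linear functional of a single "effective noise" vector. That vector collects the outcome noise and the first-stage nuisance error. I would then bound each piece in the $\bSigma_{\operatorname{ref}}$-geometry after a change of measure to $(\bSigma+\lambda\Ib)$.

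\textbf{Step 1 (decomposition of $\Mb$).} By \cref{equation: switch imputation}, for $a_i=1$ we have $m_i=h^\star(x_i)+\varepsilon_i+(f_0^\star-\hat f_0)(x_i)$. For $a_i=0$ we have $m_i=h^\star(x_i)-\varepsilon_i+(\hat f_1-f_1^\star)(x_i)$. Writing $\Phi_a$ for the design operator of $\{\phi(x_i): a_i=a\}$, the nuisance KRR \cref{eq:nuisance-estimation} gives
\[
\hat\theta_a-\theta_a^\star=-\bar\lambda(\hat\bSigma_a+\bar\lambda\Ib)^{-1}\theta_a^\star+(\Phi_a^\top\Phi_a+n\bar\lambda\Ib)^{-1}\Phi_a^\top\varepsilon_a .
\]
Hence $\Mb=\Wb\eta^\star+\mathbf{A}\varepsilon+\mathbf{b}$. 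Here $\mathbf{A}$ is a deterministic, conditional on design and treatments, matrix combining the signed identity with the cross-arm smoothers $\Phi_{1}(\Phi_0^\top\Phi_0+n\bar\lambda)^{-1}\Phi_0^\top$ and $\Phi_{0}(\Phi_1^\top\Phi_1+n\bar\lambda)^{-1}\Phi_1^\top$. The vector $\mathbf{b}$ is the nuisance regularization bias, of size $\bar\lambda$ in the appropriate norm. Plugging this into \cref{equation; hat eta general theory} gives
\[
\hat\eta-\eta^\star=-\lambda(\hat\bSigma+\lambda\Ib)^{-1}\eta^\star+\tfrac1n(\hat\bSigma+\lambda\Ib)^{-1}\Wb^\top(\mathbf{A}\varepsilon+\mathbf{b}),
\qquad \hat\bSigma=\Wb^\top\Wb/n .
\]

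\textbf{Step 2 (concentration of the design).} I would use the standard operator Bernstein argument for bounded features to get $\|(\bSigma+\lambda\Ib)^{1/2}(\hat\bSigma+\lambda\Ib)^{-1}(\bSigma+\lambda\Ib)^{1/2}\|_{\operatorname{op}}\le 2$ with probability $1-n^{-11}$, under the usual lower bound on $\lambda$. The analogous statement holds in $\mathbb{F}$ at level $\bar\lambda\asymp\log n/n$ for each arm. Combined with the overlap assumption $\pi/(1-\pi)\le1/\kappa$, this also yields the one-sided comparison
\[
\Phi_1^\top\Phi_1\preceq \tfrac{C}{\kappa}\bigl(\Phi_0^\top\Phi_0+n\bar\lambda\Ib\bigr),
\]
and symmetrically with the arms swapped. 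This is exactly where the $1/\kappa$ enters.

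\textbf{Step 3 (the three terms).} The ridge bias, after inserting $\bSigma_{\operatorname{ref}}^{1/2}$ and the change of measure, is at most
\[
\lambda^2\|\bSigma_{\operatorname{ref}}^{1/2}(\bSigma+\lambda\Ib)^{-1/2}\|_{\operatorname{op}}^2\,\|(\bSigma+\lambda\Ib)^{-1/2}\eta^\star\|_\XX^2=\lambda^2\|\Sb_\lambda\|_{\operatorname{op}}\|(\bSigma+\lambda\Ib)^{-1/2}\eta^\star\|_\XX^2 .
\]
For the noise term, Hanson--Wright conditional on the design bounds $\|\Bb\mathbf{A}\varepsilon\|^2$ by $\sigma^2\bigl(\Tr(\Bb\mathbf{A}\mathbf{A}^\top\Bb^\top)+\log n\,\|\Bb\mathbf{A}\mathbf{A}^\top\Bb^\top\|_{\operatorname{op}}\bigr)$, where $\Bb=\bSigma_{\operatorname{ref}}^{1/2}(\hat\bSigma+\lambda\Ib)^{-1}\Wb^\top/n$.

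\textbf{Main obstacle.} The key claim is that $\mathbf{A}\mathbf{A}^\top\preceq (C/\kappa)\,\Ib_n$ in the sense relevant to $\Bb$. Granted this, $\Tr(\Bb\Bb^\top)\lesssim \Tr(\Sb_\lambda)/n$ by Step 2, which produces $\Tr(\Sb_\lambda)/(n\kappa)$. The operator-norm part and the nuisance bias $\mathbf{b}$ together produce the $\|\Sb_\lambda\|_{\operatorname{op}}/(n\kappa)$ term; for $\mathbf{b}$ this needs $\bar\lambda\lesssim\log n/n$ and $\|\theta_a^\star\|\le M$.

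To establish the claim, I would note that the cross smoother $\Phi_1(\Phi_0^\top\Phi_0+n\bar\lambda)^{-1}\Phi_0^\top$ has squared operator norm at most $\|(\Phi_0^\top\Phi_0+n\bar\lambda)^{-1/2}\Phi_1^\top\Phi_1(\Phi_0^\top\Phi_0+n\bar\lambda)^{-1/2}\|_{\operatorname{op}}$. By the comparison in Step 2 this is $\lesssim1/\kappa$.

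The subtlety is that a crude operator-norm bound loses nothing only if the second-stage features $w_i$ interact benignly with the first-stage smoother. I expect this step to be where care is needed. My first attempt would be to avoid any relation between $\Wb$ and $\Phi$, using only the operator-norm bound on $\mathbf{A}$: $\Tr(\Bb\mathbf{A}\mathbf{A}^\top\Bb^\top)\le\|\mathbf{A}\|_{\operatorname{op}}^2\Tr(\Bb\Bb^\top)$.
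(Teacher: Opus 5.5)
Your proposal is correct and follows essentially the same route as the paper: the same decomposition of $\Mb$ into $\Wb\eta^\star$, own-plus-propagated noise, and propagated nuisance bias; the same good event with the overlap-driven cross-arm comparison; Hanson--Wright for the noise; and a change of measure to $(\bSigma+\lambda\Ib)$ for the ridge-bias and nuisance-bias terms. Your ``first attempt'' already closes the argument and needs no relation between $\Wb$ and the first-stage design. The bound $\Tr(\Bb\mathbf{A}\mathbf{A}^\top\Bb^\top)\le\|\mathbf{A}\|_{\operatorname{op}}^2\Tr(\Bb\Bb^\top)$ with $\|\mathbf{A}\|_{\operatorname{op}}^2\lesssim 1/\kappa$ is exactly how the paper obtains the $\Tr(\Sb_\lambda)/(n\kappa)$ term, so the subtlety you flag is not an obstacle.
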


\paragraph{Discussion.} The bound matches the standard KRR error bound in $\XX$, up to an effective noise inflation by a factor $1/\kappa$ due to treatment imbalance. 
In particular, the rate is governed entirely by the operators $\bSigma$ and $\bSigma_{\operatorname{ref}}$ on $\XX$; the nuisance space does not affect the bound beyond this effective noise inflation.
Consequently, once we plug in the appropriate spectral decay and trace bounds, Theorem~\ref{theorem: general} directly yields the $L^2$-error and pointwise rates for each of Models~\ref{model; s1}--\ref{model; s3}.

\paragraph{Instantiations.}
For $L^2$-error, $\Sb_\lambda = \bSigma^{\frac{1}{2}}(\bSigma+\lambda \Ib)^{-1}\bSigma^{\frac{1}{2}}$ and
$\Tr(\Sb_\lambda)=\sum_j \frac{\rho_j}{\rho_j+\lambda}$, where $\{\rho_j\}$ are eigenvalues of \(\bSigma\).
For point evaluation, $\Sb_\lambda$ is rank-one with
$\Tr(\Sb_\lambda)=\|\Sb_\lambda\|_{\op}=\langle \psi(x_0),(\bSigma+\lambda \Ib)^{-1}\psi(x_0)\rangle$ (or with $\phi$ / $\tilde\psi$ under Models~\ref{model; s2}--\ref{model; s3}), i.e., the leverage score.
In Sobolev RKHS $\cH = H^\gamma(\cX)$, Lemma~\ref{lemma: Sobolev 1} yields the bound
$\Tr(\Sb_\lambda)\lesssim \lambda^{-d/(2\gamma)}$ under Model~\ref{model; s1}.

\section{Upper Bounds for \texorpdfstring{$L^2$}{L2}-Error}
In this section, we present minimax-optimal upper bounds on the \(L^2(\mathcal{P}_x)\)-error of the estimator \(\hat{h}\) produced by our meta-algorithm (Algorithm~\ref{algorithm; main}). These results are direct corollaries of Theorem~\ref{theorem: general} once we plug in the appropriate spectral decay and trace bounds. We provide separate guarantees for the three structural models defined in Section~\ref{section: setup}.

\subsection{Results under Model~\ref{model; s1} (Subspace)}
We define the kernel integral operator \(\mathbf{T}_{\mathcal{H}}: L^2(\mathcal{P}_x) \to L^2(\mathcal{P}_x)\) associated with the kernel \(K_\cH\) and the marginal distribution \(\mathcal{P}_x\) \citep{fischer2020sobolev,zhang2023optimality}.
Concretely,
\begin{align*}
    \mathbf{T}_\cH f = \int_{x \in \cX} K(x,x')f(x')\mathrm{d}\cP_x (x').
\end{align*}
Let \(\rho_{\cH,1} \ge \rho_{\cH,2} \ge \cdots > 0\) denote the eigenvalues of \(\mathbf{T}_{\mathcal{H}}\).
The decay rate of these eigenvalues characterizes the complexity of the hypothesis space \(\mathcal{H}\). For instance, for the Sobolev space \(H^\beta(\mathcal{X})\), it is well established that \(\rho_{\cH,j} \asymp j^{-2\beta/d}\) \citep{zhang2023optimality}.
We also assume the kernel is bounded, i.e., \(\sup_{x \in \cX} K_\cH(x,x) = \cO(1)\).

\begin{corollary}[$L^2$-Error Bounds for Model~\ref{model; s1}]\label{corollary; KRR}
Suppose the assumptions in Section~\ref{section: assumptions} and Model~\ref{model; s1} hold.
Then, with probability at least \(1-n^{-10}\), the estimator \(\hat{h}\) from Algorithm~\ref{algorithm; main} with oracle $\Ocr$ for Model~\ref{model; s1} and main regularizer \(\lambda\) satisfies the following bounds:
\begin{enumerate}
    \item \textbf{Polynomial Decay:} If \(\rho_{\cH, j} \lesssim j^{-2\ell_\cH}\) for some \(\ell_\cH > 1/2\), setting \(\lambda \asymp (n\kappa)^{-\frac{2\ell_\cH}{2\ell_\cH + 1}}\) yields
    \[
    \mathcal{E}_{L^2}(\hat h) \lesssim (n\kappa)^{-\frac{2\ell_\cH}{2\ell_\cH + 1}}.
    \]
    \item \textbf{Exponential Decay:} If \(\rho_{\cH, j} \lesssim \exp(-c j)\) for some \(c>0\), setting \(\lambda \asymp (n\kappa)^{-1}\) yields
    \[
    \mathcal{E}_{L^2}(\hat h) \lesssim \frac{1}{n\kappa}.
    \]
    \item \textbf{Finite Rank:} If \(\operatorname{rank}(\mathbf{T}_{\mathcal{H}}) \le D\), setting \(\lambda \asymp (n\kappa)^{-1}\) yields
    \[
    \mathcal{E}_{L^2}(\hat h) \lesssim \frac{D}{n\kappa}.
    \]
\end{enumerate}
The notation \(\lesssim\) omits absolute constants and polylogarithmic factors.
\end{corollary}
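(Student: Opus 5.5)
We apply Theorem~\ref{theorem: general} in the Model~\ref{model; s1} specialization: $\XX=\HH$, $w_i=\psi(x_i)$, and $\bSigma_{\operatorname{ref}}=\bSigma=\EE[\psi(x)\otimes\psi(x)]$. The nonzero eigenvalues of $\bSigma$ coincide with those of $\mathbf{T}_\cH$, namely $\{\rho_{\cH,j}\}$. Then $\Sb_\lambda=\bSigma^{1/2}(\bSigma+\lambda\Ib)^{-1}\bSigma^{1/2}$ has eigenvalues $\rho_{\cH,j}/(\rho_{\cH,j}+\lambda)\le 1$, so $\|\Sb_\lambda\|_{\op}\le 1$ and $\Tr(\Sb_\lambda)=\sum_j \rho_{\cH,j}/(\rho_{\cH,j}+\lambda)=:\mathrm{df}(\lambda)$. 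For the bias term we use $\|(\bSigma+\lambda\Ib)^{-1/2}\eta^\star\|_\HH^2\le \lambda^{-1}\|\eta^\star\|_\HH^2$. Here $\|\eta^\star\|_\HH=\|h^\star\|_\cH$ is assumed bounded by a constant; this is implicit in Model~\ref{model; s1}. Substituting, Theorem~\ref{theorem: general} gives, with probability at least $1-n^{-10}$,
\begin{align*}
\mathcal{E}_{L^2}(\hat h)\lesssim \frac{\mathrm{df}(\lambda)}{n\kappa}+\lambda+\frac{1}{n\kappa}.
\end{align*}
The whole proof then reduces to bounding $\mathrm{df}(\lambda)$ under each spectral assumption and balancing the terms.

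\textbf{Polynomial decay.} Suppose $\rho_{\cH,j}\lesssim j^{-2\ell_\cH}$. Split the sum at $J=\lceil\lambda^{-1/(2\ell_\cH)}\rceil$. The terms with $j\le J$ are each at most $1$ and contribute at most $J$. The tail contributes at most $\lambda^{-1}\sum_{j>J} j^{-2\ell_\cH}\lesssim \lambda^{-1}J^{1-2\ell_\cH}\asymp\lambda^{-1/(2\ell_\cH)}$, where $\ell_\cH>1/2$ is needed for summability. Hence $\mathrm{df}(\lambda)\lesssim\lambda^{-1/(2\ell_\cH)}$. Balancing $\lambda^{-1/(2\ell_\cH)}/(n\kappa)$ against $\lambda$ gives $\lambda\asymp(n\kappa)^{-2\ell_\cH/(2\ell_\cH+1)}$ and the stated rate. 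The term $1/(n\kappa)$ is of lower order.

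\textbf{Exponential decay.} Suppose $\rho_{\cH,j}\lesssim e^{-cj}$. Split at $J\asymp c^{-1}\log(1/\lambda)$. The head contributes at most $J$, and the tail contributes at most $\lambda^{-1}\sum_{j>J}e^{-cj}\lesssim 1$. So $\mathrm{df}(\lambda)\lesssim\log(1/\lambda)$. With $\lambda\asymp(n\kappa)^{-1}$ the error is $\lesssim\log(n\kappa)/(n\kappa)$, which matches the claim up to polylogarithmic factors.

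\textbf{Finite rank.} If $\operatorname{rank}(\mathbf{T}_\cH)\le D$, then $\mathrm{df}(\lambda)\le D$ trivially. With $\lambda\asymp(n\kappa)^{-1}$ the bound becomes $D/(n\kappa)+1/(n\kappa)\lesssim D/(n\kappa)$.

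\textbf{Main obstacle.} The only delicate point is the identification of the spectrum of $\bSigma$ on $\HH$ with that of $\mathbf{T}_\cH$ on $L^2(\mathcal{P}_x)$. This follows from the standard fact that $\bSigma=\mathbf{S}^*\mathbf{S}$ and $\mathbf{T}_\cH=\mathbf{S}\mathbf{S}^*$, where $\mathbf{S}:\HH\to L^2(\mathcal{P}_x)$ is the inclusion/evaluation map; the two operators therefore share nonzero eigenvalues. A second point to check is that the hypotheses of Theorem~\ref{theorem: general} (boundedness of $K_\cH$, and $\lambda$ not too small relative to $n$, e.g. $\lambda\gtrsim\log n/n$ for the concentration steps) hold at the chosen $\lambda$. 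This holds because $(n\kappa)^{-1}\ge n^{-1}$, with any extra logarithmic factor absorbed into the $\lesssim$.
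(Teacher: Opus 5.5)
Your proposal is correct and follows the paper's proof essentially step for step. Like the paper, you specialize Theorem~\ref{theorem: general} with $\bSigma_{\operatorname{ref}}=\bSigma=\bSigma_\cH$, use $\|\Sb_\lambda\|_{\operatorname{op}}\le 1$, identify $\Tr(\Sb_\lambda)$ with the effective dimension, bound the ridge bias by $\lambda\|\eta^\star\|_\HH^2$, and then control the effective dimension under each spectral assumption. Where you go beyond the paper's appendix, you prove the effective-dimension bounds directly instead of citing them, and you treat the exponential-decay case explicitly, obtaining a $\log(n\kappa)/(n\kappa)$ bound that the paper leaves implicit.
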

\paragraph{Discussion.}
We defer the proof to Appendix~\ref{section: apdx proof L2M1}.
Corollary~\ref{corollary; KRR} shows that our estimator adapts to the spectral complexity of the contrast space \(\mathcal{H}\), effectively decoupling the rate from the (potentially slower) decay of the nuisance space \(\mathcal{F}\).
Comparing these rates with the lower bounds in Lemma~\ref{lemma: lower bound}, we conclude that the estimator is minimax optimal with respect to both the sample size \(n\) and the overlap parameter \(\kappa\).
Table~\ref{tab:rates} is a direct corollary of this result.

\subsection{Results under Model~\ref{model; s2} (Source Condition)}
We define the integral operator \(\mathbf{T}_{\mathcal{F}}: L^2(\mathcal{P}_x) \to L^2(\mathcal{P}_x)\) associated with the kernel \(K_\cF\) and the marginal distribution \(\mathcal{P}_x\). Let \(\rho_{\cF,1} \ge \rho_{\cF,2} \ge \cdots > 0\) denote the eigenvalues of \(\mathbf{T}_{\mathcal{F}}\).

\begin{corollary}[$L^2$-Error Bounds for Model~\ref{model; s2}]\label{corollary; source}
Suppose the assumptions in Section~\ref{section: assumptions} hold and \(\rho_{\cF,j} \lesssim j^{-2\ell_\cF}\) for some \(\ell_\cF > 1/2\).
Under Model~\ref{model; s2}, running Algorithm~\ref{algorithm; main} with oracle $\Ocr$ for Model~\ref{model; s2} and main regularizer \(\lambda \asymp (n\kappa)^{-\frac{2\ell_\cF}{1+2\ell_\cF(1+\nu)}}\) yields
\[
\mathcal{E}_{L^2}(\hat h)
\;\lesssim\;
(n\kappa)^{-\frac{2\ell_\cF(1+\nu)}{1+2\ell_\cF(1+\nu)}},
\]
with probability at least \(1-n^{-10}\). The notation \(\lesssim\) omits absolute constants and polylogarithmic factors.
\end{corollary}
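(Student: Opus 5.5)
We obtain Corollary~\ref{corollary; source} by plugging into Theorem~\ref{theorem: general} with $\XX=\mathbb{F}$, $w_i=\phi(x_i)$, and $\bSigma_{\operatorname{ref}}=\bSigma=\EE[\phi(x)\otimes\phi(x)]$. The nonzero eigenvalues of $\bSigma$ coincide with those of $\mathbf{T}_\cF$, namely $\rho_{\cF,j}\lesssim j^{-2\ell_\cF}$. Here $\Sb_\lambda=\bSigma(\bSigma+\lambda\Ib)^{-1}$, and we need three quantities.

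\textbf{Variance term.} First, $\|\Sb_\lambda\|_{\op}\le 1$. Second, the effective dimension satisfies
\[
\Tr(\Sb_\lambda)=\sum_j \frac{\rho_{\cF,j}}{\rho_{\cF,j}+\lambda}\lesssim \lambda^{-1/(2\ell_\cF)}.
\]
This follows by splitting the sum at $j^\ast\asymp\lambda^{-1/(2\ell_\cF)}$: the terms with $j\le j^\ast$ are each at most $1$, and the tail $\lambda^{-1}\sum_{j>j^\ast}j^{-2\ell_\cF}$ is also $O(j^\ast)$ because $\ell_\cF>1/2$. Hence the first and third terms of Theorem~\ref{theorem: general} together are $\lesssim (n\kappa)^{-1}\lambda^{-1/(2\ell_\cF)}$, since $\lambda^{-1/(2\ell_\cF)}\gtrsim 1$.

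\textbf{Bias term.} This is the step where the source condition must be used, and it is the main point of care. We translate $h^\star\in\operatorname{Range}(\mathbf{T}_\cF^{(1+\nu)/2})$ into the Hilbertian statement $\eta^\star=\bSigma^{\nu/2}g$ for some $g\in\mathbb{F}$ with $\|g\|_{\mathbb{F}}\lesssim 1$.

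To justify this, use the standard isometry $\mathbf{T}_\cF^{1/2}:L^2(\cP_x)\to\cF$. With the singular decomposition $\phi$-coordinates $\leftrightarrow$ eigenfunctions $e_j$ of $\mathbf{T}_\cF$, a function $\sum_j a_j e_j$ has $\cF$-norm squared $\sum_j a_j^2/\rho_{\cF,j}$. The source condition gives $a_j=\rho_{\cF,j}^{(1+\nu)/2}b_j$ with $\sum_j b_j^2<\infty$. The $\mathbb{F}$-coefficients of $\eta^\star$ along the eigenvectors of $\bSigma$ are then $a_j/\sqrt{\rho_{\cF,j}}=\rho_{\cF,j}^{\nu/2}b_j$.

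With $\eta^\star=\bSigma^{\nu/2}g$ in hand,
\[
\|(\bSigma+\lambda\Ib)^{-1/2}\eta^\star\|^2=\sum_j\frac{\rho_{\cF,j}^{\nu}b_j^2}{\rho_{\cF,j}+\lambda}\le \sup_{t>0}\frac{t^\nu}{t+\lambda}\,\|b\|^2\lesssim \lambda^{\nu-1},
\]
where the last step uses $0<\nu<1$. The bias term in Theorem~\ref{theorem: general} is therefore $\lesssim\lambda^2\cdot1\cdot\lambda^{\nu-1}=\lambda^{1+\nu}$.

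\textbf{Balancing.} Combining, $\cE_{L^2}(\hat h)\lesssim (n\kappa)^{-1}\lambda^{-1/(2\ell_\cF)}+\lambda^{1+\nu}$. Equating the two terms gives $\lambda^{1+\nu+1/(2\ell_\cF)}\asymp(n\kappa)^{-1}$, i.e.\ $\lambda\asymp(n\kappa)^{-2\ell_\cF/(1+2\ell_\cF(1+\nu))}$, which is the prescribed choice. Substituting yields the error rate $\lambda^{1+\nu}=(n\kappa)^{-2\ell_\cF(1+\nu)/(1+2\ell_\cF(1+\nu))}$, with probability at least $1-n^{-10}$ inherited from Theorem~\ref{theorem: general}. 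Any logarithmic factors hidden in that theorem's constants are absorbed into $\lesssim$.
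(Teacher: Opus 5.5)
Your proof is correct and follows essentially the same route as the paper: specialize Theorem~\ref{theorem: general} with $\bSigma_{\operatorname{ref}}=\bSigma=\bSigma_\cF$, bound the effective dimension by $\lambda^{-1/(2\ell_\cF)}$, use the spectral form $\eta^\star=\bSigma_\cF^{\nu/2}g$ of the source condition to get a ridge bias of order $\lambda^{1+\nu}$, and then balance the two terms. The paper obtains the bias bound by splitting $(\bSigma_\cF+\lambda\Ib)^{-1/2}$ into the powers $(1-\nu)/2$ and $\nu/2$, which is the operator form of your scalar bound $\sup_{t>0}t^\nu/(t+\lambda)\lesssim\lambda^{\nu-1}$; your explicit translation of $h^\star\in\operatorname{Range}(\mathbf{T}_\cF^{(1+\nu)/2})$ into the Hilbertian form, which the paper only asserts, is a useful addition, though like the paper it implicitly requires $\|g\|_{\mathbb{F}}$ to be bounded.
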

\paragraph{Discussion.}
We defer the proof to Appendix~\ref{section: apdx proof L2M2}.
Crucially, our result holds for general RKHSs, covering cases such as NTKs where such adaptivity is nontrivial.
For the specific case of NTK on the sphere \(\mathbb{S}^{d-1}\), the eigenvalue decay is characterized by \(\ell_\cF = \frac{d+1}{2d}\) \citep{li2024eigenvalue}. Substituting this into our theorem, we achieve the adaptive rate:
\[
\mathcal{E}_{L^2}(\hat h) \lesssim (n\kappa)^{-\frac{(d+1)(1+\nu)}{d+(d+1)(1+\nu)}}.
\]
This is substantially faster than the nuisance NTK learning bound \(n^{-\frac{d+1}{2d+1}}\).

\subsection{Results under Model~\ref{model; s3} (Low-Dimensional Structure)}
Finally, we consider the case where \(h^\star\) depends on a low-dimensional projection of the covariates. Let \(\tilde{\mathbf{T}}\) be the integral operator associated with the kernel on the lower-dimensional space \(\tilde{\cH}\), defined with respect to the distribution of \(\tilde{x}_i\), with eigenvalues \(\tilde{\rho}_1 \ge \tilde{\rho}_2 \ge \dots \).

\begin{corollary}[$L^2$-Error Bounds for Model~\ref{model; s3}]\label{corollary; model3-L2}
Suppose the assumptions in Section~\ref{section: assumptions} and Model~\ref{model; s3} hold, and \(\tilde\rho_{j} \lesssim j^{-2\tilde \ell}\) for some \(\tilde \ell > 1/2\). Running Algorithm~\ref{algorithm; main} with oracle $\Ocr$ for Model~\ref{model; s3} and main regularizer \(\lambda \asymp (n\kappa)^{-\frac{2\tilde\ell}{1+2\tilde \ell}}\) yields
\begin{align*}
\mathcal{E}_{L^2}(\hat h)
\;\lesssim\;
(n\kappa)^{-\frac{2\tilde \ell}{1+2\tilde \ell}}
\end{align*}
with probability at least $1-n^{-10}$.
The notation \(\lesssim\) omits absolute constants and polylogarithmic factors.
\end{corollary}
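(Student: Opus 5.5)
The plan is to obtain Corollary~\ref{corollary; model3-L2} directly from Theorem~\ref{theorem: general}, specialized to $\XX=\tilde{\HH}$ and $w_i=\tilde\psi(\tilde x_i)$. For the $L^2$-error we take $\bSigma_{\operatorname{ref}}=\bSigma=\EE[\tilde\psi(\tilde x_i)\otimes\tilde\psi(\tilde x_i)]$. Since $h^\star(x_i)=\tilde h^\star(\tilde x_i)=\langle w_i,\eta^\star\rangle_{\tilde\HH}$, we get $\cE_{\operatorname{ref}}(\hat\eta)=\EE_x|\hat h(x)-h^\star(x)|^2=\mathcal{E}_{L^2}(\hat h)$. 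Here the expectation over $x$ pushes forward to the law of $\tilde x$, because both functions depend on $x$ only through $\tilde x$.

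We first identify the spectrum. The nonzero eigenvalues of $\bSigma$ on $\tilde\HH$ coincide with those of the integral operator $\tilde{\mathbf{T}}$ on $L^2$ of the law of $\tilde x$. This follows because $\bSigma=\mathbf{S}^*\mathbf{S}$ and $\tilde{\mathbf{T}}=\mathbf{S}\mathbf{S}^*$, with $\mathbf{S}:\tilde\HH\to L^2$ the evaluation (inclusion) operator. Hence $\Sb_\lambda=\bSigma(\bSigma+\lambda\Ib)^{-1}$, and
\[
\Tr(\Sb_\lambda)=\sum_j\frac{\tilde\rho_j}{\tilde\rho_j+\lambda}\lesssim \lambda^{-\frac{1}{2\tilde\ell}},\qquad \|\Sb_\lambda\|_{\op}\le 1 .
\]
The trace bound is the standard split of the sum: the indices with $j\le J:=\lambda^{-1/(2\tilde\ell)}$ contribute at most $J$. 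The tail contributes $\lambda^{-1}\sum_{j>J} j^{-2\tilde\ell}\lesssim \lambda^{-1}J^{1-2\tilde\ell}\asymp J$, where $\tilde\ell>1/2$ ensures the tail sum converges.

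Next we bound the bias. We have $\|(\bSigma+\lambda\Ib)^{-1/2}\eta^\star\|^2\le\lambda^{-1}\|\eta^\star\|_{\tilde\HH}^2$, and $\|\eta^\star\|_{\tilde\HH}=\|\tilde h^\star\|_{\tilde\cH}=\cO(1)$. The norm bound is the natural standing assumption; it is implicit in Model~\ref{model; s3}, analogously to the bound $M$ on the nuisances. The bias term is therefore at most $\lambda^2\cdot 1\cdot\lambda^{-1}=\lambda$.

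Substituting into Theorem~\ref{theorem: general} gives
\[
\mathcal{E}_{L^2}(\hat h)\lesssim \frac{\lambda^{-1/(2\tilde\ell)}}{n\kappa}+\lambda+\frac{1}{n\kappa}.
\]
Balancing the first two terms gives $\lambda\asymp(n\kappa)^{-2\tilde\ell/(1+2\tilde\ell)}$, which yields the claimed rate $(n\kappa)^{-2\tilde\ell/(1+2\tilde\ell)}$. The last term $1/(n\kappa)$ is of lower order. The probability $1-n^{-10}$ is inherited from Theorem~\ref{theorem: general}.

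The only step needing care is checking that the theorem applies verbatim, even though the first-stage nuisances live in $\mathcal F$ over the full $x$ while the second stage uses only $\tilde x$. The key requirement is that the pseudo-outcome residual $m_i-h^\star(x_i)$ be representable in the form the theorem handles, with $h^\star(x_i)=\langle w_i,\eta^\star\rangle$ exactly. Model~\ref{model; s3} guarantees this exact representation, so there is no misspecification term. Everything else reduces to the spectral identification above.
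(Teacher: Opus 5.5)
Your proposal is correct and follows essentially the same route as the paper. Like the paper, you specialize Theorem~\ref{theorem: general} with $\bSigma_{\operatorname{ref}}=\bSigma=\bSigma_{\tilde\cH}$, bound $\Tr(\Sb_\lambda)$ by the effective dimension $\lambda^{-1/(2\tilde\ell)}$ and the ridge bias by $\lambda\|\eta^\star\|_{\tilde{\HH}}^2$ (assuming this norm is bounded, as the paper also does), and balance the two terms. The paper leaves implicit or cites two details that you supply: the identification of the spectrum of $\bSigma$ with that of $\tilde{\mathbf{T}}$ via $\mathbf{S}^*\mathbf{S}$ versus $\mathbf{S}\mathbf{S}^*$, and the explicit split of the effective-dimension sum.
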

\paragraph{Discussion.}
We defer the proof to Appendix~\ref{section: apdx proof L2M3}.
For instance, if \(\tilde{\cH} = H^\beta(\tilde{\cX})\) for \(\operatorname{dim}(\tilde{\cX}) = \tilde{d}\), the decay rate is \(\tilde{\ell} = \beta / \tilde{d}\). Consequently, the rate simplifies to $(n\kappa)^{-\frac{2\beta}{\tilde{d}+2\beta}}$.
This rate depends only on the intrinsic dimension \(\tilde{d}\), effectively treating the problem as if the data were generated in the lower-dimensional space. 
The complexity of the nuisance functions \(f_0^\star, f_1^\star\) (which may depend on the full dimension \(d\)) does not degrade the convergence rate of \(\hat{h}\).

\subsection{Adaptivity Guarantees}
We proposed a model selection procedure in Algorithm~\ref{algorithm; model selection}. We now state its adaptivity guarantee and the corresponding oracle inequality.

To facilitate theoretical analysis, we introduce a standard boundedness assumption on the contrast function.
\begin{assumption}\label{assumption:bounded-contrast}
There exists a known constant \(B^\star > 0\) such that \(\sup_{x \in \mathcal{X}}|h^\star(x)| \le B^\star\).
\end{assumption}
This assumption is mild and widely adopted in the classification and regression literature.
Additionally, we focus on the nonparametric regime where the oracle rate satisfies
\[
R^\star := \min_{\bar h \in \bar\cM} \cE_{L^2}(\bar h) \gg \frac{1}{n}.
\]
Equivalently, we focus on the nonparametric regime where $n R^\star \to \infty$.

The following theorem establishes that our procedure selects a model that performs as well as the best candidate in the library, up to a negligible error term.

\begin{theorem}[Oracle Inequality]\label{theorem: model selection}
Suppose Assumptions~\labelcref{ass:id,ass:noise,assumption; boundedness,assumption:bounded-contrast,assumption; positivity} hold. Run Algorithm~\ref{algorithm; model selection} with \(B > B^\star\), and let \(L = |\mathcal{M}|\) be polynomial in \(n\). Then, with probability at least \(1 - n^{-10}\), the selected (truncated) estimator \(\bar{h}_{\operatorname{ms}}\) satisfies:
\[
\mathcal{E}_{L^2}(\bar{h}_{\operatorname{ms}}) \lesssim \min_{\bar{h} \in \bar{\mathcal{M}}} \mathcal{E}_{L^2}(\bar{h}) + \frac{\log n}{n\kappa}.
\]
\end{theorem}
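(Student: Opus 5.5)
Conditional on $\mathcal{D}_1$ and $\mathcal{D}_2$, the plan is to treat the truncated candidates $\bar h_1,\dots,\bar h_L$ and the proxy nuisances $\tilde f_0,\tilde f_1$ as fixed. Model selection then becomes empirical risk minimization over a finite class on the independent sample $\mathcal{D}_3$, with labels $\tilde m_{3i}$. Writing $\tilde m_{3i} = h^\star(x_{3i}) + \delta(x_{3i},a_{3i}) + \tilde\varepsilon_{3i}$, the proxy bias is
\[
\delta(x,a) := a\,(f_0^\star-\tilde f_0)(x) + (1-a)\,(\tilde f_1 - f_1^\star)(x),
\]
and $\tilde\varepsilon_{3i}=\pm\varepsilon_{3i}$ is sub-Gaussian with conditional mean zero. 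The excess empirical loss of $\bar h$ over the oracle candidate $\bar h_{o}$ decomposes into three pieces: the quadratic term $\|\bar h-h^\star\|_n^2-\|\bar h_o-h^\star\|_n^2$, a noise cross term $\frac{2}{n_3}\sum_i\tilde\varepsilon_{3i}(\bar h_o-\bar h)(x_{3i})$, and a bias cross term $\frac{2}{n_3}\sum_i \delta(x_{3i},a_{3i})(\bar h_o-\bar h)(x_{3i})$.

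\textbf{Step 1 (quadratic and noise terms).} Because $B>B^\star$, each $\bar h_j-h^\star$ is bounded by $2B$. Bernstein's inequality for bounded variables with variance $\lesssim B^2\mathcal{E}_{L^2}(\bar h_j)$, combined with a union bound over the $L=\mathrm{poly}(n)$ candidates, gives $\|\bar h_j-h^\star\|_n^2\asymp \mathcal{E}_{L^2}(\bar h_j)\pm O(\log n/n)$. The noise cross term is handled by sub-Gaussian concentration conditional on the $x$'s: it is at most $\sqrt{\|\bar h-\bar h_o\|_n^2\log n/n}$. By AM--GM this is absorbed into a half of $\mathcal{E}_{L^2}$ plus $\log n/n$. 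This is the standard finite-class ERM oracle inequality.

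\textbf{Step 2 (proxy bias).} This is the main obstacle. The nuisance $\tilde f_a$ lives in $\mathcal{F}$ and may converge only at the slow $\mathcal{F}$-rate, so the bias cannot be bounded by Cauchy--Schwarz with $\|\delta\|$. The key is that the switch imputation makes $\delta$ nearly mean zero in the relevant direction. Take $g=\bar h_o-\bar h$, a fixed bounded function. The expectation of the bias term is
\[
\mathbb{E}\bigl[g(x)\{\pi(x)(f_0^\star-\tilde f_0)(x)+(1-\pi(x))(\tilde f_1-f_1^\star)(x)\}\bigr].
\]
Undersmoothed KRR with $\tilde\lambda\asymp\log n/n$ has regularization bias $O(\tilde\lambda)$ in every direction. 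Expanding, $\tilde f_a - f_a^\star = (\hat\Sigma_a+\tilde\lambda)^{-1}\Phi_a^\top\varepsilon - \tilde\lambda(\hat\Sigma_a+\tilde\lambda)^{-1}\theta_a^\star$, where $\hat\Sigma_a = \frac{1}{n_2}\sum_{i:a_{2i}=a}\phi(x_{2i})\otimes\phi(x_{2i})$. The variance part is linear in the $\mathcal{D}_2$ noise. Its inner product with the fixed representer of $g\cdot(1-\pi)$ should therefore be a mean-zero sub-Gaussian variable of variance roughly $\frac{1}{n_2}\langle u,(\hat\Sigma_a+\tilde\lambda)^{-1}\hat\Sigma_a(\hat\Sigma_a+\tilde\lambda)^{-1}u\rangle$. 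This is exactly the same mechanism used in the proof of Theorem~\ref{theorem: general}, where the first-stage error enters the second stage only through such inner products. I would reuse that argument, with a fixed weighting function in place of the KRR smoother, to get the bound $\lesssim \|g\|_{L^2}\sqrt{\log n/(n\kappa)}+\log n/(n\kappa)$.

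The difficulty is that $g(1-\pi)$ need not lie in $\mathcal{F}$. To avoid needing that, I would instead compute the inner product against the reweighting directly: since $\tilde f_1-f_1^\star$ is obtained from treated units whose density is $\pi$, the term $\mathbb{E}[g(1-\pi)(\tilde f_1-f_1^\star)]$ is a weighted empirical process in the $\mathcal{D}_2$ treated noise. The weights $(1-\pi)/\pi\le 1/\kappa$ produce the $1/\kappa$ factor. The fluctuation of the bias term around its expectation over $\mathcal{D}_3$ is controlled by Bernstein's inequality, which needs the sup-norm bound $\|\tilde f_a-f_a^\star\|_\infty\lesssim\sqrt{\xi}\|\cdot\|_{\mathcal{F}}$, polynomially bounded.

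\textbf{Step 3 (combine).} Union bound over $j\le L$ and collect terms. Absorbing $\|g\|_{L^2}\le \sqrt{\mathcal{E}(\bar h)}+\sqrt{\mathcal{E}(\bar h_o)}$ via AM--GM gives
\[
\mathcal{E}_{L^2}(\bar h_{\rm ms})\lesssim R^\star+\frac{\log n}{n\kappa}.
\]
The $\kappa$-dependence also explains the absence of propensity estimation. Lower-order polynomial-in-$n$ sup-norm terms are negligible in the regime $nR^\star\to\infty$.
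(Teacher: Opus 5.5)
There is a genuine gap in Step~2, in how you control the $\mathcal{D}_3$-fluctuation of the bias cross term $\frac{2}{n_3}\sum_i\delta(x_{3i},a_{3i})g(x_{3i})$ around its conditional mean.

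You condition on all of $\mathcal{D}_2$, noise included, and then apply Bernstein with $\|\delta\|_\infty\le\sqrt{\xi}\,\|\tilde\theta_a-\theta_a^\star\|_{\mathbb F}$. After AM--GM this leaves an additive remainder of order $\|\delta\|_\infty^2\log n/n$. Under undersmoothing, $\tilde\lambda\asymp\log n/n$, the RKHS norm of the error is genuinely large. Its variance part is of order $\frac{1}{n\tilde\lambda}\sum_j\frac{\rho_{\mathcal F,j}}{\rho_{\mathcal F,j}+\tilde\lambda}$ (up to constants and $\kappa$), which is polynomial in $n$ under polynomial eigen-decay.

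For $\mathcal F=H^1[0,1]$ your remainder is therefore of order $n^{-1/2}$ up to logarithms. For $\mathcal H=H^2[0,1]$ the target is $R^\star\asymp n^{-4/5}$. So the remainder is neither $\lesssim\log n/(n\kappa)$ nor negligible merely because $nR^\star\to\infty$; it is exactly the leakage of nuisance complexity the theorem is meant to exclude. Using $\|\delta\|_{L^2}$ in the Bernstein variance instead gives $B\|\delta\|_{L^2}\sqrt{\log n/n}$, about $n^{-3/4}$ up to logarithms in the same example, which is still too large.

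The root cause is that you use the randomness of the $\mathcal D_2$ noise for the conditional mean but condition it away for the fluctuation, where it is equally needed.

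\textbf{How the paper avoids this.} The paper never conditions on the $\mathcal D_2$ noise. It conditions on $\mathcal D_1$ and only on the covariates and treatments of $\mathcal D_2$ and $\mathcal D_3$. It then writes $\mathbf M_3-\mathbf M_3^\star=\mathsf V_{\mathrm{ms}}+\mathsf B_{\mathrm{ms}}$.
\begin{itemize}
\item $\mathsf B_{\mathrm{ms}}$ is deterministic, with $\|\mathsf B_{\mathrm{ms}}\|_2^2\lesssim\log n/\kappa$.
\item $\mathsf V_{\mathrm{ms}}$ is a linear image of the joint $\mathcal D_2\cup\mathcal D_3$ noise. Its coordinates are correlated, but its sub-Gaussian norm as a vector is controlled by operator norms such as $\|\mathbf X_{3,0}(\mathbf X_{2,1}^\top\mathbf X_{2,1}+n_2\tilde\lambda\mathbf I)^{-1}\mathbf X_{2,1}^\top\|_{\mathrm{op}}\lesssim\kappa^{-1/2}$.
\item That operator-norm bound follows from the cross-sample spectral equivalence $\mathbf X_{3,0}^\top\mathbf X_{3,0}\preceq\frac{c}{\kappa}(\mathbf X_{2,1}^\top\mathbf X_{2,1}+n_2\tilde\lambda\mathbf I)$, i.e.\ concentration plus overlap.
\end{itemize}

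In your notation, the control-unit part of the cross term is $\langle\hat u,\tilde\theta_1-\theta_1^\star\rangle$. Here the empirical representer $\hat u=\frac1{n_3}\sum_i g(x_{3i})(1-a_{3i})\phi(x_{3i})$ replaces your population $u$. Up to a deterministic part of size $\|g\|_n\sqrt{\log n/(n_3\kappa)}$, this term is sub-Gaussian in the $\mathcal D_2$ noise with parameter $\lesssim\|g\|_n/\sqrt{n_3\kappa}$. No sup-norm of $\tilde f_a-f_a^\star$ is ever needed.

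The finite-class pseudo-label lemma (Lemma~\ref{lemma: loss model selection}) then gives the empirical oracle inequality with overhead $\log n/(n\kappa)$. Bernstein on the truncated candidates transfers it to $L^2(\mathcal P_x)$.

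Your population computation with the weight $(1-\pi)/\pi\le 1/\kappa$ is the right mechanism. The missing step is to apply it to $\hat u$, jointly over both samples, rather than to $u$ followed by a separate Bernstein step.
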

Here, the notation \(\lesssim\) omits absolute constants and polylogarithmic factors.
This result confirms that we do not pay a price for the complexity of the nuisance class \(\cF\); the selection overhead is governed solely by the sample size and overlap.
Note that the overhead term is of parametric order (on the \(1/n\) scale), and is therefore mild.

\subsection{Connection to Other Work}

Closest to our work is the recent study by \citet{kim2025transfer}, which also investigates CATE estimation using a two-stage KRR approach. However, there is a fundamental difference in the problem setup and the nature of the adaptivity.
\citet{kim2025transfer} focus on the "Transfer Learning" regime where the CATE \(h^\star\) resides in the \textit{same} function space as the nuisance functions (\(\cH = \cF\)), but possesses a significantly smaller RKHS norm (\(\|h^\star\|_\cF \ll \|f_0^\star\|_\cF, \|f_1^\star\|_\cF\)).
Consequently, their convergence rate with respect to the sample size \(n\) remains governed by the complexity of the ambient space \(\cF\). In our setting, the \(n\)-rate itself improves because it is controlled by the lower-complexity space \(\cH\).

Furthermore, our results are fully consistent with the fundamental limits established for H\"older classes studied in \citet{kennedy2022minimax}.

\section{Upper Bounds for Point Evaluation}\label{section: upper bounds point evaluation}
In this section, we establish minimax optimal pointwise error bounds for our estimator.

\subsection{Setup and Assumptions}
We primarily focus on the setting where \(\cH\) is a Sobolev space for the purpose of pointwise evaluation analysis. These results extend naturally to mixed Sobolev spaces. Throughout this section, we assume the density of \(x_i\) is bounded above and below by positive constants.
\paragraph{Point Evaluation under Models~\labelcref{model; s1,model; s2}.}
We assume that the CATE function \(h^\star\) resides in a Sobolev space \(\cH = H^\gamma(\cX)\) with smoothness index \(\gamma > d/2\). The nuisance functions are assumed to lie in a potentially rougher space \(\cF\). 
Note that under this setting, Model~\ref{model; s2} (source condition on \(\cF\)) is subsumed by Model~\ref{model; s1}, as a source condition of order \(\nu\) relative to a Sobolev kernel of smoothness \(\beta\) implies membership in \(H^{\beta(1+\nu)}\). Thus, we analyze these cases jointly assuming \(h^\star \in H^\gamma(\cX)\). Consequently, we restrict our analysis to Model~\ref{model; s1} and assume the algorithm operates under Model~\ref{model; s1}.

\paragraph{Point Evaluation under Model~\labelcref{model; s3}.}
We also focus on Sobolev classes and, for simplicity, consider the case where \(\cF = H^\beta(\cX)\) and \(\cH = H^\beta(\tilde{\cX})\) for some \(\beta > \frac{d}{2}\). Recall that \(\operatorname{dim}(\tilde{\cX}) = \tilde{d}\).

\subsection{Main Results}
We now present the pointwise evaluation error bounds, which follow from Theorem~\ref{theorem: general} by taking $\bSigma_{\operatorname{ref}}=\psi(x_0)\otimes \psi(x_0)$ (or $\tilde\psi(\tilde x_0)\otimes \tilde\psi(\tilde x_0)$ under Model~\ref{model; s3}) and invoking Sobolev embedding.

\begin{corollary}[Point Evaluation: Model~\ref{model; s1} $\&$~\ref{model; s2}]\label{corollary:PE}
Suppose \(\cH = H^\gamma(\cX)\) with \(\gamma > d/2\). Under the assumptions of Section~\ref{section: assumptions} and Model~\ref{model; s1}, the estimator \(\hat{h}\) produced by Algorithm~\ref{algorithm; main} with main regularizer \(\lambda \asymp \frac{1}{n\kappa}\) satisfies the following pointwise error bound for any fixed \(x_0 \in \cX\) with probability at least \(1-n^{-10}\):
\begin{align}\label{eq:pe-bound-1}
\mathcal{E}_{x_0}(\hat h) =| \hat{h}(x_0) - h^\star(x_0) |^2 \;\lesssim\; (n \kappa)^{-\frac{2\gamma-d}{2\gamma}}.
\end{align}
\end{corollary}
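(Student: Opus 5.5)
We will obtain the bound from Theorem~\ref{theorem: general} with $\XX=\HH$, $w_i=\psi(x_i)$, and the rank-one evaluation operator $\bSigma_{\operatorname{ref}}=\psi(x_0)\otimes\psi(x_0)$. In this case $\cE_{\operatorname{ref}}(\hat\eta)=|\hat h(x_0)-h^\star(x_0)|^2$, and $\Sb_\lambda$ has rank one, so
\[
\Tr(\Sb_\lambda)=\|\Sb_\lambda\|_{\op}=\langle \psi(x_0),(\bSigma+\lambda\Ib)^{-1}\psi(x_0)\rangle_{\HH}=:\ell_\lambda(x_0),
\]
which is the leverage score at $x_0$. Theorem~\ref{theorem: general} then reduces to
\[
\cE_{x_0}(\hat h)\lesssim \ell_\lambda(x_0)\Bigl(\frac{2}{n\kappa}+\lambda^2\|(\bSigma+\lambda\Ib)^{-1/2}\eta^\star\|_{\HH}^2\Bigr),
\]
and the task is to bound the two factors separately.

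\textbf{Bias factor.} This part is routine. Since $(\bSigma+\lambda\Ib)^{-1}\preceq \lambda^{-1}\Ib$,
\[
\lambda^2\|(\bSigma+\lambda\Ib)^{-1/2}\eta^\star\|^2\le \lambda\|\eta^\star\|_{\HH}^2=\lambda\|h^\star\|_{\cH}^2 .
\]
With $\lambda\asymp 1/(n\kappa)$, both terms in the parenthesis are therefore $O(1/(n\kappa))$, and it remains to show $\ell_\lambda(x_0)\lesssim \lambda^{-d/(2\gamma)}$.

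\textbf{Leverage score (main step).} Two facts are needed.
\begin{enumerate}
\item Because the density of $\cP_x$ is bounded above and below, $\bSigma$ is equivalent, up to constants, to the covariance operator under the uniform (Lebesgue) measure on $\cX$. Hence $\langle\eta,(\bSigma+\lambda\Ib)\eta\rangle\asymp \|h_\eta\|_{L^2(\mathrm{Leb})}^2+\lambda\|h_\eta\|_{H^\gamma}^2$, and the leverage score can be written variationally as
\[
\ell_\lambda(x_0)=\sup_{h\in H^\gamma}\frac{h(x_0)^2}{\|h\|_{L^2}^2+\lambda\|h\|_{H^\gamma}^2}.
\]
\item A Gagliardo--Nirenberg/Sobolev interpolation inequality on the regular domain $\cX$, valid since $\gamma>d/2$:
\[
\|h\|_\infty\lesssim \|h\|_{L^2}^{1-d/(2\gamma)}\,\|h\|_{H^\gamma}^{d/(2\gamma)}.
\]
\end{enumerate}
Writing $a=\|h\|_{L^2}$ and $b=\|h\|_{H^\gamma}$, the second fact gives
\[
h(x_0)^2\lesssim a^{2(1-\theta)}b^{2\theta},\qquad \theta=d/(2\gamma).
\]
Weighted AM--GM, $a^{2(1-\theta)}(\lambda b^2)^{\theta}\le a^2+\lambda b^2$, then yields $h(x_0)^2\lesssim \lambda^{-\theta}(a^2+\lambda b^2)$, that is, $\ell_\lambda(x_0)\lesssim\lambda^{-d/(2\gamma)}$. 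This is presumably what Lemma~\ref{lemma: Sobolev 1} records; if it is stated there only for the trace in the $L^2$ case, the pointwise version follows from the interpolation argument above.

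\textbf{Combining.} Substituting both bounds with $\lambda\asymp(n\kappa)^{-1}$,
\[
\cE_{x_0}(\hat h)\lesssim (n\kappa)^{d/(2\gamma)}\cdot (n\kappa)^{-1}=(n\kappa)^{-(2\gamma-d)/(2\gamma)},
\]
holding on the probability-$1-n^{-10}$ event of Theorem~\ref{theorem: general}. The main obstacle is the leverage-score bound, specifically making the interpolation inequality uniform in $x_0$ up to the boundary. For that step I would rely on an extension operator for the regular domain $\cX$ together with the density equivalence, rather than on any eigenfunction sup-norm bounds.
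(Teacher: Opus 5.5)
Your proposal is correct and follows the paper's proof essentially step for step. Like the paper, you specialize Theorem~\ref{theorem: general} to the rank-one reference operator, so that $\Tr(\Sb_\lambda)$ and $\|\Sb_\lambda\|_{\op}$ both equal the leverage score; you bound the ridge bias by $\lambda\|h^\star\|_{\cH}^2$; and you get the leverage-score bound $\lambda^{-d/(2\gamma)}$ from the interpolation inequality $\|h\|_\infty\lesssim\|h\|_{L^2}^{1-d/(2\gamma)}\|h\|_{H^\gamma}^{d/(2\gamma)}$, as in Lemma~\ref{lemma: Sobolev 1}. The differences are cosmetic: you write the leverage score variationally and use weighted AM--GM, whereas the paper plugs in the maximizer $(\Sigma+\lambda I)^{-1}\psi(x_0)$ and self-bounds; you also spell out the density-equivalence and extension-operator steps that the paper's lemma leaves implicit by working with Lebesgue measure.
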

\paragraph{Discussion.}
The rate in \cref{eq:pe-bound-1} matches the minimax optimal rate for pointwise estimation of a function in \(H^\gamma(\cX)\) given \(n\kappa\) samples \citep{tuo2024asymptotic}. Crucially, this bound is independent of the smoothness of the nuisance functions \(f_0^\star, f_1^\star\), provided they lie in the ambient RKHS \(\cF\). 
The same conclusion holds for mixed Sobolev spaces $\cH = H^\gamma_{\text{mix}}(\cX)$, yielding the rate \((n \kappa)^{-\frac{2\gamma-1}{2\gamma}}\). By the lower bound in Lemma~\ref{lemma: lower bound}, this rate is also optimal. We defer the proof to Appendix~\ref{section: apdx proof PEM1}.

Next, we address the low-dimensional setting.

\begin{corollary}[Point Evaluation: Model~\ref{model; s3}]\label{corollary:PE-model3}
Under the assumptions of Section~\ref{section: assumptions} and Model~\ref{model; s3}, suppose \(\tilde{\cH} = H^\beta(\tilde{\cX})\) and \(\cF = H^\beta(\cX)\) with \(\beta > d/2\).
Then the estimator \(\hat h\) produced by Algorithm~\ref{algorithm; main} with main regularizer \(\lambda \asymp \frac{1}{n\kappa}\) satisfies:
\begin{align}\label{eq:pe-bound-3}
\mathcal{E}_{x_0}(\hat h) =| \hat{h}(x_0) - h^\star(x_0) |^2 \;\lesssim\; (n \kappa)^{-\frac{2\beta-\tilde d}{2\beta}}
\end{align}
with probability at least \(1-n^{-10}\).
The notation \(\lesssim\) omits absolute constants and polylogarithmic factors.
\end{corollary}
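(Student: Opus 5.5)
We will obtain Corollary~\ref{corollary:PE-model3} as a direct instantiation of Theorem~\ref{theorem: general}, with $\XX=\tilde{\HH}$, $w_i=\tilde\psi(\tilde x_i)$ and $\bSigma_{\operatorname{ref}}=\tilde\psi(\tilde x_0)\otimes\tilde\psi(\tilde x_0)$, where $\tilde x_0$ is the representation of $x_0$. Then $\Sb_\lambda$ is rank one, and
\[
\Tr(\Sb_\lambda)=\|\Sb_\lambda\|_{\op}=\langle \tilde\psi(\tilde x_0),(\bSigma+\lambda\Ib)^{-1}\tilde\psi(\tilde x_0)\rangle_{\tilde\HH}=:\ell_\lambda(\tilde x_0),
\]
the leverage score of $\tilde x_0$. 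Theorem~\ref{theorem: general} therefore gives, with probability $1-n^{-10}$,
\[
\mathcal{E}_{x_0}(\hat h)\lesssim \ell_\lambda(\tilde x_0)\Bigl(\frac{2}{n\kappa}+\lambda^2\|(\bSigma+\lambda\Ib)^{-1/2}\eta^\star\|_{\tilde\HH}^2\Bigr).
\]
Since $\|(\bSigma+\lambda\Ib)^{-1/2}\eta^\star\|^2\le \lambda^{-1}\|\eta^\star\|^2_{\tilde\HH}$ and $\|\eta^\star\|_{\tilde\HH}=\|\tilde h^\star\|_{\tilde\cH}=O(1)$, the bias term is at most $\lambda\,\ell_\lambda(\tilde x_0)$. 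Choosing $\lambda\asymp 1/(n\kappa)$ balances the two terms, so $\mathcal{E}_{x_0}(\hat h)\lesssim (n\kappa)^{-1}\ell_{1/(n\kappa)}(\tilde x_0)$.

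The remaining step is to bound the leverage score: we claim $\ell_\lambda(\tilde x_0)\lesssim \lambda^{-\tilde d/(2\beta)}$, which yields $(n\kappa)^{-1+\tilde d/(2\beta)}=(n\kappa)^{-(2\beta-\tilde d)/(2\beta)}$, as claimed. This is the $\tilde d$-dimensional analogue of the Sobolev leverage bound behind Corollary~\ref{corollary:PE}, so we plan to reuse that argument (Lemma~\ref{lemma: Sobolev 1} together with Sobolev embedding) on $\tilde\cX$. Concretely, $\ell_\lambda(\tilde x_0)=\sup\{g(\tilde x_0)^2:\ \|g\|_{L^2(\tilde{\mathcal P})}^2+\lambda\|g\|_{H^\beta(\tilde\cX)}^2\le 1\}$. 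The Gagliardo--Nirenberg interpolation inequality on $\tilde\cX$ (valid because $\beta>d/2\ge\tilde d/2$) gives
\[
\|g\|_\infty\lesssim \|g\|_{L^2}^{1-\tilde d/(2\beta)}\|g\|_{H^\beta}^{\tilde d/(2\beta)}\le \lambda^{-\tilde d/(4\beta)},
\]
and squaring yields the claimed bound.

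The main obstacle is the hypothesis needed for this step: the $L^2$ norm in the leverage score is taken with respect to the law of $\tilde x$, so the interpolation inequality requires that law to have a density on $\tilde\cX$ bounded above and below, with $\tilde\cX$ a regular domain. We will argue that this follows from the standing assumption on the density of $x$ when $\tilde x$ is a coordinate subset or a linear projection, since marginalizing a bounded density over a bounded regular domain preserves bounds on the image. Failing that, we will state it as an explicit assumption. The remaining conditions of Theorem~\ref{theorem: general} hold because $K_{\tilde\cH}$ is a bounded Sobolev kernel, so the $1-n^{-10}$ probability statement carries over unchanged.
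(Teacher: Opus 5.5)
Your proposal is correct and follows essentially the same route as the paper. The paper likewise instantiates Theorem~\ref{theorem: general} with the rank-one $\bSigma_{\operatorname{ref}}=\tilde\psi(\tilde x_0)\otimes\tilde\psi(\tilde x_0)$, bounds the ridge bias by $\lambda$ to get $\bigl(\frac{1}{n\kappa}+\lambda\bigr)$ times the leverage score, controls the leverage score by $\lambda^{-\tilde d/(2\beta)}$ via Lemma~\ref{lemma: Sobolev 1} on $\tilde\cX$, and takes $\lambda\asymp 1/(n\kappa)$. The one difference is that you explicitly flag the requirement that the law of $\tilde x$ have a density bounded above and below, which the paper uses only implicitly. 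Note that the lower bound need not survive marginalization for non-product domains (e.g.\ projecting a ball), so stating it as an explicit assumption is the safer fallback.
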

\paragraph{Discussion.}
This result shows that the rate depends on the intrinsic dimension \(\tilde{d}\) rather than the ambient dimension \(d\), formally establishing that our method escapes the curse of dimensionality in nuisance estimation.
We defer the proof to Appendix~\ref{section: apdx proof PEM3}.

\section{Numerical Experiments}

\subsection{Synthetic Data}
To validate our theory, we conduct simulation studies on synthetic data. We compare our algorithm (``Ours'') to two baselines: (i) a \textbf{Plug-in KRR} method, which estimates $\hat{f}_1$ and $\hat{f}_0$ separately and computes $\hat{h}(x) = \hat{f}_1(x) - \hat{f}_0(x)$; and (ii) a \textbf{Doubly Robust (DR) Learner} \citep{kennedy2020towards}, where the nuisance functions and propensity scores are estimated via KRR, and the second-stage KRR is performed on the pseudo-outcome.

\paragraph{Implementation Details and Hyperparameter Tuning.}
For all synthetic experiments, we use \(\bar \lambda=0.01/n\) in our algorithm and set the noise level to \(\sigma=1\). The regularization candidates are \(\Lambda=\{\frac{1}{n}2^{j-1} \mid j =1,2\dots 10\}\). In the implementation, these values are passed directly as ridge penalties in KRR. In multivariate settings, kernel length scales are chosen separately for full-feature and subset-feature kernels.

For Mat\'ern kernels, \(\nu\) controls smoothness and \(\ell\) is the length-scale parameter; for RBF kernels, \(\ell\) is the length scale. We choose \(\ell\) by a median heuristic: \(\ell\) is set so that the kernel correlation at the median pairwise distance equals \(0.5\), i.e., \(k(r_{\mathrm{med}};\ell)=0.5\).
\begin{itemize}
    \item \textbf{Ours:} We use the three-split structure \((\cD_1,\cD_2,\cD_3)\) in Algorithm~\ref{algorithm; model selection}. In the univariate case, nuisance KRR uses the unanchored Sobolev kernel of order \(1\), while the second stage KRR uses the unanchored Sobolev kernel of order \(2\), so selection is over \(\lambda\) only. In multivariate cases, nuisance KRR uses Mat\'ern \((\nu,\ell)=(1.5,2.6)\), and the CATE stage selects over: Mat\'ern \((\nu,\ell)=(1.5,2.6)\) and \((2.5,2.4)\) on all coordinates; Mat\'ern \((1.5,1.6)\) and \((2.5,1.5)\) on the first four coordinates; and RBF kernels with \(\ell=2.1\) (all coordinates) and \(\ell=1.3\) (first four coordinates).
    \item \textbf{Plug-in baseline:} We use the same nuisance kernel as in our method (\(K_{\cF}\)), select \(\lambda\) separately for \(\hat f_0\) and \(\hat f_1\) by 3-fold CV, and refit on the full training sample.
    \item \textbf{DR baseline:} We use a 2-way split. On the first half, \(\hat f_0,\hat f_1,\hat\pi\) are fit by KRR with 3-fold CV using \(K_{\cF}\). On the second half, CATE KRR is tuned by 3-fold CV on the DR pseudo-outcome. In the univariate case, the stage-2 kernel is Sobolev order \(2\); in multivariate cases, it is Mat\'ern \((\nu,\ell)=(2.5,2.4)\) for the dense setting and Mat\'ern \((\nu,\ell)=(2.5,1.5)\) on the first four coordinates for the sparse setting.
\end{itemize}

The performance metric is the MSE computed on a held-out test set of 3,000 points. For the univariate case, we average over 100 runs for each \(n \in \{500, 1000\}\); for multivariate cases, we average over 100 runs for each \(n \in \{1000,2000\}\).

\paragraph{Univariate Case.}
In this setting, covariates $x_i$ are drawn uniformly from $[0,1]$. We specify the nuisance functions to be in a rougher Sobolev space ($H^1$) while the CATE function lies in a smoother space ($H^2$ or higher). Specifically, we set:
\[
f_0^\star(x) = 5\left(|x - 0.4| + |x - 0.8|\right), \qquad h^\star(x) = x^2.
\]
We also set the propensity score to $\pi(x) = \operatorname{Clip}(\sin(5\|x\|_2), 0.1, 0.9)$.
The true CATE $h^\star$ is a smooth quadratic function, while the baseline $f_0^\star$ contains non-differentiable points (kinks), placing it in a lower-order Sobolev space. In this univariate setting, we use unanchored Sobolev kernels (order 1 for the nuisance and order 2 for the CATE), so our model selection varies only the regularization parameter $\lambda$. The results for Gaussian noise $\sigma=1$ are reported in Table~\ref{table:univariate}. Our method significantly outperforms the baselines by effectively adapting to the simpler structure of $h^\star$.

\begin{table}[H]
\centering
\caption{Comparison of MSE over repeated runs for the Univariate Case. Standard errors (SE of the mean) are in parentheses. For $n=1000$ and $n=500$, we use 100 runs.}
\label{table:univariate}
\vspace{0.2cm}
\begin{tabular}{@{}lcc@{}}
\toprule
Method & $n=1000$ & $n=500$ \\
\midrule
\textbf{Ours} & \textbf{0.0562} (0.0040) & \textbf{0.1127} (0.0078) \\
Plug-in KRR & 0.0825 (0.0035) & 0.1581 (0.0087) \\
DR-Learner KRR & 0.0872 (0.0146) & 0.9935 (0.6646) \\
\bottomrule
\end{tabular}
\end{table}

\paragraph{Multivariate Case (Models 1 \& 2).}
We consider a 10-dimensional setting ($d=10$) where covariates are uniform on $[-1, 1]^{10}$. The nuisance function involves high-frequency components across all dimensions: $f_0^\star(x) = \frac{2}{d} \sum_{j=1}^d \sin(x_j)$. The CATE is a smoother, linear function involving all features:
\[
h^\star(x) = \frac{0.5}{d} \sum_{j=1}^d x_j.
\]
We use the same propensity specification, $\pi(x) = \operatorname{Clip}(\sin(5\|x\|_2), 0.1, 0.9)$, as in the univariate setting.
This setup corresponds to a scenario where the CATE has a simpler spectral decay (Model 1) or satisfies a source condition (Model 2) relative to the nuisance. 
Intuitively, the nuisance functions \(f_0^\star,f_1^\star\) are rough and need not lie in a higher-order Sobolev ball with bounded norm, whereas the CATE is smoother and does lie in such a ball.
As shown in Tables~\labelcref{table; multivariate n=1000,table; multivariate n=2000}, our method achieves the lowest estimation error, demonstrating spectral adaptivity.

\paragraph{Multivariate Sparse Case (Model 3).}
We examine a low-dimensional structure where the ambient dimension is $d=10$, but the CATE depends only on the first $p=4$ covariates. The contrast function is quadratic on this subspace:
\[
h^\star(x) = \frac{0.3}{p} \sum_{j=1}^p x_j^2.
\]
Our algorithm includes kernel candidates for the second-stage \(\cH\) that are defined on different subsets of variables (e.g., kernels using only the first 4 features as well as kernels using all 10 coordinates). We select among these candidates during model selection. By choosing the kernel that operates on the relevant subspace, our method effectively escapes the curse of dimensionality associated with the ambient dimension $d=10$, yielding superior performance compared to baselines that regress on the full feature set.
In these multivariate experiments, nuisance KRR uses Mat\'ern \((\nu,\ell)=(1.5,2.6)\) for all methods, while our CATE stage uses the kernel dictionary above and the DR stage uses Mat\'ern \(\nu=2.5\) with \(\ell=1.5\) on the first four coordinates.
Results are summarized in Tables~\labelcref{table; multivariate n=1000,table; multivariate n=2000}.

\begin{table}[H]
\centering
\caption{Comparison of MSE over 100 runs ($n=2,000$, $\sigma=1$) for Multivariate cases ($d=10$).}
\label{table; multivariate n=2000}
\vspace{0.2cm}
\begin{tabular}{@{}lcc@{}}
\toprule
& \textbf{Dense (Model 1/2)} & \textbf{Sparse (Model 3)} \\
Method & ($h^\star$ is linear) & ($h^\star$ is quadratic, $p=4 <d$) \\
\midrule
\textbf{Ours} & \textbf{0.0101} (0.0009) & \textbf{0.0113} (0.0009) \\
Plug-in KRR & 0.0295 (0.0010) & 0.0307 (0.0010) \\
DR-Learner KRR & 0.0403 (0.0143) & 0.0214 (0.0036) \\
\bottomrule
\end{tabular}
\end{table}

\begin{table}[H]
\centering
\caption{Comparison of MSE over 100 runs ($n=1,000$, $\sigma=1$) for Multivariate cases ($d=10$).}
\label{table; multivariate n=1000}
\vspace{0.2cm}
\begin{tabular}{@{}lcc@{}}
\toprule
& \textbf{Dense (Model 1/2)} & \textbf{Sparse (Model 3)} \\
Method & ($h^\star$ is linear) & ($h^\star$ is quadratic, $p=4 < d$) \\
\midrule
\textbf{Ours} & \textbf{0.0137} (0.0013) & \textbf{0.0149} (0.0014) \\
Plug-in KRR & 0.0478 (0.0018) & 0.0491 (0.0018) \\
DR-Learner KRR & 0.0463 (0.0189) & 0.0367 (0.0090) \\
\bottomrule
\end{tabular}
\end{table}

\subsection{Semi-real Data (RealCause)}
We evaluate our method on the \texttt{lalonde\_cps} dataset from the RealCause benchmark \citep{neal2020realcause}. This benchmark generates semi-synthetic counterfactuals based on real-world covariate distributions ($d=9$), allowing ground-truth evaluation (MSE) under realistic covariate distributions with selection bias.

We use 100 realizations, each with sample size \(n\approx 16{,}000\), and an \(80/20\) train/test split. Covariates are rescaled to \([0,1]^d\), and outcomes are rescaled to \([0,1]\) (with CATE scaled accordingly) using training-sample statistics. For all methods, nuisance KRR uses Mat\'ern \((\nu,\ell)=(1.5,1.3)\) and regularization grid \(\Lambda=\{\frac{1}{n}2^{j-1} \mid j =1,2\dots 10\}\). For the CATE stage in Ours and DR, we use Mat\'ern \((\nu,\ell)=(2.5,1.2)\). 
Kernel length scales are selected using the same median-based heuristic as in the synthetic experiments.
Our method applies 3-way cross-fitting over random KFold rotations and averages the three CATE predictors; the plug-in baseline uses 3-fold CV with refitting; and the DR baseline uses 2-way cross-fitting with 3-fold CV in both nuisance and stage-2 regressions. As shown in Table~\ref{table:realcause}, our method consistently achieves lower error than the baselines.

\begin{table}[H]
\centering
\caption{Average MSE on RealCause (\texttt{lalonde\_cps}) over 100 dataset realizations. Standard errors (SE of the mean) are in parentheses.}
\label{table:realcause}
\vspace{0.2cm}
\begin{tabular}{@{}lc@{}}
\toprule
Method & Mean MSE \\
\midrule
\textbf{Ours} & \textbf{0.01783} (0.00075) \\
Plug-in KRR & 0.02178 (0.00074) \\
DR-Learner KRR & 0.02452 (0.00073) \\
\bottomrule
\end{tabular}
\end{table}

\section{Proof Intuition} \label{section: proof sketch}
In this section, we outline the proof strategy for our main results. 
Our proof relies on a simple observation: second-stage regression using pseudo-outcomes \(\{(x_i, m_i)\}\) reduces to a regression problem \(m_i = h^\star(x_i) + \xi_i\), where \(\xi_i\) is mean-zero and sub-Gaussian, with variance proxy \(\mathcal{O}(1/\kappa)\), plus a negligible misspecification term of order \(\tilde{\cO}(1/n)\). Consequently, the learning rate is governed solely by the complexity of \(h^\star\).

Let \(\mathbf{M}^\star := (h^\star(x_1),\dots,h^\star(x_n))^\top\) denote the vector of true CATE values.
Recall the definition of the pseudo-outcome \(m_i\).
Formally, we show that the vector of pseudo-outcomes \(\mathbf{M} = (m_1, \dots, m_n)^\top\) in Algorithm~\ref{algorithm; main} satisfies:
\begin{align*}
    \mathbf{M} \approx \mathbf{M}^\star + \underbrace{\text{Mean-Zero Noise}}_{\textcolor{purple}{\text{proxy } \approx \, \sigma^2/\kappa}} + \underbrace{\text{Bias}}_{\textcolor{blue}{\text{MSE } \approx n^{-1}}}.
\end{align*}

In vector notation, let \(\mathbf{y}\) and \(\bm{\varepsilon}\) denote the response and noise vectors, and let \(\mathbf{X}, \mathbf{X}_0, \mathbf{X}_1\) denote the design operators restricted to the full, control, and treated subsamples, respectively.
For example, \(\Xb_1\) is the design operator of \(\{\phi(x_i) \bm{1}(a_i=1)\}_{i=1}^n\).
Then we can write:
\begin{align}\label{equation: M decomposition}
\mathbf{M}
&= (\mathbf{X}_0\hat \theta_1 -\mathbf{X}_0 \theta_0^\star - \bm \varepsilon_0) + (\mathbf{X}_1 \theta_1^\star + \bm \varepsilon_1 - \mathbf{X}_1\hat\theta_0) \nonumber \\
&= \mathbf{X}(\theta_1^\star - \theta_0^\star) + \bm\varepsilon' + \mathbf{X}_0 (\hat \theta_1-\theta_1^\star) - \mathbf{X}_1 (\hat \theta_0-\theta_0^\star),
\end{align}
where \(\bm\varepsilon'\) has entries \(\varepsilon_i (-1)^{a_i+1}\), \(\theta_0^\star,\theta_1^\star\) are the Hilbertian elements corresponding to \(f_0^\star,f_1^\star\), and \(\hat{\theta}_0,\hat{\theta}_1\) are Hilbertian elements corresponding to the first-stage KRR estimators \(\hat f_0,\hat f_1\). A precise definition appears in Table~\ref{tab:notation-common}.
Substituting the closed-form KRR solution for \(\hat{\theta}_0, \hat{\theta}_1\), the error \(\mathbf{M} - \mathbf{M}^\star\) decomposes into:
\begin{align*}
\mathbf{M} - \mathbf{M}^\star
&= \underbrace{\bm \varepsilon' + \mathbf{X}_0 (\mathbf{X}_{1}^\top \mathbf{X}_{1} + n \bar \lambda\mathbf{I})^{-1} \mathbf{X}_{1}^\top \bm{\varepsilon}_1 - \mathbf{X}_1 (\mathbf{X}_{0}^\top \mathbf{X}_{0} + n \bar \lambda\mathbf{I})^{-1} \mathbf{X}_{0}^\top \bm{\varepsilon}_0}_{\text{(I) Propagated Variance}} \\
&\quad + \underbrace{n \bar \lambda \mathbf{X}_1 (\mathbf{X}_{0}^\top \mathbf{X}_{0} + n \bar\lambda\mathbf{I})^{-1} \theta_0^\star - n \bar \lambda \mathbf{X}_0 (\mathbf{X}_{1}^\top \mathbf{X}_{1} + n \bar\lambda\mathbf{I})^{-1} \theta_1^\star}_{\text{(II) Propagated Bias}}.
\end{align*}
Invoking the overlap assumption, we utilize the second-moment concentration inequalities
\[\mathbf{X}_0^\top \mathbf{X}_0  \preceq \frac{c}{\kappa} (\mathbf{X}_1^\top \mathbf{X}_1  + \bar\lambda\mathbf{I})\]
and 
\[\mathbf{X}_1^\top \mathbf{X}_1  \preceq \frac{c}{\kappa} (\mathbf{X}_0^\top \mathbf{X}_0 + \bar\lambda\mathbf{I})\]
 with high probability for some constant \(c>1\).
We analyze these two terms conditional on the covariates \(\{x_i\}\), on the high-probability event where the above concentrations hold.

\paragraph{Analysis of Term (II): Bias Via Undersmoothing.}
The bias term represents the regularization error from the nuisance estimation. 
Using second-moment concentration and the positivity condition, we derive:
\begin{align*}
\bigl\| n \bar\lambda \mathbf{X}_0 (\mathbf{X}_{1}^\top \mathbf{X}_{1} + n \bar\lambda\mathbf{I})^{-1} \theta_1^\star \bigr\|_2
&\le \sqrt{n \bar \lambda} \cdot \bigl\| \mathbf{X}_0 (\mathbf{X}_{1}^\top \mathbf{X}_{1} + n \bar\lambda\mathbf{I})^{-\frac{1}{2}} \bigr\|_{\text{op}} \cdot \|\theta_1^\star\| \\
&\lesssim  \sqrt{n \bar{\lambda}} \cdot 1/\sqrt{\kappa} \cdot  \|\theta_1^\star\| = \tilde{\cO}(1/\sqrt{\kappa}).
\end{align*}
By choosing the undersmoothed parameter \(\bar{\lambda} \asymp \log n / n\), this term becomes \(\tilde{\mathcal{O}}(\frac{1}{\sqrt{\kappa}})\) in the \(\ell_2\)-norm. Since this is a length-\(n\) vector, its contribution to the MSE is \(\tilde{\mathcal{O}}(1/n\kappa)\), which is negligible compared to the nonparametric estimation error of \(h^\star\).

\paragraph{Analysis of Term (I): Self-Normalized Noise.}
The variance term consists of the intrinsic noise \(\bm\varepsilon'\) and the \textit{noise propagated from nuisance estimation}. Conditional on the covariates \(\{x_i\}\), the propagated noise behaves like a sub-Gaussian vector with an inflated variance proxy.
Consider the term \(\mathbf{v} := \mathbf{X}_0 (\mathbf{X}_{1}^\top \mathbf{X}_{1} + n \bar \lambda\mathbf{I})^{-1} \mathbf{X}_{1}^\top \bm{\varepsilon}_1\). 
On the same high-probability event where the second-moment concentrations hold, this term is sub-Gaussian with a variance proxy bounded by the operator norm squared:
\[
\sigma_{\mathbf{v}}^2 \lesssim \bigl\|\mathbf{X}_0 (\mathbf{X}_{1}^\top \mathbf{X}_{1} + n \bar \lambda\mathbf{I})^{-1} \mathbf{X}_{1}^\top\bigr\|_{\text{op}}^2 \lesssim 1/\kappa.
\]
This implies that the effective noise level is scaled by \(1/\sqrt{\kappa}\). Combined with the small bias term, the pseudo-outcome regression behaves like a clean regression problem with inflated mean-zero noise, yielding the oracle rate associated with \(\cH\).

\section{Discussion}
We have established minimax-optimal learning rates for CATE estimation in RKHS that adapt solely to the structural simplicity of the contrast function (e.g., spectral decay or low-dimensionality). Crucially, our two-stage framework achieves these fast rates without requiring consistent propensity score estimation, effectively decoupling the inference of the treatment effect from the complexity of nuisance parameters.

Important directions for future research include extending these guarantees to misspecified regimes where the nuisance functions lie outside the assumed RKHS, and developing valid uncertainty quantification measures, such as pointwise confidence intervals or uniform confidence bands.

\newpage
\bibliography{ref}

\newpage
\appendix

\clearpage
\newpage
\begin{center}
\textbf{{\Huge Supplementary Materials}}
\end{center}
\vspace{0.5cm}
\etocdepthtag.toc{mtappendix}
\etocsettagdepth{mtchapter}{none}
\etocsettagdepth{mtappendix}{subsection}
\tableofcontents

\crefalias{section}{appendix} 

\newpage

\section{Proof of the Theorem~\ref{theorem: general} (General Theory)}
\subsection{Notation}\label{appendix; general notation}
We collect additional notation used in the appendix.
\begin{table}[H]
\centering
\caption{Common notation used in the appendix proofs.}
\label{tab:notation-common}
\begin{tabular}{lp{0.72\linewidth}}
\toprule
\textbf{Symbol} & \textbf{Definition} \\
\midrule
\multicolumn{2}{l}{\textit{Second moment operators}} \\
\(\bSigma_\cF\) & \(\EE[\phi(x_i)\phi(x_i)^\top]\). \\
\(\bSigma_{\cF,0}\) & \(\EE[\phi(x_i)\phi(x_i)^\top \bm 1(a_i=0)]\). \\
\(\bSigma_{\cF,1}\) & \(\EE[\phi(x_i)\phi(x_i)^\top \bm 1(a_i=1)]\). \\
\midrule
\multicolumn{2}{l}{\textit{RKHS elements}} \\
\(\theta_0^\star\) & Hilbertian element corresponding to \(f_0^\star\) in \(\mathbb{F}\). \\
\(\theta_1^\star\) & Hilbertian element corresponding to \(f_1^\star\) in \(\mathbb{F}\). \\
\midrule
\multicolumn{2}{l}{\textit{Design operators and vectors}} \\
\(\Xb\) & design operator of \(\{\phi(x_i)\}_{i=1}^n\). \\
\(\Xb_1\) & design operator of \(\{\phi(x_i)\bm{1}(a_i=1)\}_{i=1}^n\). \\
\(\Xb_0\) & design operator of \(\{\phi(x_i)\bm{1}(a_i=0)\}_{i=1}^n\). \\
\(\Yb_1\) & vector of \(\{y_i\bm{1}(a_i=1)\}_{i=1}^n\). \\
\(\Yb_0\) & vector of \(\{y_i\bm{1}(a_i=0)\}_{i=1}^n\). \\
\(\bm \varepsilon_1\) & vector of \(\{\varepsilon_i\bm{1}(a_i=1)\}_{i=1}^n\). \\
\(\bm \varepsilon_0\) & vector of \(\{\varepsilon_i\bm{1}(a_i=0)\}_{i=1}^n\). \\
\(m_i\) & pseudo-outcome defined by switch imputation in \cref{equation: switch imputation}. \\
\(\Mb\) & pseudo-outcome vector \((m_i)_{i\in[n]}\) with decomposition
\(\begin{aligned}
\Mb &= \Xb (\theta_1^\star - \theta_0^\star) + \bm\varepsilon'  + \Xb_0 (\hat \theta_1-\theta_1^\star) - \Xb_1 (\hat \theta_0-\theta_0^\star).
\end{aligned}\) \\
\(\bm\varepsilon'\) & vector with entries \(\varepsilon_i (-1)^{a_i+1}\), \(i\in[n]\). \\
\bottomrule
\end{tabular}
\end{table}
\subsection{Good Event}
By the overlap (positivity) assumption, we have the dominance relations
\begin{align}\label{equation: good event 0-1}
\bSigma_\cF \preceq \frac{1}{\kappa} \bSigma_{\cF,0}, \quad  \bSigma_\cF \preceq \frac{1}{\kappa} \bSigma_{\cF,1}.
\end{align}
Applying Lemma~\ref{lemma: trace class concentration bounded} and Remark~\ref{remark; useful observation}, for any \(\lambda' \in \{\lambda, \bar \lambda\}\) and with probability at least \(1-n^{-11}\), there exists an absolute constant \(c_2 >1\) such that
\begin{equation}\label{equation: good event 0-2}
\begin{aligned}
& \frac{1}{c_2}(n \bSigma_{\cF,1} + \lambda' \Ib)  \preceq \Xb_1^\top \Xb_1 +  \lambda' \Ib \preceq c_2(n \bSigma_{\cF,1} +  \lambda' \Ib)\\
& \frac{1}{c_2}(n \bSigma_{\cF,0} +  \lambda' \Ib ) \preceq \Xb_0^\top \Xb_0 +  \lambda'\Ib\preceq c_2(n \bSigma_{\cF,0} + \lambda'\Ib) \\
& \frac{1}{c_2}(n \bSigma_{\cF} +  \lambda' \Ib) \preceq \Xb^\top \Xb +  \lambda'\Ib\preceq c_2(n \bSigma_{\cF} + \lambda' \Ib).
\end{aligned}    
\end{equation}
Moreover, for any $\lambda' \in \{\lambda, \bar\lambda\}$, with probability at least $1-n^{-11}$,
\begin{align}\label{equation: good event general 1}
   \frac{1}{c}( n\bSigma + n\lambda' \Ib) \preceq \Wb^\top \Wb + n\lambda' \Ib \preceq c( n\bSigma + n\lambda' \Ib)
\end{align}
for some absolute constant $c>1$.

Let \(\Ecr_0\) be the event on which the above concentration inequalities hold. Then
\begin{align*}
    \PP[\Ecr_0 ] \ge 1- 3n^{-11}.
\end{align*}
Combining \cref{equation: good event 0-1} and \cref{equation: good event 0-2}, on the event \(\Ecr_0\) we obtain
\begin{equation}\label{equation: good event 0-3}
    \begin{aligned}
    &\frac{1}{c\kappa}(\Xb_0^\top \Xb_0 +  \lambda' \Ib) \preceq  \Xb_1^\top \Xb_1 +  \lambda' \Ib \preceq \frac{c}{\kappa}(\Xb_0^\top \Xb_0 +  \lambda' \Ib) \\
       & \frac{1}{c\kappa}(\Xb_0^\top \Xb_0 +  \lambda' \Ib) \preceq  \Xb^\top \Xb +  \lambda' \Ib \preceq \frac{c}{\kappa}(\Xb_0^\top \Xb_0 +  \lambda' \Ib) \\
          &  \frac{1}{c\kappa}(\Xb_1^\top \Xb_1 +  \lambda' \Ib) \preceq  \Xb^\top \Xb +  \lambda' \Ib \preceq \frac{c}{\kappa}(\Xb_1^\top \Xb_1 +  \lambda' \Ib)
\end{aligned}
\end{equation}
for some constant \(c>1\).

\begin{remark}[Nuisance Regularizer Choice for Sobolev Class]
    When \(\cF = H^\beta(\cX)\), we can choose \(\bar \lambda\) as any value that guarantees the concentrations in \cref{equation: good event 0-2}.
    By Lemma~\ref{lemma: second moment concentration sobolev}, \( \log n \cdot n^{-\frac{2\beta}{d}} \lesssim \bar \lambda \lesssim \log n / n\) suffices; this range ensures the concentration bounds while preserving the stated upper rates.
\end{remark}

\subsection{Proof}
We work on the event $\Ecr_0$, which controls $\Xb_0,\Xb_1,\Xb$ as defined above.

\paragraph{Error Decomposition.}

Recall that by \cref{equation: M decomposition},
\begin{align*}
\hat \eta = (\Wb^\top \Wb +n\lambda  \Ib)^{-1} \Wb^\top \big(\Wb \eta^\star + \bm\varepsilon' + \Xb_0 (\hat \theta_1-\theta_1^\star) -\Xb_1 (\hat \theta_0-\theta_0^\star)  \big).
\end{align*}
The estimation error can be written as
\begin{align*}
&\hat{\eta}-\eta^\star \\
&= (\Wb^\top \Wb + n\lambda \Ib)^{-1} \big( \Wb^\top \Xb_0 (\hat \theta_1-\theta^\star_1 )  - \Wb^\top \Xb_1(\theta^\star_0 -\hat\theta_0) - n\lambda \eta^\star\big) +(\Wb^\top \Wb + n\lambda\Ib)^{-1} \Wb^\top \bm \varepsilon'.
\end{align*}
Accordingly, the error decomposes as
\begin{align*}
&\|\bSigma_{\operatorname{ref}}^{\frac{1}{2} }(\hat{\eta}-\eta^\star)\|_{\XX} \\
&\lesssim \|\bSigma_{\operatorname{ref}}^{\frac{1}{2} }(\Wb^\top \Wb + n\lambda \Ib)^{-1} \Wb^\top \Xb_1(\theta^\star_0 -\hat\theta_0) \|_\XX + \|\bSigma_{\operatorname{ref}}^{\frac{1}{2} } (\Wb^\top \Wb + n\lambda \Ib)^{-1} \Wb^\top \Xb_0 (\hat \theta_1-\theta^\star_1 ) \|_\XX  \\
&\quad + \|\bSigma_{\operatorname{ref}}^{\frac{1}{2} }(\Wb^\top \Wb + n\lambda \Ib)^{-1}n\lambda \eta^\star \|_\XX  + \| \bSigma_{\operatorname{ref}}^{\frac{1}{2} }(\Wb^\top \Wb + n\lambda \Ib)^{-1} \Wb^\top \bm\varepsilon'\|_{\XX}.
\end{align*}
We first bound the term
\begin{align*}
&\|\bSigma_{\operatorname{ref}}^{\frac{1}{2} }(\Wb^\top \Wb + n\lambda \Ib)^{-1} \Wb^\top \Xb_0(\hat\theta_1-\theta^\star_1) \|_\XX \\
&=  \|\bSigma_{\operatorname{ref}}^{\frac{1}{2} } (\Wb^\top \Wb + n\lambda \Ib)^{-1}\Wb^\top \Xb_0 (\Xb_1^\top \Xb_1 + n\bar\lambda \Ib)^{-1}(\Xb_1^\top \bm \varepsilon_1  - n\bar\lambda \theta_1^\star) \|_\XX \\
&\le  \|\bSigma_{\operatorname{ref}}^{\frac{1}{2} } (\Wb^\top \Wb + n\lambda \Ib)^{-1}\Wb^\top \Xb_0 (\Xb_1^\top \Xb_1 + n\bar\lambda \Ib)^{-1}\Xb_1^\top \bm \varepsilon_1  \|_\XX \\
&\quad + \|\bSigma_{\operatorname{ref}}^{\frac{1}{2} } (\Wb^\top \Wb + n\lambda \Ib)^{-1}\Wb^\top \Xb_0 (\Xb_1^\top \Xb_1 + n\bar\lambda \Ib)^{-1} n\bar\lambda \theta_1^\star \|_\XX
\end{align*}

\paragraph{Bias Term.}
Under the event \(\Ecr_0\), we have
\begin{align*}
\mathsf{B}_1 &:=  \| \bSigma_{\operatorname{ref}}^{\frac{1}{2}}(\Wb^\top \Wb + n\lambda \Ib)^{-1}\Wb^\top \Xb_0 (\Xb_1^\top \Xb_1 + n\bar\lambda \Ib)^{-1}(n\bar\lambda \theta_1^\star) \|_{\XX}\\
&\lesssim \| \bSigma_{\operatorname{ref}}^{\frac{1}{2}}(\Wb^\top \Wb + n\lambda \Ib)^{-\frac{1}{2}}\|_{\operatorname{op}} \| (\Wb^\top \Wb + n\lambda \Ib)^{-\frac{1}{2}} \Wb^\top \|_{\operatorname{op}}  \\
&\quad \times  \|  \Xb_0 (\Xb_1^\top \Xb_1 + n\bar\lambda \Ib)^{-\frac{1}{2}} \|_{\operatorname{op}} \|(\Xb_1^\top \Xb_1 + n\bar\lambda\Ib)^{-\frac{1}{2}}(n\bar\lambda\theta_1^\star) \|_{\mathbb{F}}\\
&\stackrel{(i)}{\lesssim} \| \bSigma_{\operatorname{ref}}^{\frac{1}{2}}(n \bSigma + n\lambda \Ib)^{-\frac{1}{2}}\|_{\operatorname{op}} \times  1 \times \frac{1}{\sqrt{\kappa}} \times \sqrt{n\bar\lambda} \| \theta_1^\star\|_{\mathbb{F}} \\
&\stackrel{}{\lesssim} \frac{1}{\sqrt{n}} \|\Sb_\lambda \|_{\operatorname{op}}^{\frac{1}{2}} \times   1 \times \frac{1}{\sqrt{\kappa}} \times \sqrt{n\bar\lambda} \| \theta_1^\star\|_{\mathbb{F}} \\
&\lesssim  \frac{1}{\sqrt{\kappa}} \|\Sb_\lambda \|_{\operatorname{op}}^{\frac{1}{2}} \sqrt{\bar\lambda} \| \theta_1^\star\|_{\mathbb{F}} \\
&\lesssim \frac{\sqrt{\log n}}{\sqrt{n \kappa}} \|\Sb_\lambda \|_{\operatorname{op}}^{\frac{1}{2}} \| \theta_1^\star\|_{\mathbb{F}}.
\end{align*}
In step (i), we used the definition of the good event \(\Ecr_0\) (specifically \cref{equation: good event 0-3} and \cref{equation: good event general 1}).

\paragraph{Variance Term.}
By the Hanson--Wright inequality, on the event \(\Ecr_0\), we obtain the following upper bound with probability at least \(1-n^{-11}\):
{\small \begin{align*}
&\mathsf{V}_1^2 \\
&:=  \|\bSigma_{\operatorname{ref}}^{\frac{1}{2}}(\Wb^\top \Wb + n\lambda \Ib)^{-1}\Wb^\top \Xb_0 (\Xb_1^\top \Xb_1 + n\bar\lambda \Ib)^{-1}\Xb_1^\top \bm \varepsilon_1\|_{\XX}^2 \\
&\lesssim \log n \cdot\Tr(\bSigma_{\operatorname{ref}}^{\frac{1}{2}}(\Wb^\top \Wb + n\lambda \Ib)^{-1}\Wb^\top \Xb_0 (\Xb_1^\top \Xb_1 + n\bar\lambda \Ib)^{-1}\Xb_1^\top \Xb_1 (\Xb_1^\top \Xb_1 + n\bar\lambda \Ib)^{-1}\Xb_0^\top \Wb (\Wb^\top \Wb + n\lambda \Ib)^{-1}\bSigma_{\operatorname{ref}}^{\frac{1}{2}}) \\
&\lesssim \Tr(\bSigma_{\operatorname{ref}}^{\frac{1}{2}}(\Wb^\top \Wb + n\lambda \Ib)^{-1}\Wb^\top \Xb (\Xb_1^\top \Xb_1 + n\bar\lambda \Ib)^{-1}\Xb^\top \Wb (\Wb^\top \Wb + n\lambda \Ib)^{-1}\bSigma_{\operatorname{ref}}^{\frac{1}{2}}) \log n \\
&\stackrel{(i)}{\lesssim} \frac{1}{\kappa}\Tr(\bSigma_{\operatorname{ref}}^{\frac{1}{2}}(\Wb^\top \Wb + n\lambda \Ib)^{-1}\Wb^\top \Xb (\Xb^\top \Xb + n\bar\lambda \Ib)^{-1}\Xb^\top \Wb (\Wb^\top \Wb + n\lambda \Ib)^{-1}\bSigma_{\operatorname{ref}}^{\frac{1}{2}}) \log n \\
&\lesssim \frac{1}{\kappa}\Tr(\bSigma_{\operatorname{ref}}^{\frac{1}{2}}(\Wb^\top \Wb + n\lambda \Ib)^{-1}\Wb^\top \Wb (\Wb^\top \Wb + n\lambda \Ib)^{-1}\bSigma_{\operatorname{ref}}^{\frac{1}{2}}) \log n\\
&\lesssim \frac{1}{\kappa} \Tr(\bSigma_{\operatorname{ref}}^{\frac{1}{2}}(\Wb^\top \Wb + n\lambda \Ib)^{-1}\bSigma_{\operatorname{ref}}^{\frac{1}{2}}) \log n \\
&\stackrel{(ii)}{\lesssim} \frac{\log n}{n\kappa} \Tr(\bSigma_{\operatorname{ref}} (\bSigma +\lambda \Ib)^{-1}) \\
&= \frac{\log n}{n\kappa} \Tr( \Sb_\lambda).
\end{align*}}
Here, in step (i) we used \cref{equation: good event 0-3}, and in step (ii) we used \cref{equation: good event general 1}.

The remaining propagated term is bounded similarly:
\begin{align*}
\|\bSigma_{\operatorname{ref}}^{\frac{1}{2} }(\Wb^\top \Wb + n\lambda \Ib)^{-1} \Wb^\top \Xb(\theta^\star_0 -\hat\theta_0) \|^2_\XX \lesssim \frac{\log n}{n\kappa} \Tr(\Sb_\lambda)+\frac{{\log n}}{{n \kappa}}\| \theta_0^\star\|_{\mathbb{F}}^2.
\end{align*}

\paragraph{Ridge Bias Term.}
Finally, we bound the ridge bias term:
\begin{align*}
\|\bSigma_{\operatorname{ref}}^{\frac{1}{2} }(\Wb^\top \Wb + n\lambda \Ib)^{-1}n\lambda \eta^\star \|_\XX &\lesssim \|\bSigma_{\operatorname{ref}}^{\frac{1}{2} }(\Wb^\top \Wb + n\lambda \Ib)^{-\frac{1}{2}} \|_{\operatorname{op}}\cdot \|(\Wb^\top \Wb + n\lambda \Ib)^{-\frac{1}{2}}n\lambda \eta^\star \|_\XX  \\
&\stackrel{(i)}{\lesssim} \|\bSigma_{\operatorname{ref}}^{\frac{1}{2} }(n \bSigma + n\lambda \Ib)^{-\frac{1}{2}} \|_{\operatorname{op}}\cdot \|(n\bSigma + n\lambda \Ib)^{-\frac{1}{2}}n\lambda \eta^\star \|_\XX  \\
&\lesssim \frac{1}{\sqrt{n}} \|\Sb_\lambda \|_{\operatorname{op}}^{\frac{1}{2}} \times \sqrt{n} \lambda  \times \|(\bSigma + \lambda\Ib)^{-\frac{1}{2}}\eta^\star \|_\XX \\
&= \lambda \|\Sb_\lambda \|_{\operatorname{op}}^{\frac{1}{2}} \|(\bSigma + \lambda\Ib)^{-\frac{1}{2}}\eta^\star \|_\XX
\end{align*}
where in step (i) we used \cref{equation: good event general 1}.

Combining the variance and bias bounds above with the decomposition yields the stated bound in Theorem~\ref{theorem: general}.

\section{Proofs of the Main Corollaries from General Theory}
We now show how each main corollary follows from Theorem~\ref{theorem: general} by specializing \(\bSigma_{\operatorname{ref}}, \bSigma\), and \(\XX\), bounding \(\|\Sb_\lambda\|_{\op}\) and \(\Tr(\Sb_\lambda)\), and choosing \(\lambda\). Polylogarithmic factors are absorbed into \(\lesssim\).
Throughout this section, \(w_i\), \(\XX\), \(\bSigma\), and \(\bSigma_{\operatorname{ref}}\) are as defined in Section~\ref{section: general theory} and Table~\ref{tab:notation-general}.

\subsection{Proof of Corollary~\ref{corollary; KRR} (Model~\ref{model; s1}, \texorpdfstring{$L^2$}{L2}-error)}\label{section: apdx proof L2M1}
Set $\bSigma_{\operatorname{ref}}=\bSigma_\cH$ and $\bSigma=\bSigma_\cH$, where \(\bSigma_\cH = \EE[\psi(x)\otimes \psi(x)]\). Let
$\bSigma_\cH=\sum_{j\ge1}\rho_{\cH,j}\,u_j\otimes u_j$ be the eigendecomposition. Then
\[
\Sb_\lambda=\bSigma_\cH^{1/2}(\bSigma_\cH+\lambda\Ib)^{-1}\bSigma_\cH^{1/2}
\]
has eigenvalues $\rho_{\cH,j}/(\rho_{\cH,j}+\lambda)$, so
\[
\|\Sb_\lambda\|_{\op}=\max_j\frac{\rho_{\cH,j}}{\rho_{\cH,j}+\lambda}\le 1,
\qquad
\Tr(\Sb_\lambda)=\sum_{j}\frac{\rho_{\cH,j}}{\rho_{\cH,j}+\lambda}.
\]
For the ridge bias, note that
\[
\lambda^2\|(\bSigma_\cH+\lambda \Ib)^{-1/2}\eta^\star\|_\HH^2
\le \lambda\|\eta^\star\|_\HH^2,
\]
so the bias term is $O(\lambda)$. Plugging these into Theorem~\ref{theorem: general} yields
\[
\mathcal{E}_{L^2}(\hat h) \lesssim \frac{1}{n\kappa}\sum_{j}\frac{\rho_{\cH,j}}{\rho_{\cH,j}+\lambda} + \lambda.
\]
If $\rho_{\cH,j}\lesssim j^{-2\ell_\cH}$, the effective-dimension bound
$\sum_j \frac{\rho_{\cH,j}}{\rho_{\cH,j}+\lambda}\lesssim \lambda^{-1/(2\ell_\cH)}$ holds
\citep{zhang2023optimality,ma2023optimally}, and choosing
$\lambda \asymp (n\kappa)^{-\frac{2\ell_\cH}{2\ell_\cH+1}}$ gives the rate in Corollary~\ref{corollary; KRR}. The finite-rank case follows by the bound $\sum_j \frac{\rho_{\cH,j}}{\rho_{\cH,j}+\lambda}\le D$.

\subsection{Proof of Corollary~\ref{corollary; source} (Model~\ref{model; s2}, $L^2$-error)}\label{section: apdx proof L2M2}
Set $\bSigma_{\operatorname{ref}}=\bSigma_\cF$ and $\bSigma=\bSigma_\cF$, where \(\bSigma_\cF = \EE[\phi(x)\otimes \phi(x)]\). Write
$\bSigma_\cF=\sum_{j\ge1}\rho_{\cF,j}\,u_j\otimes u_j$. Then
$\|\Sb_\lambda\|_{\op}\le 1$ and
$\Tr(\Sb_\lambda)=\sum_j \frac{\rho_{\cF,j}}{\rho_{\cF,j}+\lambda}$.

For the ridge bias, we use the source condition in its spectral form:
there exists $\gamma \in\mathbb{F}$ such that $\gamma=\bSigma_\cF^{-\nu/2}\eta^\star$.
Since $\bSigma_{\operatorname{ref}}=\bSigma_\cF$, we can sharpen the generic bound by
tracking the $\bSigma_\cF$ factor
\begin{align*}
    \lambda^2\|\bSigma_\cF^{1/2}(\bSigma_\cF+\lambda\Ib)^{-1}\eta^\star\|^2
 &\lesssim  \lambda^2 \|(\bSigma_\cF+\lambda\Ib)^{-\frac{1-\nu}{2}}(\bSigma_\cF+\lambda\Ib)^{-\frac{\nu}{2}}\eta^\star\|^2\\
 &\lesssim \lambda^{1+\nu}.
\end{align*}
Thus the ridge bias term is $O(\lambda^{1+\nu})$ under the source condition. Therefore,
\[
\mathcal{E}_{L^2}(\hat h) \lesssim \frac{1}{n\kappa}\sum_{j}\frac{\rho_{\cF,j}}{\rho_{\cF,j}+\lambda} + \lambda^{1+\nu}.
\]
If $\rho_{\cF,j}\lesssim j^{-2\ell_\cF}$, then
$\sum_j \frac{\rho_{\cF,j}}{\rho_{\cF,j}+\lambda}\lesssim \lambda^{-1/(2\ell_\cF)}$
\citep{ma2023optimally}, and choosing
$\lambda \asymp (n\kappa)^{-\frac{2\ell_\cF}{1+2\ell_\cF(1+\nu)}}$ yields the rate in Corollary~\ref{corollary; source}.

\subsection{Proof of Corollary~\ref{corollary; model3-L2} (Model~\ref{model; s3}, \texorpdfstring{$L^2$}{L2}-error)}\label{section: apdx proof L2M3}
Set $\bSigma_{\operatorname{ref}}=\bSigma_{\tilde{\cH}}$ and $\bSigma=\bSigma_{\tilde{\cH}}$, where \(\bSigma_{\tilde{\cH}} = \EE[\tilde\psi(x)\otimes \tilde\psi(x)]\), and write
$\bSigma_{\tilde{\cH}}=\sum_{j\ge1}\tilde\rho_j\,\tilde u_j\otimes \tilde u_j$.
Then $\|\Sb_\lambda\|_{\op}\le 1$ and
$\Tr(\Sb_\lambda)=\sum_j \frac{\tilde\rho_j}{\tilde\rho_j+\lambda}$.
As in Model~\ref{model; s1}, the ridge bias satisfies
$\lambda^2\|(\bSigma_{\tilde{\cH}}+\lambda\Ib)^{-1/2}\tilde\eta^\star\|^2\le \lambda\|\tilde\eta^\star\|_{\tilde{\HH}}^2$,
so it is $O(\lambda)$ when $\|\tilde\eta^\star\|_{\tilde{\HH}}$ is bounded. Hence,
\[
\mathcal{E}_{L^2}(\hat h) \lesssim \frac{1}{n\kappa}\sum_{j}\frac{\tilde\rho_j}{\tilde\rho_j+\lambda} + \lambda.
\]
If $\tilde\rho_j\lesssim j^{-2\tilde\ell}$, then
$\sum_j \frac{\tilde\rho_j}{\tilde\rho_j+\lambda}\lesssim \lambda^{-1/(2\tilde\ell)}$
\citep{ma2023optimally}, and choosing
$\lambda \asymp (n\kappa)^{-\frac{2\tilde\ell}{1+2\tilde\ell}}$ gives the rate in Corollary~\ref{corollary; model3-L2}.

\subsection{Proof of Corollary~\ref{corollary:PE} (Models~\ref{model; s1}--\ref{model; s2}, point evaluation)}\label{section: apdx proof PEM1}
For a fixed $x_0$, let $\bSigma_{\operatorname{ref}}=\psi(x_0)\otimes \psi(x_0)$ and $\bSigma=\bSigma_\cH$, where \(\bSigma_\cH = \EE[\psi(x)\otimes \psi(x)]\); then
\[
\Sb_\lambda=\bigl((\bSigma+\lambda\Ib)^{-1/2}\psi(x_0)\bigr)\otimes\bigl((\bSigma+\lambda\Ib)^{-1/2}\psi(x_0)\bigr)
\]
is rank-one with
\[
\|\Sb_\lambda\|_{\op}=\Tr(\Sb_\lambda)=\cQ_\lambda^2
:=\langle \psi(x_0),(\bSigma+\lambda\Ib)^{-1}\psi(x_0)\rangle,
\]
the standard leverage score. Theorem~\ref{theorem: general} yields
\[
\mathcal{E}_{x_0}(\hat h) \lesssim \left(\frac{1}{n\kappa}+\lambda\right)\cQ_\lambda^2.
\]
When $\cH=H^\gamma(\cX)$ with $\gamma>d/2$, Lemma~\ref{lemma: Sobolev 1} (Sobolev embedding / leverage score control) gives
$\cQ_\lambda^2\lesssim \lambda^{-d/(2\gamma)}$ \citep{tuo2024asymptotic}.
Taking $\lambda\asymp (n\kappa)^{-1}$ yields $\mathcal{E}_{x_0}(\hat h)\lesssim (n\kappa)^{-(2\gamma-d)/(2\gamma)}$, as in Corollary~\ref{corollary:PE}.

\subsection{Proof of Corollary~\ref{corollary:PE-model3} (Model~\ref{model; s3}, point evaluation)}\label{section: apdx proof PEM3}

Let $\bSigma_{\operatorname{ref}}=\tilde{\psi}(\tilde x_0)\otimes \tilde{\psi}(\tilde x_0)$ and $\bSigma=\bSigma_{\tilde{\cH}}$, where \(\bSigma_{\tilde{\cH}} = \EE[\tilde\psi(x)\otimes \tilde\psi(x)]\), so that
\[
\Sb_\lambda=\bigl((\bSigma_{\tilde{\cH}}+\lambda\Ib)^{-1/2}\tilde{\psi}(\tilde x_0)\bigr)\otimes\bigl((\bSigma_{\tilde{\cH}}+\lambda\Ib)^{-1/2}\tilde{\psi}(\tilde x_0)\bigr)
\]
and
$\|\Sb_\lambda\|_{\op}=\Tr(\Sb_\lambda)=\tilde{\cQ}_\lambda^2:=\langle \tilde{\psi}(\tilde x_0),(\bSigma_{\tilde{\cH}}+\lambda\Ib)^{-1}\tilde{\psi}(\tilde x_0)\rangle$.
Then
\[
\mathcal{E}_{x_0}(\hat h) \lesssim \left(\frac{1}{n\kappa}+\lambda\right)\tilde{\cQ}_\lambda^2.
\]
If $\tilde{\cH}=H^\beta(\tilde{\cX})$ with $\beta>\tilde d/2$, Sobolev embedding Lemma~\ref{lemma: Sobolev 1} yields
$\tilde{\cQ}_\lambda^2\lesssim \lambda^{-\tilde d/(2\beta)}$ \citep{tuo2024asymptotic}. Choosing $\lambda\asymp (n\kappa)^{-1}$ gives
$\mathcal{E}_{x_0}(\hat h)\lesssim (n\kappa)^{-(2\beta-\tilde d)/(2\beta)}$, matching Corollary~\ref{corollary:PE-model3}.

\section{Proof of Theorem~\ref{theorem: model selection} (Oracle Inequality)}

\subsection{Oracle Inequality for Empirical \(L^2\)-Error}
Let \(\bar h_{a^\star}\) be the candidate in \(\bar\cM\) with the smallest population \(L^2\)-error.
Formally, conditional on \(\mathcal{D}_2\), define the empirical \(L^2\)-error on \(\mathcal{D}_3\) by
\[
\cE^{\operatorname{in}}(h) := \frac{1}{n_3}\sum_{i =1}^{n_3} \bigl(h(x_{3i}) - \tilde m_{3i}\bigr)^2,
\]
which is precisely the empirical objective minimized in Algorithm~\ref{algorithm; model selection}.

We first state an oracle inequality for empirical \(L^2\)-error.

\begin{lemma}[Oracle Inequality for empirical \(L^2\)-error]\label{lemma: in-sample oracle inequality}
    \begin{align*}
        \cE^{\operatorname{in}}(\bar h_{\operatorname{ms}}) \lesssim \min_{h \in \bar \cM}  \cE^{\operatorname{in}}(h) + \frac{\log n}{n\kappa} \lesssim \cE^{\operatorname{in}}(\bar h_{a^\star}) +\frac{\log n}{n\kappa}.
    \end{align*}
\end{lemma}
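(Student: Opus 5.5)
The statement follows directly from the definition of \(\bar h_{\operatorname{ms}}\) in Algorithm~\ref{algorithm; model selection}, so I would not need concentration at this step. By construction, \(\bar h_{\operatorname{ms}}\) is the minimizer over the finite set \(\bar\cM\) of exactly the empirical objective \(\cE^{\operatorname{in}}\). Hence \(\cE^{\operatorname{in}}(\bar h_{\operatorname{ms}})=\min_{h\in\bar\cM}\cE^{\operatorname{in}}(h)\), which gives the first inequality with constant one; the additive \(\frac{\log n}{n\kappa}\) term is only slack. Since \(\bar h_{a^\star}\in\bar\cM\), the minimum is at most \(\cE^{\operatorname{in}}(\bar h_{a^\star})\), which gives the second inequality. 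Both steps hold deterministically, for every realization of \(\mathcal{D}_1,\mathcal{D}_2,\mathcal{D}_3\).

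The real content should be carried by the version of the lemma stated in terms of distance to \(h^\star\), which is what Theorem~\ref{theorem: model selection} needs. I would therefore prove the following refined form alongside it. Write \(\tilde m_{3i}=h^\star(x_{3i})+\tilde\xi_i\) and use the decomposition of \cref{equation: M decomposition}, now with the nuisances \(\tilde f_0,\tilde f_1\) fitted on the independent split \(\mathcal{D}_2\). Conditional on \(\mathcal{D}_1\cup\mathcal{D}_2\) and on \(\{x_{3i}\}\), the residual splits as \(\tilde\xi_i=\varepsilon_i(-1)^{a_i+1}+b_i\). The first part is a mean-zero sub-Gaussian noise term. The second part, \(b_i\), equals \(\tilde f_1-f_1^\star\) on controls and \(f_0^\star-\tilde f_0\) on treated units, and is a fixed function of \(x_{3i}\) given \(\mathcal{D}_2\).

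Expanding the square gives
\begin{align*}
\cE^{\operatorname{in}}(h)=\|h-h^\star\|_{n_3}^2-\frac{2}{n_3}\sum_{i=1}^{n_3}(h-h^\star)(x_{3i})\,\tilde\xi_i+\frac{1}{n_3}\sum_{i=1}^{n_3}\tilde\xi_i^2,
\end{align*}
where the last term does not depend on \(h\) and cancels when comparing candidates. The cross term with the noise \(\varepsilon\) is handled by a sub-Gaussian tail bound for each of the \(L\) candidates, each with variance at most \(\sigma^2\|h-h^\star\|_{n_3}^2/n_3\), followed by a union bound; since \(L\) is polynomial in \(n\), this costs only \(\log n\). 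The inequality \(2ab\le a^2/4+4b^2\) then absorbs the result into \(\frac14\|h-h^\star\|_{n_3}^2+O(\log n/n)\).

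The cross term with \(b_i\) is the main obstacle. By Cauchy--Schwarz it is at most \(\|h-h^\star\|_{n_3}\,\|b\|_{n_3}\), so I need \(\|b\|_{n_3}^2\lesssim \log n/(n\kappa)\). This is the out-of-sample \(L^2(\mathcal{P}_x)\) error of the undersmoothed nuisance \(\tilde f_a\) under the opposite treatment arm, reweighted by the overlap. I would attempt it with the bias/variance argument of Theorem~\ref{theorem: general} (Hanson--Wright plus the good event \cref{equation: good event 0-3} applied with \(\tilde\lambda\)), and this step may not go through: for \(\tilde\lambda\asymp \log n/n\), that argument bounds the bias part of \(b\) by \(\log n/(n\kappa)\), but the variance part is of order \(\Tr(\bSigma_\cF(\bSigma_\cF+\tilde\lambda\Ib)^{-1})/(n\kappa)\), which exceeds \(1/(n\kappa)\) unless \(\mathcal{F}\) is effectively finite dimensional. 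If this fails, my fallback is to keep the noise part of \(b\) as part of the mean-zero noise, since the sign-switched residuals are centered given \(x\). I would then show that its correlation with each fixed candidate is small, using independence of \(\mathcal{D}_2\) from \(\mathcal{D}_1\) and \(\mathcal{D}_3\) together with the union bound over the \(L\) candidates.

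Collecting terms and rearranging between \(\bar h_{\operatorname{ms}}\) and \(\bar h_{a^\star}\) would give
\[
\|\bar h_{\operatorname{ms}}-h^\star\|_{n_3}^2\lesssim\|\bar h_{a^\star}-h^\star\|_{n_3}^2+\frac{\log n}{n\kappa}.
\]
The truncation at level \(B\) is then used afterwards to pass from empirical to population \(L^2\) error.
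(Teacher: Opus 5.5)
\paragraph{Reading of the statement.} Your first paragraph is a complete and correct proof of the lemma exactly as written. Because \(\cE^{\operatorname{in}}\) is defined through the proxies \(\tilde m_{3i}\), \(\bar h_{\operatorname{ms}}\) minimizes it over \(\bar\cM\) by construction, so both inequalities hold deterministically with constant one. That is a different route from the paper's, and the paper is really proving something else:
\begin{itemize}
\item It conditions on \(\mathcal{D}_1\) and on the covariates and treatments of \(\mathcal{D}_2,\mathcal{D}_3\), and writes \(\mathbf{M}_3-\mathbf{M}_3^\star=\mathsf{V}_{\operatorname{ms}}+\mathsf{B}_{\operatorname{ms}}\).
\item It shows \(\|\mathsf{V}_{\operatorname{ms}}\|_{\psi_2}\lesssim\kappa^{-1/2}\) and \(\|\mathsf{B}_{\operatorname{ms}}\|_2\lesssim\sqrt{\log n/\kappa}\).
\item It invokes Lemma~\ref{lemma: loss model selection}, whose loss \(\mathcal{L}\) is the empirical squared distance to \(g^\star=h^\star\), not to the proxies.
\end{itemize}
So the paper's argument establishes the \(h^\star\)-centred inequality you call the refined form. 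That is also the only reading compatible with how the lemma is used in the proof of Theorem~\ref{theorem: model selection}. The Bernstein step \(\cE^{\operatorname{in}}(\bar h_{a^\star})\lesssim R_\star+\log n/(n\kappa)\), and the comparison of \(\cE_{L^2}(\bar h_i)\) with \(\cE^{\operatorname{in}}(\bar h_i)\), only make sense for the centred quantity. The literal \(\cE^{\operatorname{in}}\) contains the \(h\)-independent term \(\frac{1}{n_3}\sum_i\tilde\xi_i^2\), which is of the order of the noise variance rather than vanishing. Your trivial argument is valid but carries no content for the theorem, and you were right to put the weight on the refined version.

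\paragraph{The refined version.} Your diagnosis is correct: Cauchy--Schwarz on the full nuisance error \(b\) fails. Its noise part has squared empirical norm of order \(\Tr(\bSigma_\cF(\bSigma_\cF+\tilde\lambda\mathbf{I})^{-1})/(n\kappa)\). At \(\tilde\lambda\asymp\log n/n\) this would enter the oracle inequality additively and can swamp \(R^\star\). Your fallback is exactly the paper's route, but you leave unstated the one estimate that makes it work.

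On control units the propagated noise is \(A\bm\varepsilon_{2,1}\), where \(A=\mathbf{X}_{3,0}(\mathbf{X}_{2,1}^\top\mathbf{X}_{2,1}+n_2\tilde\lambda\mathbf{I})^{-1}\mathbf{X}_{2,1}^\top\); treated units are symmetric. For a fixed \(u=((h-h^\star)(x_{3i}))_i\), the cross term \(u^\top A\bm\varepsilon_{2,1}\) is sub-Gaussian with parameter \(\sigma\|A^\top u\|_2\le\sigma\|A\|_{\operatorname{op}}\|u\|_2\).

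The trace of \(A^\top A\) is large, but its operator norm is not:
\begin{itemize}
\item Combining \(\bSigma_{\cF,0}\preceq\kappa^{-1}\bSigma_{\cF,1}\) with concentration on both splits gives \(\mathbf{X}_{3,0}^\top\mathbf{X}_{3,0}\preceq\frac{c}{\kappa}(\mathbf{X}_{2,1}^\top\mathbf{X}_{2,1}+n_2\tilde\lambda\mathbf{I})\).
\item Hence \(\|A\|_{\operatorname{op}}^2\le\frac{c}{\kappa}\|\mathbf{X}_{2,1}(\mathbf{X}_{2,1}^\top\mathbf{X}_{2,1}+n_2\tilde\lambda\mathbf{I})^{-1}\mathbf{X}_{2,1}^\top\|_{\operatorname{op}}\le\frac{c}{\kappa}\).
\end{itemize}

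With this bound, the refined version goes through:
\begin{itemize}
\item A union bound over the \(L=\operatorname{poly}(n)\) candidates and \(2ab\le a^2/4+4b^2\) bound the noise cross term by \(\frac14\|h-h^\star\|_{n_3}^2+O(\log n/(n\kappa))\).
\item Your Cauchy--Schwarz step handles the bias part of \(b\).
\end{itemize}
Your direct expansion then reproduces, in a self-contained way, what the paper obtains by citing Wang et al. Their \(V\) is this operator-norm bound, and their \(\mathcal{L}(\mathbb{E}\tilde g)\) is your bias part.
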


We set
\begin{align*}
\cE_{L^2}^{}(\bar h_{a^\star}):= R_\star.
\end{align*}
Next, using Bernstein's inequality, with probability at least $1-n^{-11}$, we have
\begin{align*}
\cE^{\operatorname{in}}(\bar h_{a^\star}) &\lesssim  R_\star + \frac{\sqrt{R_{\star}} \cdot B \log n}{\sqrt{n}} + \frac{B^2\log n}{n}
\lesssim R_\star + \frac{\log n}{n\kappa}.
\end{align*}
Thus, we obtain the relation
\begin{align*}
\cE^{\operatorname{in}}(\bar h_{\operatorname{ms}}) \le c_1(R_\star +\frac{\log n}{n\kappa})
\end{align*}
for some constant $c_1>0$.
Additionally, set
\begin{align*}
r_\star := (2c_1+ c_2)R_\star + (2c_1+c_2) \frac{\log n}{n\kappa}
\end{align*}
for sufficiently large \(c_2>1\).

\subsection{Oracle Inequality for \(L^2\)-Error}

For the selected model \(\bar h_{\operatorname{ms}}\), define
\begin{align*}
\cE_{L^2}(\bar h_{\operatorname{ms}}) := \bar R.
\end{align*}
For any \(\bar h_i \in \bar \cM\) with \(\cE_{L^2}(\bar h_i) > r_\star\), define the event
\begin{align*}
\Fcr_i := \{ \cE_{L^2}^{}(\bar h_i) > r_\star \,\, \cap \,\, \cE^{\operatorname{in}}(\bar h_i) \le c_1(R_\star +\frac{\log n}{n\kappa})\}
\end{align*}
and set
\begin{align*}
\Fcr = \bigcup_{\bar h_i \in \bar\cM, \cE_{L^2}(\bar h_i) > r_\star }  \Fcr_i.
\end{align*}
For sufficiently large \(c_2\), Bernstein's inequality for bounded variables (conditional on \(\cD_1\) and \(\mathcal{D}_2\)) implies that each \(\Fcr_i\) has probability at most \(n^{-12}\): when \(\cE_{L^2}(\bar h_i) > r_\star\), the gap \( \cE_{L^2}(\bar h_i)-\cE^{\operatorname{in}}(\bar h_i)\) is at least of order \(R_\star+\log n/(n\kappa)\), while the empirical fluctuation of \(\cE^{\operatorname{in}}(\bar h_i)\) is at most of order \(B\sqrt{\log n/n}+B^2\log n/n\). Since candidates are truncated to \([-B,B]\), this fluctuation is dominated by the gap. A union bound over \(L=\operatorname{poly}(n)\) candidates yields
\begin{align*}
\PP[\Fcr] &\leq \sum \PP[\Fcr_i] \\
&\leq n^{-11}
\end{align*}
On the complement of \(\Fcr\), every candidate with \(\cE_{L^2}(\bar h_i) > r_\star\) has empirical risk larger than \(c_1(R_\star+\log n/(n\kappa))\), whereas \(\bar h_{\operatorname{ms}}\) satisfies the opposite inequality by the previous display. Therefore, with probability at least \(1-n^{-11}\), we obtain
\begin{align*}
\bar R \le r_\star.
\end{align*} 
This completes the proof.

\subsection{Proof of Lemma~\ref{lemma: in-sample oracle inequality} (Oracle Inequality for Empirical \(L^2\)-Error)}
Let $\Mb_3 := (\tilde m_{3i})_{i=1}^{n_3}$ and $\Mb_3^\star := (h^\star(x_{3i}))_{i=1}^{n_3}$.
Let $\tilde\theta_a$ denote the Hilbertian element corresponding to $\tilde f_a$ for $a\in\{0,1\}$.
\begin{align}
{\mathbf{M}}_3
 &:= (\mathbf{X}_{3,0}\tilde \theta_1 -\mathbf{X}_{3,0} \theta_0^\star - \bm \varepsilon_0) + (\mathbf{X}_{3,1} \theta_1^\star + \bm \varepsilon_1 - \mathbf{X}_{3,1}\tilde\theta_0) \nonumber \\
 &= \mathbf{X}_3(\theta_1^\star - \theta_0^\star) + \bm\varepsilon' + \mathbf{X}_{3,0} (\tilde \theta_1-\theta_1^\star) - \mathbf{X}_{3,1}(\tilde \theta_0-\theta_0^\star),
\end{align}
Thus,
\begin{align*}
\mathbf{M}_3 - \mathbf{M}_3^\star
&= \underbrace{\bm \varepsilon' + \mathbf{X}_{3,0} (\mathbf{X}_{2,1}^\top \mathbf{X}_{2,1} + n \bar \lambda\mathbf{I})^{-1} \mathbf{X}_{2,1}^\top \bm{\varepsilon}_1 - \mathbf{X}_{3,1} (\mathbf{X}_{2,0}^\top \mathbf{X}_{2,0} + n \bar \lambda\mathbf{I})^{-1} \mathbf{X}_{2, 0}^\top \bm{\varepsilon}_0}_{:= \mathsf{V}_{\operatorname{ms}}} \\
&\quad + \underbrace{n \bar \lambda \mathbf{X}_{3,1} (\mathbf{X}_{2,0}^\top \mathbf{X}_{2,0} + n \bar\lambda\mathbf{I})^{-1} \theta_0^\star - n \bar \lambda \mathbf{X}_{3,0} (\mathbf{X}_{2,1}^\top \mathbf{X}_{2,1} + n \bar\lambda\mathbf{I})^{-1} \theta_1^\star}_{:= \mathsf{B}_{\operatorname{ms}}}.
\end{align*}

\paragraph{Good Event.}
Applying Lemma~\ref{lemma: trace class concentration bounded} and Remark~\ref{remark; useful observation}, for any \(\lambda' \in \{\lambda, \bar \lambda\}\) and with probability at least \(1-n^{-11}\), the following bounds hold for some absolute constant \(c_2 >1\):
\begin{equation}\label{equation: good event ms-2}
\begin{aligned}
& \frac{1}{c_2}(n \bSigma_{\cF,1} + \lambda' \Ib)  \preceq \Xb_{2,1}^\top \Xb_{2,1} +  \lambda' \Ib \preceq c_2(n \bSigma_{\cF,1} +  \lambda' \Ib)\\
& \frac{1}{c_2}(n \bSigma_{\cF,0} +  \lambda' \Ib ) \preceq \Xb_{2,0}^\top \Xb_{2,0} +  \lambda'\Ib\preceq c_2(n \bSigma_{\cF,0} + \lambda'\Ib) \\
& \frac{1}{c_2}(n \bSigma_{\cF} +  \lambda' \Ib) \preceq \Xb_2^\top \Xb_2 +  \lambda'\Ib\preceq c_2(n \bSigma_{\cF} + \lambda' \Ib).
\end{aligned}    
\end{equation}
\begin{equation}\label{equation: good event ms-3}
\begin{aligned}
& \frac{1}{c_2}(n \bSigma_{\cF,1} + \lambda' \Ib)  \preceq \Xb_{3,1}^\top \Xb_{3,1} +  \lambda' \Ib \preceq c_2(n \bSigma_{\cF,1} +  \lambda' \Ib)\\
& \frac{1}{c_2}(n \bSigma_{\cF,0} +  \lambda' \Ib ) \preceq \Xb_{3,0}^\top \Xb_{3,0} +  \lambda'\Ib\preceq c_2(n \bSigma_{\cF,0} + \lambda'\Ib) \\
& \frac{1}{c_2}(n \bSigma_{\cF} +  \lambda' \Ib) \preceq \Xb_3^\top \Xb_3 +  \lambda'\Ib\preceq c_2(n \bSigma_{\cF} + \lambda' \Ib).
\end{aligned}    
\end{equation}
We define \(\Ecr_{\operatorname{ms}}\) as the event on which the above concentration inequalities hold. Thus,
\begin{align*}
    \PP[\Ecr_{\operatorname{ms}} ] \ge 1- 6n^{-11}.
\end{align*}
Combining \cref{equation: good event 0-1,equation: good event ms-2,equation: good event ms-3}, under the event \(\Ecr_{\operatorname{ms}}\) it follows that:
\begin{equation}\label{equation: good event cross}
    \begin{aligned}
    &\frac{1}{c\kappa}(\Xb_{2,0}^\top \Xb_{2,0} +  \lambda' \Ib) \preceq  \Xb_{3,1}^\top \Xb_{3,1} +  \lambda' \Ib \preceq \frac{c}{\kappa}(\Xb_{2,0}^\top \Xb_{2,0} +  \lambda' \Ib) \\
       & \frac{1}{c\kappa}(\Xb_{3,0}^\top \Xb_{3,0} +  \lambda' \Ib) \preceq  \Xb_{2,1}^\top \Xb_{2,1} +  \lambda' \Ib \preceq \frac{c}{\kappa}(\Xb_{3,0}^\top \Xb_{3,0} +  \lambda' \Ib).
\end{aligned}
\end{equation}
for some constant \(c>1\).

To apply Lemma~\ref{lemma: loss model selection}, it suffices to bound
\begin{align*}
    \|\Mb_3 - \EE[\Mb_3]\|_{\psi_2} = \|\mathsf{V}_{\operatorname{ms}}\|_{\psi_2},
\end{align*}
as well as
\begin{align*}
    \|\EE[\Mb_3] - \Mb_3^\star\|_2 = \|\mathsf{B}_{\operatorname{ms}}\|_2.
\end{align*}

\paragraph{Bounding $\|\mathsf{V}_{\operatorname{ms}}\|_{\psi_2}$.}
Observe that
\begin{align*}
    \|\mathsf{V}_{\operatorname{ms}}\|_{\psi_2} &\lesssim 1+  \| \mathbf{X}_{3,0} (\mathbf{X}_{2,1}^\top \mathbf{X}_{2,1} + n \bar \lambda\mathbf{I})^{-1} \mathbf{X}_{2,1}^\top\|_{\operatorname{op}} + \| \mathbf{X}_{3,1} (\mathbf{X}_{2,0}^\top \mathbf{X}_{2,0} + n \bar \lambda\mathbf{I})^{-1} \mathbf{X}_{2, 0}^\top\|_{\operatorname{op}}.
\end{align*}
We bound one term; the other is analogous:
\begin{align*}
 &\| \mathbf{X}_{3,0} (\mathbf{X}_{2,1}^\top \mathbf{X}_{2,1} + n \bar \lambda\mathbf{I})^{-1} \mathbf{X}_{2,1}^\top\|_{\operatorname{op}}^2 \\
 &\lesssim \|\mathbf{X}_{2,1} (\mathbf{X}_{2,1}^\top \mathbf{X}_{2,1} + n \bar \lambda\mathbf{I})^{-1}\mathbf{X}_{3,0}^\top \mathbf{X}_{3,0} (\mathbf{X}_{2,1}^\top \mathbf{X}_{2,1} + n \bar \lambda\mathbf{I})^{-1} \mathbf{X}_{2,1}^\top\|_{\operatorname{op}} \\
 &\stackrel{(i)}{\lesssim } \frac{1}{\kappa}\|\mathbf{X}_{2,1}  (\mathbf{X}_{2,1}^\top \mathbf{X}_{2,1} + n \bar \lambda\mathbf{I})^{-1} \mathbf{X}_{2,1}^\top\|_{\operatorname{op}} \\
 &\lesssim \frac{1}{\kappa}.
\end{align*}
where step (i) holds by \cref{equation: good event cross}.
Hence,
\begin{align*}
      \|\mathsf{V}_{\operatorname{ms}}  \|_{\psi_2} \lesssim \frac{1}{\sqrt{\kappa}}.
\end{align*}

\paragraph{Bounding $\|\mathsf{B}_{\operatorname{ms}}\|_2$.}
We have
\begin{align*}
    \| \mathsf{B}_{\operatorname{ms}} \|_2 \lesssim \| n \bar \lambda \mathbf{X}_{3,1} (\mathbf{X}_{2,0}^\top \mathbf{X}_{2,0} + n \bar\lambda\mathbf{I})^{-1} \theta_0^\star  \|_2  + \|n \bar \lambda \mathbf{X}_{3,0} (\mathbf{X}_{2,1}^\top \mathbf{X}_{2,1} + n \bar\lambda\mathbf{I})^{-1} \theta_1^\star\|_2
\end{align*}
We bound the first term as
\begin{align*}
    &\| n \bar \lambda \mathbf{X}_{3,1} (\mathbf{X}_{2,0}^\top \mathbf{X}_{2,0} + n \bar\lambda\mathbf{I})^{-1} \theta_0^\star  \|_2 \\
    &\lesssim n \bar\lambda \times \|\mathbf{X}_{3,1} (\mathbf{X}_{2,0}^\top \mathbf{X}_{2,0} + n \bar\lambda\mathbf{I})^{-1} \theta_0^\star  \|_2 \\
    &\lesssim n \bar\lambda \cdot \|\mathbf{X}_{3,1} (\mathbf{X}_{2,0}^\top \mathbf{X}_{2,0} + n \bar\lambda\mathbf{I})^{-\frac{1}{2}}\|_{\operatorname{op}} \cdot \|(\mathbf{X}_{2,0}^\top \mathbf{X}_{2,0} + n \bar\lambda\mathbf{I})^{-\frac{1}{2}} \theta_0^\star  \|_2 \\
    &\stackrel{(i)}{\lesssim} \frac{1}{\sqrt{\kappa}} \sqrt{n \bar\lambda} \\
    &\lesssim \sqrt{\frac{\log n}{\kappa}}
\end{align*}
where step (i) holds by \cref{equation: good event cross}.

Conditioning on $\{x_{2i},a_{2i}\}$ and $\{x_{3i}, a_{3i}\}$, we can apply Lemma~\ref{lemma: loss model selection}, which yields the desired in-sample oracle inequality.

\begin{lemma}[Theorem 5.2 from \citealt{wang2023pseudo}]\label{lemma: loss model selection}
		Let $\left\{\boldsymbol{z}_i\right\}_{i=1}^n$ be deterministic elements in a set $\mathcal{Z}$, let $g^{\star}$ and $\left\{g_j\right\}_{j=1}^m$ be deterministic functions on $\mathcal{Z}$, and let $\widetilde{g}$ be a random function on $\mathcal{Z}$. Define
	\begin{align*}
		\mathcal{L}(g)=\frac{1}{n} \sum_{i=1}^n\left|g\left(\boldsymbol{z}_i\right)-g^{\star}\left(\boldsymbol{z}_i\right)\right|^2
	\end{align*}
	for any function $g$ on $\mathcal{Z}$. Assume that the random vector $\widetilde{\boldsymbol{y}}=\left(\widetilde{g}\left(\boldsymbol{z}_1\right), \widetilde{g}\left(\boldsymbol{z}_2\right), \cdots, \widetilde{g}\left(\boldsymbol{z}_n\right)\right)^{\top}$ satisfies $\|\widetilde{\boldsymbol{y}}-\mathbb{E} \widetilde{\boldsymbol{y}}\|_{\psi_2} \leq V<\infty$. Choose any
	\begin{align*}
			\widehat{j} \in \underset{j \in[m]}{\operatorname{argmin}}\left\{\frac{1}{n} \sum_{i=1}^n\left|g_j\left(\boldsymbol{z}_i\right)-\widetilde{g}\left(\boldsymbol{z}_i\right)\right|^2\right\}.
	\end{align*}
	
	There exists a universal constant $C$ such that for any $\delta \in(0,1]$, with probability at least $1-\delta$ we have
	\begin{align*}
			\mathcal{L}\left(g_{\widehat{j}}\right) \leq \inf _{\gamma>0}\left\{(1+\gamma) \min _{j \in[m]} \mathcal{L}\left(g_j\right)+C\left(1+\gamma^{-1}\right)\left(\mathcal{L}(\mathbb{E} \widetilde{g})+\frac{V^2 \log (m / \delta)}{n}\right)\right\}.
	\end{align*}
	
	Consequently,
	\begin{align*}
		\mathbb{E} \mathcal{L}\left(g_{\widehat{j}}\right) \leq \inf _{\gamma>0}\left\{(1+\gamma) \min _{j \in[m]} \mathcal{L}\left(g_j\right)+C\left(1+\gamma^{-1}\right)\left(\mathcal{L}(\mathbb{E} \widetilde{g})+\frac{V^2(1+\log m)}{n}\right)\right\}.
	\end{align*}    
\end{lemma}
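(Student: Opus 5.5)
The plan is a direct basic-inequality argument in the empirical norm, followed by a union bound over the $m$ candidates and a Young-type absorption step. Throughout, write every function through its vector of values at $\{\boldsymbol z_i\}$, and let $\|\cdot\|_n$ be the empirical norm, so $\|v\|_n^2=\frac1n\sum_i v_i^2$. Decompose the proxy as $\widetilde{\boldsymbol y}=\boldsymbol g^\star+\boldsymbol b+\boldsymbol\xi$, where $\boldsymbol b:=\mathbb E\widetilde{\boldsymbol y}-\boldsymbol g^\star$ and $\boldsymbol\xi:=\widetilde{\boldsymbol y}-\mathbb E\widetilde{\boldsymbol y}$. Then $\|\boldsymbol b\|_n^2=\mathcal L(\mathbb E\widetilde g)$ and $\|\boldsymbol\xi\|_{\psi_2}\le V$.

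Let $k\in\operatorname{argmin}_j\mathcal L(g_j)$; this index is deterministic. Expanding the squares in the selection criterion, $\frac1n\sum_i|g_{\widehat j}(\boldsymbol z_i)-\widetilde g(\boldsymbol z_i)|^2\le \frac1n\sum_i|g_k(\boldsymbol z_i)-\widetilde g(\boldsymbol z_i)|^2$, gives the basic inequality
\[
\mathcal L(g_{\widehat j})\le \mathcal L(g_k)+\tfrac{2}{n}\langle \boldsymbol g_{\widehat j}-\boldsymbol g_k,\ \boldsymbol b\rangle+\tfrac{2}{n}\langle \boldsymbol g_{\widehat j}-\boldsymbol g_k,\ \boldsymbol\xi\rangle .
\]
Write $a:=\mathcal L(g_{\widehat j})^{1/2}$ and $r:=\mathcal L(g_k)^{1/2}$. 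By Cauchy--Schwarz and the triangle inequality $\|\boldsymbol g_{\widehat j}-\boldsymbol g_k\|_n\le a+r$, the bias cross term is at most $2(a+r)\,\mathcal L(\mathbb E\widetilde g)^{1/2}$.

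For the noise cross term, fix $j\in[m]$. The direction $u_j:=\boldsymbol g_j-\boldsymbol g_k$ is deterministic, so $\langle u_j,\boldsymbol\xi\rangle$ is sub-Gaussian with parameter at most $V\|u_j\|_2=V\sqrt n\,\|u_j\|_n$. A sub-Gaussian tail bound together with a union bound over the $m$ indices then shows that, with probability at least $1-\delta$, simultaneously for all $j$,
\[
\tfrac1n|\langle u_j,\boldsymbol\xi\rangle|\le C V\|u_j\|_n\sqrt{\log(m/\delta)/n}.
\]
Applying this at the random index $j=\widehat j$ is legitimate precisely because the bound holds uniformly in $j$; this is the one point needing care, since $\widehat j$ depends on $\boldsymbol\xi$. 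Setting $e:=\mathcal L(\mathbb E\widetilde g)^{1/2}+V\sqrt{\log(m/\delta)/n}$, the two cross-term bounds combine into
\[
a^2\le r^2+C(a+r)e .
\]

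It remains to solve this quadratic inequality for $a^2$, which is the main (though elementary) obstacle: the constants have to be tracked to reach the exact form $(1+\gamma)r^2+C(1+\gamma^{-1})e^2$. For the $a$-term, use $Cae\le \tfrac{\gamma}{2(1+\gamma)}a^2+\tfrac{C^2(1+\gamma)}{2\gamma}e^2$ and move the $a^2$ part to the left. For the $r$-term, use $Cre\le \tfrac{\gamma}{2}r^2+\tfrac{C^2}{2\gamma}e^2$. Rearranging, and using $e^2\le 2\mathcal L(\mathbb E\widetilde g)+2V^2\log(m/\delta)/n$, gives
\[
a^2\le (1+\gamma)r^2+C'(1+\gamma^{-1})\bigl(\mathcal L(\mathbb E\widetilde g)+V^2\log(m/\delta)/n\bigr),
\]
where the constants are adjusted by reparametrizing $\gamma$. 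Taking the infimum over $\gamma>0$ is harmless, because the high-probability event from the union bound does not depend on $\gamma$.

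For the expectation statement, fix $\gamma$ and consider the excess $Z:=\bigl(\mathcal L(g_{\widehat j})-(1+\gamma)r^2-C(1+\gamma^{-1})\mathcal L(\mathbb E\widetilde g)\bigr)_+$. The high-probability bound shows that $Z$ has an exponential tail of the form $\mathbb P\bigl(Z>C(1+\gamma^{-1})V^2(\log m+t)/n\bigr)\le e^{-t}$. Integrating this tail gives $\mathbb EZ\lesssim(1+\gamma^{-1})V^2(1+\log m)/n$, and the infimum over $\gamma$ is then taken outside the expectation.
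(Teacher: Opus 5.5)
Your proof is correct. The paper gives no proof of this lemma: it is imported verbatim from \citet{wang2023pseudo} and used only as a black box in the proof of Lemma~\ref{lemma: in-sample oracle inequality}. There the inputs are $V\lesssim\kappa^{-1/2}$ and the bound on $\|\mathsf{B}_{\operatorname{ms}}\|_2$.

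Your route is the standard one for oracle inequalities of this kind: the basic inequality, a union bound over the deterministic directions $\boldsymbol g_j-\boldsymbol g_k$, and Young-type absorption. The union bound is exactly what makes the data-dependent index $\widehat j$ harmless. What your write-up adds over the citation is a self-contained derivation with explicit constants.

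The one modeling input you rely on is that $\langle u,\boldsymbol\xi\rangle$ is sub-Gaussian with parameter $V\|u\|_2$ for each fixed $u$. That is precisely the directional definition of the vector $\psi_2$-norm. It is also the definition the paper implicitly uses when it bounds $\|\mathsf{V}_{\operatorname{ms}}\|_{\psi_2}$ through operator norms.

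Two cosmetic remarks:
\begin{itemize}
\item Your two Young inequalities already give exactly $a^2\le(1+\gamma)r^2+C^2(1+\gamma^{-1})e^2$, so no reparametrization of $\gamma$ is needed.
\item The union bound actually produces $\log(2m/\delta)$. This is at most $2\log(m/\delta)$ when $m\ge 2$, and the case $m=1$ is trivial because then $\widehat j=k$.
\end{itemize}
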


\section{Technical Lemmas}

\begin{lemma}[Key Inequality for Point Evaluation: Sobolev Space]\label{lemma: Sobolev 1}
Let \(K\) be a Sobolev kernel on \(H^\beta(\cX)\) with feature map \(\phi(\cdot)\), and let \(\bSigma = \int_{x \in \cX} \phi(x)\phi(x)^\top \mathrm{d}x\). Suppose \(\cX \subset \RR^d\). For any \(x_0 \in \cX\) and any \(\lambda>0\), we have
\begin{align*}
\phi(x_0)^\top (\bSigma + \lambda \Ib)^{-1} \phi(x_0) \lesssim  \lambda^{-\frac{d}{2\beta}}.
\end{align*}
Hence,
\begin{align*}
\phi(x_0)\phi(x_0)^\top \lesssim \lambda^{-\frac{d}{2\beta}} (\bSigma + \lambda \Ib).
\end{align*}
\end{lemma}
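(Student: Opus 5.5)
The plan is a variational characterization of the leverage score followed by a Sobolev interpolation (Gagliardo--Nirenberg) inequality. For $\lambda>0$, $\bSigma+\lambda\Ib$ is bounded and invertible on $\mathbb{F}$, so Cauchy--Schwarz in the $(\bSigma+\lambda\Ib)$-inner product gives
\begin{align*}
\phi(x_0)^\top(\bSigma+\lambda\Ib)^{-1}\phi(x_0)
=\sup_{\theta\neq 0}\frac{\langle \phi(x_0),\theta\rangle^2}{\langle \theta,(\bSigma+\lambda\Ib)\theta\rangle}
=\sup_{f\in H^\beta(\cX),\,f\neq 0}\frac{|f(x_0)|^2}{\|f\|_{L^2(\cX)}^2+\lambda\|f\|_{H^\beta}^2}.
\end{align*}
The second equality uses the reproducing property, $f(x_0)=\langle\phi(x_0),\theta\rangle$, and the identity $\theta^\top\bSigma\theta=\int_\cX f(x)^2\,\mathrm{d}x$, which holds by definition of $\bSigma$. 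So the first claim reduces to a pointwise bound on $|f(x_0)|^2$ in terms of the denominator, uniformly in $x_0$.

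The key step is the interpolation inequality on a regular (e.g.\ Lipschitz, bounded) domain: for $\beta>d/2$,
\[
\|f\|_{L^\infty(\cX)}\lesssim \|f\|_{L^2(\cX)}^{1-\frac{d}{2\beta}}\,\|f\|_{H^\beta(\cX)}^{\frac{d}{2\beta}}.
\]
On $\RR^d$ I would prove this by Fourier splitting at a frequency cutoff $R$:
\[
|f(x_0)|\le\int_{|\omega|\le R}|\hat f|+\int_{|\omega|>R}|\hat f|\lesssim R^{d/2}\|f\|_{L^2}+R^{d/2-\beta}\|f\|_{H^\beta},
\]
then optimize in $R$. To pass to $\cX$, I would invoke a bounded extension operator (Stein) $E:H^\beta(\cX)\to H^\beta(\RR^d)$ that is simultaneously bounded $L^2(\cX)\to L^2(\RR^d)$. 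This simultaneous boundedness is the main technical point, and I would simply cite it for regular domains.

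With $a=\|f\|_{L^2}^2$ and $b=\|f\|_{H^\beta}^2$, set $t=d/(2\beta)\in(0,1)$. Young's inequality gives
\[
a^{1-t}b^{t}=\lambda^{-t}a^{1-t}(\lambda b)^{t}\le \lambda^{-t}\bigl((1-t)a+t\lambda b\bigr)\le \lambda^{-t}(a+\lambda b),
\]
so $|f(x_0)|^2\lesssim\lambda^{-d/(2\beta)}(a+\lambda b)$. This yields the first display for $\lambda\le 1$. For $\lambda>1$ the leverage score is at most $\xi/\lambda\le\xi$, and the claim is trivial up to constants once restricted to bounded $\lambda$ (the only regime used). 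If one insists on all $\lambda$, the claim is adjusted to $\lambda^{-d/(2\beta)}\vee\lambda^{-1}$. I would add a remark to this effect rather than hide it.

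The second claim follows from the first. For any $\theta$,
\[
\langle\theta,\phi(x_0)\rangle^2=\bigl\langle(\bSigma+\lambda\Ib)^{1/2}\theta,(\bSigma+\lambda\Ib)^{-1/2}\phi(x_0)\bigr\rangle^2\le \phi(x_0)^\top(\bSigma+\lambda\Ib)^{-1}\phi(x_0)\cdot\theta^\top(\bSigma+\lambda\Ib)\theta,
\]
which is exactly the operator inequality $\phi(x_0)\phi(x_0)^\top\preceq Q\,(\bSigma+\lambda\Ib)$ with $Q\lesssim\lambda^{-d/(2\beta)}$.

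The main obstacle is the domain issue, namely uniformity of the interpolation constant up to the boundary. The extension-operator argument handles it, and the norm equivalence of the kernel's RKHS with $H^\beta$ absorbs the remaining constants.
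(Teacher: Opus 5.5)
Your proposal is correct and is essentially the paper's argument. The paper plugs in the maximizer $\theta=(\bSigma+\lambda\Ib)^{-1}\phi(x_0)$ of your Rayleigh quotient, uses $g_\theta(x_0)=Q=\|g_\theta\|_{L^2}^2+\lambda\|g_\theta\|_{H^\beta}^2$, and self-bounds through the same interpolation inequality; it then gets the operator inequality by the same Cauchy--Schwarz step.

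One correction: your Young's-inequality step already holds for every $\lambda>0$, and in any case $Q\le\xi/\lambda\le\xi\lambda^{-d/(2\beta)}$ for $\lambda\ge 1$, so you need neither the restriction to $\lambda\le 1$ nor an adjusted claim. Indeed $\lambda^{-d/(2\beta)}\vee\lambda^{-1}$ would be strictly weaker for $\lambda<1$ and would destroy the downstream rates at $\lambda\asymp (n\kappa)^{-1}$.
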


\begin{proof}
Let $Q = \phi(x_0)^\top (\bSigma + \lambda \Ib)^{-1} \phi(x_0)$. Our goal is to find an upper bound for $Q$.

Let $\theta \in \HH$ be defined as $\theta = (\bSigma + \lambda \Ib)^{-1} \phi(x_0)$.
The quantity $Q$ can be written as an inner product in the RKHS $\HH$:
$$
Q = \inner{\phi(x_0)}{(\bSigma + \lambda \Ib)^{-1} \phi(x_0)}_{\HH} = \inner{\phi(x_0)}{\theta}_{\HH}.
$$
From the definition of $\theta$, we have $(\bSigma + \lambda \Ib)\theta = \phi(x_0)$. 
Taking the inner product with $\theta$ on both sides gives:
$$
\inner{(\bSigma + \lambda \Ib)\theta}{\theta}_{\HH} = \inner{\phi(x_0)}{\theta}_{\HH}.
$$
This simplifies to:
$$
\inner{\bSigma\theta}{\theta}_{\HH} + \lambda\inner{\theta}{\theta}_{\HH} = Q.
$$
Let $g_\theta(x) = \inner{\phi(x)}{\theta}_{\HH}$ be the function in the Sobolev space $H^\beta(\mathcal{X})$ corresponding to $\theta \in \HH$. By the reproducing property, $g_\theta(x_0) = \inner{\phi(x_0)}{\theta}_{\HH} = Q$.

The norms of $g_\theta$ can be related to the expression for $Q$:
\begin{enumerate}
\item The squared $H^\beta$ norm is: $\|g_\theta\|_{H^\beta(\mathcal{X})}^2 = \|\theta\|_{\HH}^2 = \inner{\theta}{\theta}_{\HH}$.
\item The squared $L^2$ norm is: $\|g_\theta\|_{L^2(\mathcal{X})}^2 = \int_{\mathcal{X}} |g_\theta(x)|^2 dx = \int_{\mathcal{X}} \inner{\phi(x)}{\theta}_{\HH}^2 dx \\ = \inner{\left(\int_{\mathcal{X}}\phi(x)\phi(x)^\top dx\right)\theta}{\theta}_{\HH} = \inner{\bSigma\theta}{\theta}_{\HH}$.
\end{enumerate}
Substituting these into the expression for $Q$, we get:
$$
\|g_\theta\|_{L^2(\mathcal{X})}^2 + \lambda \|g_\theta\|_{H^\beta(\mathcal{X})}^2 = Q.
$$
Since both norm terms are non-negative, we can establish bounds for each:
$$
\|g_\theta\|_{L^2(\mathcal{X})}^2 \le Q \implies \|g_\theta\|_{L^2(\mathcal{X})} \le \sqrt{Q}
$$
$$
\lambda \|g_\theta\|_{H^\beta(\mathcal{X})}^2 \le Q \implies \|g_\theta\|_{H^\beta(\mathcal{X})} \le \sqrt{\frac{Q}{\lambda}}
$$
Now apply the interpolation inequality to $g_\theta$. Let $q = \frac{d}{2\beta}$.
$$
Q = g_\theta(x_0) \le \|g_\theta\|_{L^\infty(\mathcal{X})} \le C \|g_\theta\|_{L^2(\mathcal{X})}^{1-q} \|g_\theta\|_{H^\beta(\mathcal{X})}^{q}.
$$
Substitute the bounds we found for the norms:
$$
Q \le C \left(\sqrt{Q}\right)^{1-q} \left(\sqrt{\frac{Q}{\lambda}}\right)^{q}
$$
and hence
$$
Q^{\frac{1}{2}} \le C \cdot \lambda^{-\frac{q}{2}}
$$
which completes the first inequality.
\begin{align*}
\phi(x_0)^\top (\bSigma_{} +\lambda \Ib)^{-1 } \phi(x_0) \lesssim \lambda^{-\frac{d}{2\beta}}.
\end{align*}
For the second inequality, let \(v\in \HH\). Then
\begin{align*}
\langle \phi(x_0)\phi(x_0)^\top v, v\rangle_\HH
&= |\langle \phi(x_0), v\rangle_\HH|^2 \\
&= \Big|\big\langle (\bSigma+\lambda \Ib)^{-1/2}\phi(x_0),(\bSigma+\lambda \Ib)^{1/2}v\big\rangle_\HH\Big|^2 \\
&\le \phi(x_0)^\top (\bSigma+\lambda \Ib)^{-1}\phi(x_0)\cdot \langle(\bSigma+\lambda \Ib)v,v\rangle_\HH \\
&\lesssim \lambda^{-d/(2\beta)} \langle(\bSigma+\lambda \Ib)v,v\rangle_\HH.
\end{align*}
Since this holds for all \(v\), we conclude
\[
\phi(x_0)\phi(x_0)^\top \lesssim \lambda^{-d/(2\beta)}(\bSigma+\lambda\Ib).
\]
\end{proof}

\begin{lemma}[Corollary E.1 from \citealt{wang2023pseudo}]\label{lemma: trace class concentration bounded}
Let $\left\{\boldsymbol{x}_i\right\}_{i=1}^n$ be i.i.d.\ random elements in a separable Hilbert space $\mathbb{H}$ with $\boldsymbol{\bSigma}:= \mathbb{E}\left(\boldsymbol{x}_i \otimes \boldsymbol{x}_i\right)$ trace class. Define $\hat{\boldsymbol{\bSigma}}=\frac{1}{n} \sum_{i=1}^n \boldsymbol{x}_i \otimes \boldsymbol{x}_i$. Choose any constant $\gamma \in(0,1)$ and define the event $\mathcal{A}=\{(1-\gamma)(\boldsymbol{\bSigma}+\lambda \boldsymbol{I}) \preceq \hat{\boldsymbol{\bSigma}}+\lambda \boldsymbol{I} \preceq(1+\gamma)(\boldsymbol{\bSigma}+\lambda \boldsymbol{I})\}$.
1. If $\left\|\boldsymbol{x}_i\right\|_{\mathbb{H}} \leq \xi$ holds almost surely for some constant $\xi$, then there exists a constant $C \geq 1$ determined by $\gamma$ such that $\mathbb{P}(\mathcal{A}) \geq 1-\delta$ holds so long as $\delta \in(0,1 / 14]$ and $\lambda \geq \frac{C \xi \log (n / \delta)}{n}$.    
Hence, with probability at least \(1-\delta\), we have
\begin{align*}
\frac{1}{C\xi}(\widehat \bSigma + \log(n/\delta) \Ib ) \preceq \bSigma + \log(n/\delta) \Ib \preceq C\xi(\widehat \bSigma + \log(n/\delta) \Ib ).
\end{align*}
\end{lemma}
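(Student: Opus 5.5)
The plan is to reduce the two-sided operator inequality defining $\mathcal{A}$ to a single operator-norm deviation bound after whitening. Write $\boldsymbol{\Sigma}_\lambda := \boldsymbol{\Sigma}+\lambda\boldsymbol{I}$ and define the whitened rank-one operators
\[
\boldsymbol{Z}_i := \boldsymbol{\Sigma}_\lambda^{-1/2}(\boldsymbol{x}_i\otimes\boldsymbol{x}_i)\boldsymbol{\Sigma}_\lambda^{-1/2}, \qquad \mathbb{E}\boldsymbol{Z}_i = \boldsymbol{S} := \boldsymbol{\Sigma}_\lambda^{-1/2}\boldsymbol{\Sigma}\boldsymbol{\Sigma}_\lambda^{-1/2}.
\]
The event $\mathcal{A}$ is implied by
\[
\Bigl\|\tfrac1n\textstyle\sum_{i}\boldsymbol{Z}_i-\boldsymbol{S}\Bigr\|_{\mathrm{op}}\le\gamma .
\]
To see this, conjugate both sides of $\mathcal{A}$ by $\boldsymbol{\Sigma}_\lambda^{-1/2}$. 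The condition becomes $(1-\gamma)\boldsymbol{I}\preceq \boldsymbol{\Sigma}_\lambda^{-1/2}(\hat{\boldsymbol{\Sigma}}+\lambda\boldsymbol{I})\boldsymbol{\Sigma}_\lambda^{-1/2}\preceq(1+\gamma)\boldsymbol{I}$. The middle operator equals $\boldsymbol{I}+\bigl(\tfrac1n\sum_i\boldsymbol{Z}_i-\boldsymbol{S}\bigr)$.

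Next I collect the ingredients for a matrix Bernstein inequality. Since $\|\boldsymbol{x}_i\|\le\xi$, each $\boldsymbol{Z}_i$ is positive with $\|\boldsymbol{Z}_i\|_{\mathrm{op}}=\langle\boldsymbol{x}_i,\boldsymbol{\Sigma}_\lambda^{-1}\boldsymbol{x}_i\rangle\le \xi^2/\lambda=:L$. This gives the uniform bound $\|\boldsymbol{Z}_i-\boldsymbol{S}\|_{\mathrm{op}}\le 2L$ (using $\|\boldsymbol{S}\|_{\mathrm{op}}\le1\le L$ in the relevant regime). For the variance,
\[
\mathbb{E}(\boldsymbol{Z}_i-\boldsymbol{S})^2\preceq\mathbb{E}\boldsymbol{Z}_i^2\preceq L\,\mathbb{E}\boldsymbol{Z}_i=L\boldsymbol{S}=:\boldsymbol{V}.
\]
I will then apply the intrinsic-dimension (dimension-free) Bernstein inequality for self-adjoint operators on a separable Hilbert space (Minsker; Tropp, Ch.~7). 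This is necessary because $\mathbb{H}$ may be infinite dimensional, so an ambient-dimension factor is unavailable. It gives, for $t\ge \sqrt{\|\boldsymbol{V}\|/n}+L/(3n)$,
\[
\mathbb{P}\Bigl(\Bigl\|\tfrac1n\textstyle\sum_i\boldsymbol{Z}_i-\boldsymbol{S}\Bigr\|_{\mathrm{op}}\ge t\Bigr)\le 14\,\frac{\operatorname{Tr}\boldsymbol{V}}{\|\boldsymbol{V}\|_{\mathrm{op}}}\exp\Bigl(-\frac{nt^2/2}{L\|\boldsymbol{S}\|_{\mathrm{op}}+2Lt/3}\Bigr).
\]
This is where the constant $14$ and the restriction $\delta\le 1/14$ enter.

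The main obstacle is controlling the intrinsic dimension $\operatorname{Tr}\boldsymbol{S}/\|\boldsymbol{S}\|_{\mathrm{op}}$ so that it is at most polynomial in $n$, so that it is absorbed into $\log(n/\delta)$. I will split into two cases.

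\emph{Case $\lambda\le\|\boldsymbol{\Sigma}\|_{\mathrm{op}}$.} Here $\|\boldsymbol{S}\|_{\mathrm{op}}\ge 1/2$. Also $\operatorname{Tr}\boldsymbol{S}\le\operatorname{Tr}\boldsymbol{\Sigma}/\lambda\le\xi^2/\lambda$. So the intrinsic dimension is at most $2\xi^2/\lambda$, which is at most $n$ once $\lambda\ge\xi^2/n$.

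\emph{Case $\lambda>\|\boldsymbol{\Sigma}\|_{\mathrm{op}}$.} Here I would replace $\boldsymbol{V}$ by the dominating operator $\boldsymbol{V}'=L(\boldsymbol{S}+c\,\boldsymbol{P})$, with $\boldsymbol{P}$ a rank-one projection onto a top eigenvector. This forces $\|\boldsymbol{V}'\|\asymp L$ while keeping the trace $O(L\xi^2/\lambda)$. This is the standard trick; alternatively one can note $\|\boldsymbol{S}\|\le 1/2$ and use the Frobenius/Hilbert–Schmidt Bernstein bound directly.

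Plugging $t=\gamma$ and $L=\xi^2/\lambda$ into the tail bound, the right-hand side is at most $\delta$ as soon as $n\gamma^2\lambda/\xi^2\gtrsim\log(n/\delta)$. That is, the bound holds whenever $\lambda\ge C\xi^2\log(n/\delta)/n$, with $C$ depending only on $\gamma$. Under the paper's normalization, $\xi$ is a bound on $K(x,x)=\|\boldsymbol{x}_i\|^2$ (Assumption~\ref{assumption; boundedness}), so this is exactly the condition stated as $\lambda\ge C\xi\log(n/\delta)/n$.

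The final ``hence'' display then follows from $\mathcal{A}$ by fixing $\gamma=1/2$ and taking $\lambda$ at the threshold value $C\xi\log(n/\delta)/n$. The resulting sandwich is rescaled, multiplying through by $n$ and absorbing $C$ and $\xi$ into the constant on each side. This is exactly the form used in \cref{equation: good event 0-2} via Remark~\ref{remark; useful observation}.
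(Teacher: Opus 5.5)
Your proposal is correct and is essentially the argument the paper relies on. The paper imports this lemma from \citet{wang2023pseudo} without proof, but its own analogous proof of Lemma~\ref{lemma: second moment concentration sobolev} uses the same whitening plus Minsker-type Bernstein template, with your bound $\xi^2/\lambda$ playing the role of $\mathcal{N}_\infty(\lambda)$. Your reading of $\xi$ as a bound on $\|\boldsymbol{x}_i\|^2$, and your rank-one inflation of the variance proxy when $\lambda>\|\bSigma\|_{\operatorname{op}}$, correctly patch points the paper glosses over.

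Do drop the parenthetical Hilbert--Schmidt alternative, though. Take $\boldsymbol{x}_i$ uniform on $\{\xi e_1,\dots,\xi e_m\}$ with orthonormal $e_j$ and $m\gg n$. At the threshold $\lambda\asymp\xi^2\log(n/\delta)/n$, the Hilbert--Schmidt norm of $\frac1n\sum_i\boldsymbol{Z}_i-\boldsymbol{S}$ is then of order $L/\sqrt{n}\asymp\sqrt{n}/\log(n/\delta)$, which is not small. So only the operator-norm bound with the dominating variance proxy delivers the stated condition.
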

\begin{remark}[A useful observation]\label{remark; useful observation}
For any two trace-class operators \(\Sb_1, \Sb_2\), suppose that
	\begin{align*}
		\Sb_1  \preceq c_1(\Sb_2 + c_2\log n \Ib)
	\end{align*}
	for some absolute constant \(c_1>1, c_2 >0\). 
	Then, for any \(c_3 >0\), for \(c = \max(1, c_2 / c_3)\), we have
	\begin{align*}
		\Sb_1 \preceq c(\Sb_2 + c_3 \log n \Ib).
	\end{align*}
\end{remark}

The following lemma is closely related to results in \citet{tuo2024asymptotic}; we include a self-contained proof for completeness.
\begin{lemma}[Concentration of Second Moments: Sobolev]\label{lemma: second moment concentration sobolev}
Let $K$ be a Sobolev kernel for $H^\beta(\cX)$, where $\cX \subset \mathbb{R}^d$, and let $\mathbb{H}$ be the corresponding RKHS. Let $\left\{\boldsymbol{x}_i\right\}_{i=1}^n$ be i.i.d. random elements in $\mathbb{H}$ with $\boldsymbol{\bSigma}=\mathbb{E}\left(\boldsymbol{x}_i \otimes \boldsymbol{x}_i\right)$ trace class. Define $\hat{\boldsymbol{\bSigma}}=\frac{1}{n} \sum_{i=1}^n \boldsymbol{x}_i \otimes \boldsymbol{x}_i$. 
Fix a failure probability $\delta \in (0,1)$. If $n$ is sufficiently large, there exists an absolute constant $c > 1$ such that with probability at least $1-\delta$,
\begin{align*}
\frac{1}{c}(\bSigma + \log(n/\delta) n^{-\frac{2\beta}{d}} \Ib) \preceq (\widehat{\bSigma} + \log(n/\delta) n^{-\frac{2\beta}{d}} \Ib) \preceq c(\bSigma + \log(n/\delta) n^{-\frac{2\beta}{d}} \Ib).
\end{align*}
\end{lemma}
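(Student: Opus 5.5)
We need to prove Lemma (Concentration of second moments, Sobolev): with regularizer $\lambda_n := \log(n/\delta)\, n^{-2\beta/d}$, the empirical second moment satisfies the two-sided multiplicative bound. The plan is the standard effective-dimension operator-Bernstein argument, with the Sobolev leverage-score bound of Lemma~\ref{lemma: Sobolev 1} replacing the crude bound $\|\boldsymbol{x}_i\|^2\le \xi$.

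\textbf{Step 1: reduce to an operator-norm bound.} The two-sided inequality with constant $c$ (say $c=2$) follows from
\[
\bigl\|(\bSigma+\lambda_n\Ib)^{-1/2}(\widehat\bSigma-\bSigma)(\bSigma+\lambda_n\Ib)^{-1/2}\bigr\|_{\op}\le \tfrac12 .
\]
Write this operator as $\frac1n\sum_i (\Zb_i-\EE\Zb_i)$, where
\[
\Zb_i := (\bSigma+\lambda_n\Ib)^{-1/2}\boldsymbol{x}_i\otimes \boldsymbol{x}_i(\bSigma+\lambda_n\Ib)^{-1/2},
\]
with $\boldsymbol{x}_i=\phi(x_i)$.

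\textbf{Step 2: uniform leverage bound.} Since the density of $x_i$ is bounded above and below, $\bSigma$ is comparable (in Loewner order) to $\int\phi\phi^\top dx$. Lemma~\ref{lemma: Sobolev 1} then gives
\[
\|\Zb_i\|_{\op}=\phi(x_i)^\top(\bSigma+\lambda_n\Ib)^{-1}\phi(x_i)\le L:=C\lambda_n^{-d/(2\beta)}
\]
almost surely. For the variance, use $\EE\Zb_i^2\preceq L\,\EE\Zb_i=L\,\Sb$ with $\Sb:=(\bSigma+\lambda_n\Ib)^{-1}\bSigma$. This gives $\|\Sb\|_{\op}\le1$ and $\Tr(\Sb)\lesssim\lambda_n^{-d/(2\beta)}$, using the Sobolev eigenvalue decay $j^{-2\beta/d}$.

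\textbf{Step 3: apply the intrinsic-dimension matrix Bernstein inequality.} The bound from Tropp / Minsker, valid in separable Hilbert spaces, gives, with probability at least $1-\delta$,
\[
\Bigl\|\tfrac1n\textstyle\sum_i(\Zb_i-\EE\Zb_i)\Bigr\|_{\op}\lesssim \sqrt{\frac{L\log(\Tr(\Sb)/\delta)}{n}}+\frac{L\log(\Tr(\Sb)/\delta)}{n}.
\]
The key check, and the main obstacle, is that $L\log(\cdot)/n$ is small. Substituting $\lambda_n$ gives
\[
L\approx \lambda_n^{-d/(2\beta)}=(\log(n/\delta))^{-d/(2\beta)}\,n .
\]
So $L\log(n/\delta)/n\approx(\log(n/\delta))^{1-d/(2\beta)}$. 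Since $\beta>d/2$, we have $1-d/(2\beta)>0$, so this quantity is not automatically small; the exponents only balance up to a constant.

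\textbf{Resolving the obstacle.} I would absorb this by inflating the regularizer by a large constant factor $C_0$, i.e.\ working with $C_0\lambda_n$. Then $L$ shrinks by the factor $C_0^{-d/(2\beta)}$. Remark~\ref{remark; useful observation} converts between $C_0\lambda_n$ and $\lambda_n$ at the cost of constants only, which absorbs the residual logarithmic power.

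If that fails, I would fall back on Lemma~\ref{lemma: trace class concentration bounded} applied to the features. The point is that the Sobolev leverage bound replaces $\xi$ by the $\lambda$-dependent quantity $L$, and the requirement $\lambda\gtrsim L\log(n/\delta)/n$ then becomes the fixed-point condition $\lambda^{1+d/(2\beta)}\gtrsim \log(n/\delta)/n$. This condition is solved by a $\lambda$ no larger than $\log(n/\delta)\,n^{-2\beta/d}$ once $n$ is large, since $2\beta/d>1$. That is exactly what ``$n$ sufficiently large'' is meant to buy.

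Finally, the lower inequality follows from the same operator-norm bound, giving $\widehat\bSigma+\lambda_n\Ib\succeq\frac12(\bSigma+\lambda_n\Ib)$.
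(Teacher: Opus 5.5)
Your Steps 1--3 follow the paper's argument: operator Bernstein for $(\bSigma+\lambda\Ib)^{-1/2}(\widehat\bSigma-\bSigma)(\bSigma+\lambda\Ib)^{-1/2}$ combined with the Sobolev leverage bound of Lemma~\ref{lemma: Sobolev 1}. Your computation $\mathcal{N}_\infty(\lambda_n)\log(n/\delta)/n\asymp(\log(n/\delta))^{1-d/(2\beta)}$ is correct. It is exactly where the paper's own proof slips: the paper asserts $\mathcal{N}_\infty(\lambda)\lesssim n/(c_0\log(n/\delta))$ for $\lambda=c_0\log(n/\delta)n^{-2\beta/d}$, but $\lambda^{-d/(2\beta)}=n\,(c_0\log(n/\delta))^{-d/(2\beta)}$, which is larger because $d/(2\beta)<1$.

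The gap in your proposal is that neither of your resolutions works. Replacing $\lambda_n$ by $C_0\lambda_n$ only multiplies the Bernstein terms by the fixed factors $C_0^{-d/(4\beta)}$ and $C_0^{-d/(2\beta)}$. Meanwhile $(\log(n/\delta))^{1-d/(2\beta)}\to\infty$, and Remark~\ref{remark; useful observation} only trades constants at a fixed scale, so it cannot absorb a diverging power of $\log(n/\delta)$. The fallback is miscomputed. The condition $\lambda\ge C\xi\log(n/\delta)/n$ in Lemma~\ref{lemma: trace class concentration bounded} says the normalized leverage $\xi/\lambda$ is at most $n/(C\log(n/\delta))$. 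Inserting the Sobolev leverage bound therefore gives $\lambda^{-d/(2\beta)}\lesssim n/\log(n/\delta)$, i.e.
\[
\lambda\gtrsim\Bigl(\frac{\log(n/\delta)}{n}\Bigr)^{2\beta/d},
\]
not $\lambda^{1+d/(2\beta)}\gtrsim\log(n/\delta)/n$. In either form, $\lambda_n$ fails the condition. Precisely because $2\beta/d>1$, the threshold above exceeds $\lambda_n=\log(n/\delta)n^{-2\beta/d}$ by the diverging factor $(\log(n/\delta))^{2\beta/d-1}$. Your version would demand even more, namely $\lambda\gtrsim(\log(n/\delta)/n)^{2\beta/(2\beta+d)}$. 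So your claim that the condition is ``solved by a $\lambda$ no larger than $\lambda_n$ since $2\beta/d>1$'' runs in the wrong direction.

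No sharper concentration tool can close this, because the lemma as stated fails at scale $\lambda_n$ for fixed $\delta$ and large $n$. To see this, let $r^{2\beta}=A\lambda_n$ for a large constant $A$.
\begin{itemize}
\item A ball of radius $r$ has expected sample count $\asymp nr^d\asymp A^{d/(2\beta)}(\log(n/\delta))^{d/(2\beta)}=o(\log n)$.
\item $\mathcal{X}$ contains $\asymp r^{-d}$ disjoint such balls, so the expected number of empty ones diverges. A second-moment argument then shows that, with probability tending to one, some interior ball $B(x_0,r)$ contains no sample.
\item For the rescaled bump $g=\psi((\cdot-x_0)/r)$ we have $\frac1n\sum_i g(x_i)^2=0$, $\|g\|_{L^2}^2\asymp r^d$ and $\|g\|_{H^\beta}^2\lesssim r^{d-2\beta}$.
\item Hence $\langle g,(\widehat\bSigma+\lambda_n\Ib)g\rangle\lesssim\lambda_n r^{d-2\beta}=A^{-1}r^d\lesssim A^{-1}\langle g,(\bSigma+\lambda_n\Ib)g\rangle$.
\end{itemize}
This violates the lower Loewner bound for any fixed $c$ once $A$ is large.

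The correct statement uses $\lambda=(C\log(n/\delta)/n)^{2\beta/d}$. Then $\mathcal{N}_\infty(\lambda)\lesssim n/(C\log(n/\delta))$, the Bernstein terms are $\lesssim C^{-1/2}+C^{-1}$, and your Steps 1--3 go through unchanged. Downstream, the admissible range for $\bar\lambda$ in the Sobolev remark changes only by polylogarithmic factors.
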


\begin{proof}
To prove the spectral equivalence, it suffices to show that
\begin{align*}
\| (\bSigma +\lambda\Ib)^{-\frac{1}{2}} (\widehat{\bSigma} - \bSigma) (\bSigma +\lambda\Ib)^{-\frac{1}{2}} \|_{\operatorname{op}} \leq \frac{1}{2},
\end{align*}
for \(\lambda \asymp \log(n/\delta)\, n^{-\frac{2\beta}{d}}\). We verify this via the Matrix Bernstein inequality \citep{minsker2017some} applied to the sum of independent zero-mean random operators $\sum_{i=1}^n \Bb_i$, where
\begin{align*}
\Bb_i := \frac{1}{n}(\bSigma +\lambda\Ib)^{-\frac{1}{2}}(\phi(x_i)\phi(x_i)^\top -\bSigma) (\bSigma +\lambda\Ib)^{-\frac{1}{2}}.
\end{align*}
Note that $\widehat{\bSigma} - \bSigma = \sum_{i=1}^n \Bb_i$. We bound the operator norm and variance of $\Bb_i$.

\paragraph{1. Operator Norm Bound}
Let $\Cb_i = (\bSigma +\lambda\Ib)^{-\frac{1}{2}}\phi(x_i)\phi(x_i)^\top (\bSigma +\lambda\Ib)^{-\frac{1}{2}}$. By Lemma~\ref{lemma: Sobolev 1}, the effective dimension (or leverage score) is bounded as
\begin{align*}
\mathcal{N}_\infty(\lambda) := \sup_{x \in \cX} \| (\bSigma +\lambda\Ib)^{-\frac{1}{2}}\phi(x) \|_2^2 \lesssim \lambda^{-\frac{d}{2\beta}}.
\end{align*}
Thus, $\|\Cb_i\|_{\operatorname{op}} = \operatorname{tr}(\Cb_i) \lesssim \lambda^{-\frac{d}{2\beta}}$. Since $\|\Bb_i\|_{\operatorname{op}} \le \frac{1}{n} \max(\|\Cb_i\|_{\operatorname{op}}, \|\EE[\Cb_i]\|_{\operatorname{op}})$, we have
\begin{align*}
\| \Bb_i \|_{\operatorname{op}} \lesssim \frac{1}{n} \lambda^{-\frac{d}{2\beta}}.
\end{align*}

\paragraph{2. Variance Bound}
We bound the second moment. Using $\EE[\Bb_i^2] \preceq \frac{1}{n^2} \EE[\Cb_i^2]$ (variance is bounded by the second moment), we have
\begin{align*}
\EE[\Cb_i^2] &= \EE \left[(\bSigma +\lambda\Ib)^{-\frac{1}{2}}\phi(x_i)\phi(x_i)^\top (\bSigma +\lambda\Ib)^{-1} \phi(x_i)\phi(x_i)^\top (\bSigma +\lambda\Ib)^{-\frac{1}{2}} \right] \\
&= \EE \left[ \underbrace{\left( \phi(x_i)^\top (\bSigma +\lambda\Ib)^{-1} \phi(x_i) \right)}_{\text{Scalar } \le \mathcal{N}_\infty(\lambda)} (\bSigma +\lambda\Ib)^{-\frac{1}{2}}\phi(x_i)\phi(x_i)^\top (\bSigma +\lambda\Ib)^{-\frac{1}{2}} \right] \\
&\preceq \mathcal{N}_\infty(\lambda) \cdot \EE \left[ (\bSigma +\lambda\Ib)^{-\frac{1}{2}}\phi(x_i)\phi(x_i)^\top (\bSigma +\lambda\Ib)^{-\frac{1}{2}} \right] \\
&= \mathcal{N}_\infty(\lambda) \cdot (\bSigma +\lambda\Ib)^{-\frac{1}{2}} \bSigma (\bSigma +\lambda\Ib)^{-\frac{1}{2}}.
\end{align*}
Since $\bSigma \preceq \bSigma + \lambda \Ib$, the matrix term is bounded by $\Ib$. Therefore, the variance parameter $\sigma^2$ for Bernstein inequality satisfies
\begin{align*}
\| \sum_{i=1}^n \EE[\Bb_i^2] \|_{\operatorname{op}} \le n \cdot \frac{1}{n^2} \cdot \mathcal{N}_\infty(\lambda) \lesssim \frac{1}{n} \lambda^{-\frac{d}{2\beta}}.
\end{align*}

\paragraph{3. Applying Bernstein Inequality}
With \(\lambda = c_0 \log(n/\delta)\, n^{-\frac{2\beta}{d}}\), we have $\mathcal{N}_\infty(\lambda) \lesssim \frac{n}{c_0 \log(n/\delta)}$. Applying the Matrix Bernstein inequality yields, with high probability,
\begin{align*}
\|(\bSigma +\lambda\Ib)^{-\frac{1}{2}} (\widehat{\bSigma} - \bSigma) (\bSigma +\lambda\Ib)^{-\frac{1}{2}} \|_{\operatorname{op}} \lesssim \sqrt{\frac{\mathcal{N}_\infty(\lambda) \log(n/\delta)}{n}} + \frac{\mathcal{N}_\infty(\lambda) \log(n/\delta)}{n} \\
\lesssim \frac{1}{\sqrt{c_0}} + \frac{1}{c_0} \le \frac{1}{2},
\end{align*}
after choosing $c_0$ sufficiently large. This implies
\begin{align*}
-\frac{1}{2}\Ib \preceq (\bSigma +\lambda\Ib)^{-\frac{1}{2}} (\widehat{\bSigma} - \bSigma) (\bSigma +\lambda\Ib)^{-\frac{1}{2}} \preceq \frac{1}{2}\Ib,
\end{align*}
which rearranges to the stated result with $c \approx 2$ (specifically $1/2 (\bSigma+\lambda \Ib) \preceq \widehat{\bSigma} + \lambda \Ib \preceq 3/2 (\bSigma+\lambda \Ib)$).
\end{proof}

\begin{lemma}[Lemma E.1 from \citealt{wang2023pseudo}]\label{lemma: quadratic martingale}
Suppose that $\boldsymbol{x} \in \mathbb{R}^d$ is a zero-mean random vector with $\|\boldsymbol{x}\|_{\psi_2} \leq 1$. There exists a universal constant $C>0$ such that for any symmetric and positive semi-definite matrix $\boldsymbol{\Sigma} \in \mathbb{R}^{d \times d}$,
\begin{align*}
\mathbb{P}\left(\boldsymbol{x}^{\top} \boldsymbol{\bSigma} \boldsymbol{x} \leq C \operatorname{Tr}(\boldsymbol{\bSigma}) t\right) \geq 1-e^{-r(\boldsymbol{\bSigma}) t}, \quad \forall t \geq 1.
\end{align*}

Here $r(\boldsymbol{\bSigma})=\operatorname{Tr}(\boldsymbol{\bSigma}) /\|\boldsymbol{\bSigma}\|_2$ is the effective rank of $\boldsymbol{\bSigma}$.
\end{lemma}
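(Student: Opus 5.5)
The plan is a Chernoff bound on the quadratic form $\boldsymbol{x}^\top\bSigma\boldsymbol{x}=\|\bSigma^{1/2}\boldsymbol{x}\|_2^2$. Its moment generating function is controlled by a Gaussian decoupling (linearization) trick, which turns the quadratic form into linear forms in $\boldsymbol{x}$. Linear forms are exactly what the assumption $\|\boldsymbol{x}\|_{\psi_2}\le 1$ controls. Because the coordinates of $\boldsymbol{x}$ need not be independent, Hanson--Wright in its usual form does not apply, so I would avoid it and work only with the vector sub-Gaussian property. Since $\boldsymbol{x}$ is zero-mean, that property gives, for an absolute constant $C_0$,
\[
\mathbb{E}\exp(\langle u,\boldsymbol{x}\rangle)\le \exp(C_0\|u\|_2^2)\qquad\text{for all } u\in\mathbb{R}^d .
\]

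\textbf{Step 1 (linearization).} Let $\boldsymbol{g}\sim N(0,\Ib_d)$ be independent of $\boldsymbol{x}$. For $s>0$ the Gaussian moment generating function gives
\[
\exp\big(s\,\boldsymbol{x}^\top\bSigma\boldsymbol{x}\big)=\mathbb{E}_{\boldsymbol{g}}\exp\big(\sqrt{2s}\,\langle \bSigma^{1/2}\boldsymbol{g},\boldsymbol{x}\rangle\big).
\]
Taking $\mathbb{E}_{\boldsymbol{x}}$, swapping the expectations by Fubini, and applying the displayed sub-Gaussian bound with $u=\sqrt{2s}\,\bSigma^{1/2}\boldsymbol{g}$, I expect
\[
\mathbb{E}\exp\big(s\,\boldsymbol{x}^\top\bSigma\boldsymbol{x}\big)\le \mathbb{E}_{\boldsymbol{g}}\exp\big(2C_0 s\,\boldsymbol{g}^\top\bSigma\boldsymbol{g}\big).
\]

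\textbf{Step 2 (Gaussian chaos).} Diagonalize $\bSigma$ with eigenvalues $\sigma_j\ge 0$. The right-hand side of Step 1 equals $\prod_j(1-4C_0 s\sigma_j)^{-1/2}$. For $s\le 1/(8C_0\|\bSigma\|_2)$, the inequality $-\tfrac12\log(1-a)\le a$ on $[0,1/2]$ bounds this by $\exp(C_1 s\operatorname{Tr}(\bSigma))$ with $C_1=4C_0$.

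\textbf{Step 3 (Chernoff and constants).} Take $s=c/\|\bSigma\|_2$ with $c=1/(8C_0)$. Markov's inequality gives
\[
\mathbb{P}\big(\boldsymbol{x}^\top\bSigma\boldsymbol{x}>C\operatorname{Tr}(\bSigma)t\big)\le \exp\big(-c\,r(\bSigma)\,(Ct-C_1)\big).
\]
For $t\ge 1$ we have $Ct-C_1\ge (C-C_1)t$. Choosing $C\ge C_1+1/c$ makes the right-hand side at most $e^{-r(\bSigma)t}$, which is the claim. (The case $\bSigma=0$ is trivial.)

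\textbf{Main obstacle.} The main obstacle is Step 1: justifying the linear-form moment generating function bound from $\|\boldsymbol{x}\|_{\psi_2}\le1$ with a constant uniform in the direction $u$. This uses the zero-mean hypothesis and the standard equivalence between $\psi_2$-norm and moment generating function characterizations of sub-Gaussianity. One must also check that the constant does not depend on $d$, so that the same argument carries over to the Hilbert-space use of the lemma in the Hanson--Wright step of the proof of Theorem~\ref{theorem: general}.
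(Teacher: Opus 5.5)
Your proof is correct and self-contained. Step 1 is justified by Tonelli: the integrand is nonnegative and $\boldsymbol{g}$ is independent of $\boldsymbol{x}$, so the sub-Gaussian MGF bound can be applied conditionally on $\boldsymbol{g}$. Step 2 is an exact computation. In Step 3 any $C\ge 12C_0$ works, and this depends only on the absolute constant $C_0$ from the MGF characterization of zero-mean vectors with $\|\boldsymbol{x}\|_{\psi_2}\le 1$, hence not on $d$.

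The paper gives no proof of its own; it imports the lemma from \citet{wang2023pseudo}. So I can only compare your route with the standard derivation, which uses the tail inequality of Hsu, Kakade and Zhang (2012),
\[
\mathbb{P}\bigl(\boldsymbol{x}^\top\bSigma\boldsymbol{x}>C'(\operatorname{Tr}(\bSigma)+2\sqrt{\operatorname{Tr}(\bSigma^2)u}+2\|\bSigma\|_2u)\bigr)\le e^{-u},
\]
evaluated at $u=r(\bSigma)t$. Since $\operatorname{Tr}(\bSigma^2)\le\|\bSigma\|_2\operatorname{Tr}(\bSigma)$, the threshold is at most $C'\operatorname{Tr}(\bSigma)(1+2\sqrt{t}+2t)\le 5C'\operatorname{Tr}(\bSigma)t$ for $t\ge 1$.

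That tail bound is itself proved by your Step 1, the comparison with a Gaussian quadratic form. It then uses a second-order Gaussian-chaos MGF estimate and an optimized Chernoff parameter, which yields both a $\sqrt{\operatorname{Tr}(\bSigma^2)u}$ regime and a $\|\bSigma\|_2u$ regime. You instead fix $s\asymp 1/\|\bSigma\|_2$ and use the first-order estimate $-\tfrac12\log(1-a)\le a$, keeping only the linear regime. This costs nothing, because the lemma's threshold $C\operatorname{Tr}(\bSigma)t$ and exponent $r(\bSigma)t$ are calibrated exactly to that regime. Your route is shorter and proves exactly what is needed; the cited route gives a more informative two-regime bound that must then be coarsened.

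One correction to your closing remark. In the Hanson--Wright step of Theorem~\ref{theorem: general}, the lemma is applied to the noise vector $\boldsymbol{\varepsilon}_1\in\mathbb{R}^n$. Given covariates and treatments, this vector is zero-mean with independent sub-Gaussian entries, so $\|\boldsymbol{\varepsilon}_1\|_{\psi_2}\lesssim\sigma$. The quadratic form is $\boldsymbol{\varepsilon}_1^\top A^\top A\,\boldsymbol{\varepsilon}_1$ with $A^\top A$ an $n\times n$ positive semi-definite matrix. So no infinite-dimensional extension is needed. What matters is that $C$ does not depend on the dimension $n$, which your argument guarantees.
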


\end{document}